\newcommand{\rmd}{\mathrm{d}}
\newcommand{\rme}{\mathrm{e}}
\DeclareMathOperator{\Ren}{Re}
\DeclareMathOperator{\Imn}{Im}
\DeclareMathOperator{\modR}{mod}
\newcommand{\Jac}{\mathrm{Jac}}
\DeclareMathOperator{\Complex}{\mathbb{C}}
\DeclareMathOperator{\Integer}{\mathbb{Z}}
\DeclareMathOperator{\Natural}{\mathbb{N}}
\DeclareMathOperator{\wgt}{wgt}
\DeclareMathOperator{\ord}{ord}
\DeclareMathOperator{\res}{res}
\newcommand{\mFr}{\mathfrak{m}}
\newtheorem{theo}{Theorem}
\newtheorem{cor}{Corollary}
\theoremstyle{definition}
\newtheorem{rem}{Remark}
\theoremstyle{plain}
\title[Computation of $\wp$-functions ]{Computation of $\wp$-functions \\
on plane  algebraic curves}
\author{J Bernatska}
\address{}
\email{}
\date{\today}
\begin{document}
 
\maketitle 
\begin{abstract}
Numerical tools for computation of $\wp$-functions, also known as Kleinian, or multiply periodic, are proposed.
In this connection, computation of periods of the both first and second kinds is reconsidered. 
An analytical approach to constructing  Riemann surfaces of plane algebraic curves of low gonalities  
is used.  The approach is based on explicit radical solutions to quadratic, cubic, and quartic equations,
which serve for hyperelliptic, trigonal, and tetragonal curves, respectively. 
The proposed analytical models of Riemann surfaces give full control over computation
of the Abel image of any point or divisor. 
Therefore,  computation of $\wp$-functions at Abel images of given divisors can be done directly.
An alternative computation with the help of the Jacobi inversion problem
is used for verification. Hyperelliptic and trigonal curves are considered in detail,
and illustrated by examples.
A method of finding the unique characteristic corresponding to the vector of Riemann constants
is suggested for non-hyperelliptic and hyperelliptic curves.
\end{abstract}

\textbf{Keywords:}  uniformization, sigma function, multiply periodic functions, 
Riemann surface model,  vector of Riemann constants

\textbf{MSC:}  32Q30, 30F10, 33F05, 65D20

\section{Introduction}
The interest to computing $\wp$-functions, also known as Kleinian after \cite{belHKF},
or multiply periodic after \cite{bakerMP}, arises from the realm of completely integrable systems, e.g.
the hierarchies of the Korteweg---de Vries equation (KdV), the sine-Gordon equation (SG), 
the non-linear Schr\"{o}dinger equation (NLS), etc., since finite-gap solutions can be expressed in terms of these functions, see \cite{belHKF,BerKdVRC2024}. Such a representation of solutions is rarely used, and convenient tools for 
computation are not developed.

Direct computation of $\wp_{1,1}$-function as a solution of the KdV equation 
is applied in \cite{BerKdVRC2024}. Since solutions are required to be real-valued, 
$\wp_{1,1}$-function is computed along a particular path in the Jacobian variety of a spectral curve. 

An alternative way of computing solutions of the mKdV, and  KdV equations
can be found in \cite{MatsDNA2022,MatsKdV2023}.
Though a solution in terms of $\wp_{1,1}$-function is used, 
it is expressed in terms of a divisor on a hyperelliptic curve
which solves  the Jacobi inversion problem. Thus, the required solution
is obtained  from computation of the inverse of the Abel map 
by means of Euler's  numerical quadrature. 
Computation of periods on a curve is avoided, 
as well as direct computation of $\wp_{1,1}$-function.

On the other hand, solutions of completely integrable equations in term of theta functions, 
see for example \cite{bbeim1994}, are widely used. 
And numerical tools for computation of the Riemann period matrix and the theta function
are well developed.

A powerful method of computing first kind period matrices (normalized and not normalized) associated with a 
given plane curve
 was presented in \cite{DH2001}, and implemented in the Maple package \texttt{algcurves}, see a detailed description in
 \cite{DP2011}. Also other packages for computing first kind period matrices, 
 the Abel map, and
 the theta function and its derivatives
 are created in Sage, Matlab, Julia, see \cite{AC2021} for a brief review.
 
Computation with the help of spectral approximation is suggested in \cite{FK2004}.
Linear combinations of Chebyshev polynomials are used in approximation of
integrands of first and third kind integrals between branch points, 
and integration is performed with the help of
the orthogonality relation on the polynomials.
Numerical simulation is performed in Matlab.
With Riemann period matrices computed by means of this technique,
solutions of the KdV and KP equations on hyperelliptic curves of genera $2$, $4$, $6$
are computed and illustrated in  \cite{FK2006}.
Solutions of the NLS equation and the Davey–Stewartson equation
on hyperelliptic curves of genera $2$ and $4$  are presented in \cite{KK2012}.
The spectral approximation technique allows to increase accuracy
 in almost solitonic cases, when pairs of branch points collide.

The known numerical tools are designed for studying theta-functional solutions.
On the contrary, computation of $\wp$-functions is lacking for 
appropriate  numerical tools.
First of all, periods of the both first and second kinds, subject to the Legendre relation,
are required. These periods are obtained from
first and second kind differentials which form an
associated\footnote{Fundamental integrals of the second kind
associated with the standard not normalized first kind integrals were introduced in \cite[\S\,138]{bakerAF}.}
system. Second kind periods are not covered by the known packages. 
Therefore, particular attention will be paid to computation of 
first and second kind periods on algebraic curves in question.

The problem of verifying results of computation gave rise to developing analytical models of 
Riemann surfaces of plane algebraic curves of low gonalities.
Such a model is constructed from explicit solutions obtained as radical expressions for roots of 
a quadratic, cubic, or quartic equation,
which serves for a hyperelliptic, trigonal, or tetragonal curve, respectively.
Continuous connection between these solutions is described  analytically in whole.
The obtained analytical model of a Riemann surface gives a full understanding how
to compute the Abel image of any point of the corresponding curve.
So, an effective analytical method of direct computation on plane
algebraic curves of low gonalities is proposed.

This analytical approach was initiated by V. Enolsky, and 
the hyperelliptic case with real branch points was developed by him. 
First and second kind periods obtained by this method served for 
verification of relations on theta functions, and $\wp$-functions.
However, the method had never been published before. In the present paper the method
is extended to hyperelliptic curves with arbitrary complex branch points, and to trigonal curves.
These two types of plane algebraic curves are the most demanded. 
In fact, only integrable systems with hyperelliptic spectral curves have been considered
in the literature as a matter of computation.

In addition to direct computation of $\wp$-functions,
computation based on the Jacobi inversion problem is also presented.
The latter is used for verification that obtained values of  $\wp$-functions are correct, and 
periods used for computing $\wp$-functions accommodate the curve in question. 
In the hyperelliptic case, generalizations of the Bolza formulas, which give expressions for branch points 
in terms of theta functions with characteristics, are used for verification of computed periods.

The paper is organized as follows. In Preliminaries the notion of Sato weight, the definitions
of $\sigma$-function and $\wp$-functions are given,  and also the Bolza formulas,
and solutions of the Jacobi inversion problem on hyperelliptic and trigonal curves.
In section~\ref{s:HypPer}  an analytical model of the Riemann surface of a hyperelliptic curve,
and computation of periods are explained in detail. Section~\ref{s:HypWP} 
shows computation of the Abel images of arbitrary points of a hyperelliptic curve, and
$\wp$-functions on non-special divisors. In section~\ref{s:TrigPer}  
an analytical model of the Riemann surface of a trigonal curve and 
 computation of periods are presented. Section~\ref{s:TrigWP} 
illustrates the trigonal case with  computations of the Abel images of arbitrary points, and 
$\wp$-functions.

The proposed method is illustrated by examples: hyperelliptic curves of genus~$4$ with (1)
complex branch points, and (2) all real branch points, and a trigonal curve of genus~$3$.
Computations are made in Wolfram Mathematica 12. The full code is posted in the Wolfram Community portal, see \\
\texttt{https://community.wolfram.com/groups/-/m/t/3243472}\\
\texttt{https://community.wolfram.com/groups/-/m/t/3252458}

\section{Preliminaries}
\subsection{Sato weight}
The notion of \emph{Sato weight} plays an important role in the theory of $(n,s)$-curves. Such a curve arises  as
a universal unfolding of the Pham singularity $-y^n + x^s = 0$ with co-prime $n$ and $s$, $n<s$.
Thus, an $(n,s)$-curve $\mathcal{C}$ is defined by 
\begin{subequations}\label{nsCurve}
\begin{align}
\mathcal{C} &=\{(x,y)\in \Complex^2 \mid f(x,y) =0\},\\
&\quad  f(x,y) \equiv -y^n + x^s + \sum_{j=0}^{n-2} \sum_{i=0}^{s-2}  \lambda_{ns-in- js} y^j x^i, \label{fEq}\\
&\quad \lambda_{k\leqslant 0}=0, \quad \lambda_{k}\in \Complex. \label{ModCond}
\end{align}
\end{subequations}
where $\lambda_{k}$ serve as parameters of the curve, and $k$ shows the Sato weight of $\lambda_k$.
Only parameters with positive  weights are allowed. 
The Sato weights of $x$ and $y$ are $\wgt x = n$, and $\wgt y = s$.  Then $\wgt f(x,y) = n s$.
Note, that some terms are omitted in \eqref{nsCurve}, since the definition contains the minimal
number of parameters. All extra terms can be eliminated by a proper bi-rational transformation.

Due to $n$ and $s$ are co-prime, infinity is a Weierstrass point, and 
a branch point  where all $n$ sheets of the curve wind.
Let  $\xi$ denote a local parameter near infinity, then 
\begin{equation}\label{param}
x=\xi^{-n},\qquad y = \xi^{-s}(1+O(\lambda))
\end{equation}
gives the simplest parametrization of \eqref{nsCurve}. Evidently, the
Sato weight equals the negative exponent of the leading term in the expansion near infinity.

\subsection{Abel map}
Let  $\rmd u = (\rmd u_{\mathfrak{w}_1} $, $\rmd u_{\mathfrak{w}_2}$, $\dots$, $\rmd u_{\mathfrak{w}_g} )^t$
be not normalized first kind differentials, labeled by elements of the Weierstrass gap sequence 
$\mathfrak{W} = \{\mathfrak{w}_1$,
$\mathfrak{w}_2$, \ldots, $\mathfrak{w}_g\}$, 
which coincide with negative weights of the differentials: 
$\wgt \rmd u_{\mathfrak{w}_i} = - \mathfrak{w}_i$,
and show the orders of zero at infinity.

Let the Abel map $\mathcal{A}$ be constructed with  not normalized differentials $\rmd u$:
\begin{gather}\label{AbelM}
 \mathcal{A}(P) = \int_{\infty}^P \rmd u,\qquad P=(x,y)\in \mathcal{C}.
\end{gather}
Here infinity is used as the base-point, which is the standard choice in the case of $(n,s)$-curves.

First kind integrals along canonical homology cycles $\{\mathfrak{a}_i,\,\mathfrak{b}_i\}_{i=1}^g$ give
first kind period matrices:
\begin{gather}\label{omegaM}
 \omega = (\omega_{ij})= \bigg( \int_{\mathfrak{a}_j} \rmd u_i\bigg),\qquad\quad
 \omega' = (\omega'_{ij}) = \bigg(\int_{\mathfrak{b}_j} \rmd u_i \bigg).
\end{gather}
Columns of $\omega$, $\omega'$ generate the lattice $\{\omega, \omega'\}$ of periods.
Then $\Jac(\mathcal{C})=\Complex^g \backslash \{\omega, \omega'\}$ is the 
Jacobian variety  of the curve $\mathcal{C}$, equipped with not normalized coordinates
$u = (u_{\mathfrak{w}_1},u_{\mathfrak{w}_2}, \dots, u_{\mathfrak{w}_g})^t$.
The coordinates are labeled by elements of the Weierstrass gap sequence,
and $\wgt  u_{\mathfrak{w}_i} = - \mathfrak{w}_i$.

Let $v = \omega^{-1} u$ be normalized coordinates on the Jacobian variety, and
$(1_g,\tau)$ be normalized periods, where $1_g$ denotes the identity matrix of order $g$, 
and $\tau = \omega^{-1}\omega'$. Matrix $\tau$ is symmetric with a positive imaginary part: 
$\tau^t=\tau$, $\Imn \tau >0$,
that is $\tau$ belongs to the Siegel upper half-space. The Sato weight is not associated with 
normalized coordinates.
The normalised first kind differentials are defined  by
\begin{gather*}
 \rmd v = \omega^{-1} \rmd u,
\end{gather*}
and the Abel map $\bar{\mathcal{A}}$ with respect to the normalized differentials is
\begin{gather}\label{AbelMNorm}
 \bar{\mathcal{A}}(P) = \int_{\infty}^P \rmd v,\qquad P=(x,y)\in \mathcal{C}.
\end{gather}

\subsection{Theta and sigma functions}
Recall the two entire functions on $\Complex^g \supset \Jac(\mathcal{C})$,
which generate multiply periodic (or abelian) functions, and so serve 
for uniformization of a curve  $\mathcal{C}$.

The Riemann \emph{theta function} (or $\theta$-function) 
\begin{gather}\label{ThetaDef}
 \theta(v;\tau) = \sum_{n\in \Integer^g} \exp \big(\imath \pi n^t \tau n + 2\imath \pi n^t v\big)
\end{gather}
is defined in terms of normalized coordinates
 $v$, and normalized period matrix $\tau$.
Theta function with characteristic is defined by
\begin{equation}\label{ThetaDefChar}
 \theta[\varepsilon](v;\tau) = \exp\big(\imath \pi (\tfrac{1}{2} \varepsilon'{}^t) \tau (\tfrac{1}{2}\varepsilon')
 + 2\imath \pi  (v+\tfrac{1}{2}\varepsilon)^t (\tfrac{1}{2}\varepsilon')\big)  \theta(v+\tfrac{1}{2} \varepsilon + \tau ( \tfrac{1}{2}\varepsilon');\tau),
\end{equation}
where a characteristic is a $2\times g$ matrix $[\varepsilon]= (\varepsilon', \varepsilon)^t$
with real values within the interval $[0,2)$.  Every point $u$ in the fundamental domain of $\Jac(\mathcal{C})$ 
can be represented by its characteristic $[\varepsilon]$, namely
\begin{equation*}
u =  \tfrac{1}{2} \omega \varepsilon +   \tfrac{1}{2} \omega' \varepsilon'.
\end{equation*}

In the hyperelliptic case, the Abel images of branch points, and any combination of branch points
are described by characteristics with components $1$ or $0$, 
which are called  half-integer characteristics. Such a characteristic
is odd whenever $\varepsilon^t \varepsilon'  = 0$ ($\modR 2$), 
and even whenever $\varepsilon^t \varepsilon' = 1$ ($\modR 2$). $\theta$-Function with 
half-integer characteristic
has the same parity as its characteristic.

The entire function on $\Complex^g \supset \Jac(\mathcal{C})$ covariant under integer shifts 
on the period lattice is called the \emph{sigma function} (or $\sigma$-function), see \cite[p.\,97]{bakerMP}.
As a definition we use its relation with  $\theta$-function, see \cite[Eq.(2.3)]{belHKF}:
\begin{equation}\label{SigmaThetaRel}
\sigma(u) = C \exp\big({-}\tfrac{1}{2} u^t \varkappa u\big) \theta[K](\omega^{-1} u;  \omega^{-1} \omega'),
\end{equation}
where $[K]$ denotes the characteristic of the vector of Riemann constants, and 
a symmetric matrix $\varkappa = \eta \omega^{-1}$ is  obtained from the second kind period matrix $\eta$. 
The Sato weight of  $\sigma$-function is 
 $\wgt \sigma = - (n^2-1)(s^2-1)/24$, see \cite{bel99}.

$\sigma$-Function is defined in terms of not normalized coordinates $u$, and  
not normalized period matrices of the first kind $\omega$, $\omega'$, and the second kind $\eta$, $\eta'$.
The latter are defined as follows
\begin{gather}\label{etaM}
 \eta = (\eta_{ij})= \bigg( \int_{\mathfrak{a}_j} \rmd r_i\bigg),\qquad\quad
 \eta' = (\eta'_{ij}) = \bigg(\int_{\mathfrak{b}_j} \rmd r_i \bigg),
\end{gather}
with second kind differentials $\rmd r = (\rmd r_{\mathfrak{w}_1} $, $\rmd r_{\mathfrak{w}_2}$, $\dots$, $\rmd r_{\mathfrak{w}_g} )^t$. It is important to choose the second kind differentials which form an associated system with
differentials of the first kind, see \cite[\S\,138]{bakerAF}. Note that $\rmd r_{\mathfrak{w}_i}$
has the only pole of order $\mathfrak{w}_i$ at infinity, and so $\wgt r_{\mathfrak{w}_i} = \mathfrak{w}_i$.
In the vicinity of infinity, with a local parameter $\xi$ such that $\xi(\infty)=0$, the following relation holds 
\begin{equation}\label{urRel}
\res_{\xi=0} \Big(\int_0^\xi \rmd u(\tilde{\xi}) \Big) \rmd r(\xi)^t = 1_g,
\end{equation}
which completely determines the principle part of $\rmd r(\xi)$. 

The not normalized period matrices of the first $\omega$, $\omega'$ and second
$\eta$, $\eta'$ kinds satisfy the Legendre relation, see \cite[\S\,140]{bakerAF},
\begin{gather}\label{LegRel}
\Omega^t \mathrm{J}\, \Omega = 2\pi \imath \mathrm{J},\\
\Omega = \begin{pmatrix} \omega & \omega' \\
\eta & \eta' \end{pmatrix},\qquad 
\mathrm{J} = \begin{pmatrix} 0 & - 1_g \\ 1_g & 0 \end{pmatrix}. \notag
\end{gather}

Multiply periodic $\wp$-functions are defined with the help of $\sigma$-function:
\begin{gather*}
\wp_{i,j}(u) = -\frac{\partial^2 \log \sigma(u) }{\partial u_i \partial u_j },\qquad
\wp_{i,j,k}(u) = -\frac{\partial^3 \log \sigma(u) }{\partial u_i \partial u_j \partial u_k},\quad \text{etc.}
\end{gather*}
Since  $\sigma$-function vanishes on special divisors according to the Riemann vanishing theorem,
$\wp$-functions are defined on $\Jac(\mathcal{C}) \backslash \Sigma$, where $\Sigma = \{u \mid \sigma(u)=0\}$.

From \eqref{SigmaThetaRel} we obtain expressions for $\wp$-functions in terms of $\theta$-function:
\begin{gather}\label{WPdef}
\begin{split}
&\wp_{i,j}(u)  = \varkappa_{i,j} - \frac{\partial^2}{\partial u_i \partial u_j } \log \theta[K](\omega^{-1} u; \omega^{-1} \omega'),\\
& \wp_{i,j,k}(u)  =- \frac{\partial^3}{\partial u_i \partial u_j \partial u_k} \log \theta[K](\omega^{-1} u; \omega^{-1} \omega').
\end{split}
\end{gather}

The vector of Riemann constants $K$ with respect to a base-point $P_0$ is defined by the formula,
\cite[Eq.\,(2.4.14)]{Dub1981}
\begin{gather}\label{VRC}
K_j = \frac{1}{2}(1+\tau_{j,j}) - \sum_{l\neq j} \oint_{\mathfrak{a}_l} \rmd v_j (P)
\int_{P_0}^P \rmd v_l,\quad j=1,\,\dots,\, g.
\end{gather}
In the hyperelliptic case, the vector is computed\footnote{Below, we use a homology basis different from 
\cite{fay973}, and so $[K]$ is not exactly the same, but computed in the same way.} in \cite[p.\,14]{fay973}, and  equals 
the sum of all odd characteristics of the fundamental set of $2g+1$ characteristics
which represent  branch points, according to \cite[\S\,200--202]{bakerAF}.

\subsection{Bolza formulas and generalizations}
In genus $2$, expressions for branch points in terms of $\theta$-function are known as the
Bolza formulas
\begin{gather*}
 e_\iota = -\frac{\partial_{u_3} \theta[\{\iota\}] (\omega^{-1} u)}
 {\partial_{u_1} \theta[\{\iota\}] (\omega^{-1} u)} \Big|_{u=0},
\end{gather*}
where $[\{\iota\}]$ denotes the characteristic corresponding to a branch point $e_\iota$,
see \cite[Eq.\,(6)]{Bolza}.
A generalization of the Bolza formulas 
for a hyperelliptic curve of arbitrary genus $g$ is obtained in \cite{BerTF2020}.
In particular,
\begin{gather*}
 e_\iota = - \frac{\partial^{[g/2]}_{u_{2(g\modR 2)+1},\dots,u_{2g-7} ,u_{2g-1}} \theta[\{\iota\}](\omega^{-1} u)}
 {\partial^{[g/2]}_{u_{2(g\modR 2)+1},\dots,u_{2g-7} ,u_{2g-3}} \theta[\{\iota\}](\omega^{-1} u)}\Big|_{u=0}.
\end{gather*}

\subsection{Jacobi inversion problem}
Given a point $u$ of the Jacobian variety  $\Jac(\mathcal{C})$
find a reduced divisor $D \in \mathcal{C}^g$ such that $\mathcal{A}(D) = u$.
Every class of linearly equivalent divisors on a curve of genus $g$
 has a representative called a reduced divisor, let it be a positive divisor of degree  $g$ or less. 
Reduced divisors of degree less than $g$ are special, and $\theta[K]$ vanishes 
on such divisors, according to the Riemann vanishing theorem. 
Reduced divisors of degree $g$ are  non-special.
Every non-special divisor represents its class uniquely. 
The Abel image of the subspace of non-special divisors
coincides with $\Jac(\mathcal{C}) \backslash \Sigma$. 

A solution of the Jacobi inversion problem is known for non-special divisors.
On hyperelliptic curves such a solution was given in  \cite[\S\;216]{bakerAF}
and rediscovered in \cite[Theorem 2.2]{belHKF}.
Let a non-degenerate hyperelliptic curve of genus $g$ be defined\footnote{A $(2,2g+1)$-curve serves as a 
canonical form of hyperelliptic curves of genus $g$.} by
\begin{equation}\label{V22g1Eq}
-y^2 + x^{2 g+1} + \sum_{i=1}^{2g} \lambda_{2i+2} x^{2g-i} = 0.
\end{equation}
Let $u = \mathcal{A}(D)$ be the Abel image of  a degree $g$ positive non-special  divisor  $D$
on the curve. Then $D$ is uniquely defined by the system of equations 
\begin{subequations}\label{EnC22g1}
\begin{align}
&\mathcal{R}_{2g}(x;u) \equiv x^{g} -  \sum_{i=1}^{g} x^{g-i}  \wp_{1,2i-1}(u) = 0,\\ 
&\mathcal{R}_{2g+1}(x,y;u) \equiv 2 y + \sum_{i=1}^{g} x^{g-i}  \wp_{1,1,2i-1}(u) = 0.
\end{align}
\end{subequations}

On a trigonal curve, the Jacobi inversion problem is solved in \cite{bel00}.
A method of obtaining such a solution on a curve of an arbitrary gonality
is presented in \cite{BLJIP22};  trigonal, tetragonal and pentagonal curves are considered as an illustration.
In the case of a $(3,3\mFr +1)$-curve, a degree $g$ positive non-special  divisor  $D$
such that $u = \mathcal{A}(D)$ is given by the system
\begin{subequations}\label{EnC33m1}
\begin{align}
&\mathcal{R}_{6\mFr}(x,y;u) \equiv x^{2\mFr} 
-  y \sum_{i=1}^{\mFr} \wp_{1,3i-2}(u) x^{\mFr-i}
-  \sum_{i=1}^{2\mFr} \wp_{1,3i-1}(u) x^{2\mFr-i} = 0,\\ 
&\mathcal{R}_{6\mFr+1}(x,y;u) \equiv 2 y x^{\mFr} 
+ y \sum_{i=1}^{\mFr} \big(\wp_{1,1,3i-2}(u) - \wp_{2,3i-2}(u)  \big) x^{\mFr-i} \\
&\qquad\qquad\qquad\qquad\qquad + \sum_{i=1}^{2\mFr} \big(\wp_{1,1,3i-1}(u) - \wp_{2,3i-1}(u) \big) x^{2\mFr-i}= 0. \notag
\end{align}
\end{subequations}
In the case of a $(3,3\mFr +2)$-curve, by the system
\begin{subequations}\label{EnC33m2}
\begin{align}
&\mathcal{R}_{6\mFr+2}(x,y;u) \equiv y x^{\mFr} 
-  y \sum_{i=1}^{\mFr} \wp_{1,3i-1}(u) x^{\mFr-i} 
-  \sum_{i=1}^{2\mFr+1} \wp_{1,3i-2}(u) x^{2\mFr+1-i} = 0,\\ 
&\mathcal{R}_{6\mFr+3}(x,y;u) \equiv 2 x^{2\mFr+1} 
+ y \sum_{i=1}^{\mFr} \big( \wp_{1,1,3i-1}(u) - \wp_{2,3i-1}(u) \big) x^{\mFr-i} \\
&\qquad\qquad\qquad\qquad\qquad - \sum_{i=1}^{2\mFr+1} \big(\wp_{1,1,3i-2}(u) - \wp_{2,3i-2}(u) \big) x^{2\mFr+1-i}= 0. \notag
\end{align}
\end{subequations}

\section{Periods on a  hyperelliptic curve}\label{s:HypPer}

Hyperelliptic curves are the best  known plane algebraic curves. There exists a universal approach to
choosing a homology and cohomology bases on such curves, 
as well as constructing the corresponding Riemann surfaces.

\subsection{Hyperelliptic curves}
Let a generic  hyperelliptic curve  be defined by 
\begin{equation}\label{HypC}
0 = \tilde{f}(x,y) \equiv - y^2 + y \mathcal{Q}(x) + \mathcal{P}(x),
\end{equation}
where $\deg \mathcal{P} = 2g+1$  or  $2g+2$, in the case of genus $g$.

An equation with $\deg \mathcal{P}=2g+1$, $\mathcal{Q}(x) \equiv 0$ defines an $(n,s)$-curve, 
which is considered as the canonical form of a hyperelliptic curve of genus $g$:
\begin{equation}\label{HypCCanon}
 0 = f(x,y) \equiv - y^2 + x^{2g+1} + \lambda_4 x^{2g-1} + \cdots + \lambda_{4g} x + \lambda_{4g+2}.
\end{equation}
The Sato weights are $\wgt x = 2$, $\wgt y = 2g+1$, and so $\wgt f = 4g+2$. 

The term $y \mathcal{Q}(x)$ in \eqref{HypC} 
 is eliminated by the map $y \mapsto \tilde{y} +\tfrac{1}{2} \mathcal{Q}(x)$, which leads to
\begin{equation*}
0 = \tilde{f}(x,\tilde{y})\equiv  - \tilde{y}^2 +  \widetilde{\mathcal{P}}(x),\qquad 
\widetilde{\mathcal{P}}(x)  = \mathcal{P}(x)  + \tfrac{1}{4} \mathcal{Q}(x)^2.
\end{equation*}
Thus, the discriminant of \eqref{HypC} is defined by the formula
\begin{equation}
\Delta(x) = \widetilde{\mathcal{P}}(x) = \mathcal{P}(x) + \tfrac{1}{4} \mathcal{Q}(x)^2.
\end{equation}

A canonical curve can be defined by its branch points $\{(e_i,\,0)\}_{i=1}^{2g+1}$, namely
\begin{gather}\label{HypCe}
0 = f(x,y)= - y^2 + \mathcal{P}(x),\qquad \quad \mathcal{P}(x) = \prod_{j =1}^{2g+1}  (x-e_j).
\end{gather}
For the sake of brevity, the notation $e_i$ is employed both for a branch point $(e_i,\,0)$ 
and its $x$-coordinate, in the hyperelliptic case. 
If all branch points are distinct, then the curve is non-degenerate, and its genus equals $g$. 
The curve \eqref{HypCe} has also a branch point located at infinity,  referred  as $e_0$.
Finite branch points $\{e_i\}$ of the  curve \eqref{HypCCanon} satisfy the condition
$\sum_{i=1}^{2g+1} e_i = 0$. Without  this condition, $\mathcal{P}$ in \eqref{HypCe}
contains also the term $\lambda_2 x^{2g}$.
In what follows, we omit such a condition, and allow $\{e_i\}$ be arbitrary.

A curve with $\deg \mathcal{P}=2g+2$, and $\mathcal{Q}(x) \equiv 0$ has $2g+2$ finite branch points
 $\{(e_i,\,0)\}_{i=0}^{2g+1}$. The corresponding canonical form is obtained by
moving the finite branch point $e_0$ to infinity by means of a proper M\"{o}bius transformation.

In the generic case \eqref{HypC}, the maximal $\deg \mathcal{Q}$ equals $g$, which respects the Sato weight,
and guarantees that the genus of the curve does not exceed $g$.
Such a curve can be defined by choosing arbitrary values $\{e_i\}$ of number $2g+1$ or $2g+2$,
and choosing a polynomial $\mathcal{Q}$ of a degree up to $g$. Then the corresponding $y$-coordinates of 
branch points $\{B_i=(e_i,h_i)\}$  are obtained by the formula $h_i = \tfrac{1}{2} \mathcal{Q}(e_i)$.

The Weierstrass gap sequence of a hyperelliptic curve \eqref{HypC} is
\begin{gather*}
\mathfrak{W} = \{\mathfrak{w}_i = 2i-1 \mid i=1,\dots, g\}.
\end{gather*}

In what follows, we focus on the  form \eqref{HypCe} of a hyperelliptic curve.

\subsection{Riemann surface}\label{ss:RiemSurf}
At every point $x$, except branch points,
there exist two values of $y$ ($\textsf{s} = \pm 1$):
\begin{subequations}\label{yDefH}
\begin{align}
&y_{\textsf{s}}(x) = \textsf{s} \sqrt{\Delta(x)}, \quad \Delta(x) =  \mathcal{P}(x),& 
&\text{in the  canonical case,} 
\qquad \text{or}  \label{yDefHcanon} \\
&y_{\textsf{s}}(x) = \tfrac{1}{2} \mathcal{Q}(x) + \textsf{s} \sqrt{\Delta(x)},&
&\text{in the  generic case.}
\end{align} 
\end{subequations}

Let the square root function be defined as follows 
\begin{gather}\label{SqrtDef}
\sqrt{\Delta(x)} = \left\{ \begin{array}{ll}
\sqrt{|\Delta(x)|} \,\rme^{(\imath/2) \arg  \Delta(x)} 
& \text{ if }\quad \arg  \Delta(x) \geqslant 0, \\
\sqrt{|\Delta(x)|} \, \rme^{(\imath/2) \arg \Delta(x)+ \imath \pi} 
& \text{ if }\quad \arg  \Delta(x) < 0,
\end{array}  \right.
\end{gather}
where $\arg$ has the range $(-\pi,\, \pi]$.
With  such a definition the range of $\arg \sqrt{\Delta(x)}$ is $[0,\pi)$.
Moreover, 
\begin{theo}\label{T:DeltaDscont}
Let $\sqrt{\Delta}$ be defined by \eqref{SqrtDef}. Then
$y_{\textsf{s}}$ defined by \eqref{yDefH} 
 have  discontinuity over the contour  $\Gamma = \{x \mid \arg \Delta(x) = 0\}$,
 and $y_{+}$ serves as the analytic continuation of $y_{-}$ on the other side of the contour, 
 and vice versa. 
\end{theo}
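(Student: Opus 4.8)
The plan is to analyze the behavior of the function $\arg\Delta(x)$ as $x$ crosses the contour $\Gamma=\{x\mid \arg\Delta(x)=0\}$ and track the induced jump in $\sqrt{\Delta(x)}$ as defined by \eqref{SqrtDef}. First I would observe that away from zeros of $\Delta$ the function $\Delta(x)$ is holomorphic and non-vanishing, so $\arg\Delta(x)$ is a smooth real-valued function on $\Complex\setminus(\Gamma\cup\{e_j\})$ with values in $(-\pi,\pi]$, and crossing $\Gamma$ transversally means $\arg\Delta$ passes through $0$ from a small positive value to a small negative value (or vice versa). On the side where $\arg\Delta(x)=\delta>0$ small, definition \eqref{SqrtDef} gives $\sqrt{\Delta(x)}=\sqrt{|\Delta(x)|}\,\rme^{\imath\delta/2}$, while on the other side where $\arg\Delta(x)=-\delta<0$, it gives $\sqrt{\Delta(x)}=\sqrt{|\Delta(x)|}\,\rme^{-\imath\delta/2+\imath\pi}$. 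In the limit $\delta\to0^{+}$ along $\Gamma$ the two one-sided limits are $+\sqrt{|\Delta(x)|}$ and $-\sqrt{|\Delta(x)|}$ respectively, which differ by the sign $\textsf{s}$. Hence $y_{+}$ and $y_{-}$ defined by \eqref{yDefH} swap across $\Gamma$, which is exactly the discontinuity asserted.

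Next I would verify the analytic-continuation claim. The point is that the genuine analytic square root of the holomorphic function $\Delta$ exists locally on any simply connected neighborhood $U$ of a point $x_{0}\in\Gamma$ that avoids the branch points $e_{j}$; call its two branches $\pm\sqrt{\Delta}_{\mathrm{an}}$. On $U\setminus\Gamma$ each of $y_{+},y_{-}$ coincides with one of these analytic branches (the correspondence being locally constant on each of the two components of $U\setminus\Gamma$). The computation in the previous paragraph shows that on the two components $y_{+}$ picks out \emph{different} analytic branches, and likewise $y_{-}$. Therefore the analytic branch that equals $y_{-}$ on one side of $\Gamma$ equals $y_{+}$ on the other side; that is, $y_{+}$ is the analytic continuation of $y_{-}$ across $\Gamma$, and symmetrically. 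I would also note that $\Gamma$ is precisely the locus where $\Delta(x)$ is real and positive, so in particular $\Gamma$ contains no branch point in its relative interior, and near a branch point $e_{j}$ the contour $\Gamma$ terminates (or passes through) with the standard local square-root ramification; this is consistent with $\Gamma$ playing the role of the system of branch cuts.

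The main obstacle, and the step requiring care rather than cleverness, is the treatment of the jump of $\arg$ itself: the principal-argument function has its own discontinuity along the negative real axis of the $\Delta$-plane, i.e.\ along the locus $\{x\mid \Delta(x)\in(-\infty,0)\}$, where $\arg\Delta$ jumps between values near $\pi$ and near $-\pi$. I would check that \eqref{SqrtDef} is \emph{designed} to absorb exactly this jump: when $\arg\Delta$ is slightly less than $\pi$ the formula gives $\sqrt{|\Delta|}\,\rme^{\imath\arg\Delta/2}\to \imath\sqrt{|\Delta|}$, and when $\arg\Delta$ is slightly greater than $-\pi$ the second branch of \eqref{SqrtDef} gives $\sqrt{|\Delta|}\,\rme^{\imath\arg\Delta/2+\imath\pi}\to \imath\sqrt{|\Delta|}$ as well, so $\sqrt{\Delta}$ is in fact continuous across that locus and the only genuine discontinuity of $y_{\textsf{s}}$ is the one across $\Gamma$ described above. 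Assembling these observations — continuity off $\Gamma$, the explicit sign flip on $\Gamma$, and the local branch identification near $\Gamma$ — yields the theorem.
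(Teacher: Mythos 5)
Your proof is correct and follows essentially the same route as the paper's: track the one-sided limits of $\arg\sqrt{\Delta}$ as $x$ crosses $\Gamma$ to exhibit the sign flip, then identify the two local analytic branches on either side to conclude that $y_{+}$ and $y_{-}$ continue into each other. Your final paragraph, verifying that \eqref{SqrtDef} is continuous across the locus $\{x \mid \Delta(x)\in(-\infty,0)\}$ where the principal argument itself jumps, supplies a check the paper dismisses with ``Evidently'' and is a worthwhile addition.
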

\begin{proof}
Let $\tilde{x}$ be located in the vicinity of the contour $\Gamma$, more precisely 
$|\arg \Delta(\tilde{x})| < 2 \phi$, with a small positive value $\phi$. 
Then $0 \leqslant \arg \sqrt{\Delta(\tilde{x})} < \phi$  if $\arg \Delta(\tilde{x}) \geqslant 0$,
and $\pi - \phi < \arg \sqrt{\Delta(\tilde{x})} < \pi$ if $\Delta(\tilde{x}) < 0$.
Evidently, the discontinuity of $\sqrt{\Delta}$ is located over the contour $\Gamma$.

Next, we find the analytic continuation of $\sqrt{\Delta}$.
Let $U(x_0; \delta)$ be a disc of radius~$\delta$ with the center at $x_0 \in \Gamma$.
The contour $\Gamma$ divides the disc into two parts: $U_+$ where $\arg \Delta(x) \geqslant 0$,  and  
$U_-$ where $\arg \Delta(x) < 0$. Let $2 \phi_+ \,{=}\, \max_{\tilde{x} \in U_+} \arg \Delta(\tilde{x})$, 
then the range of $\arg \sqrt{\Delta}$ on $U_+$ is $[0,\phi_+)$.
Let ${-}2 \phi_- \,{=}\, \min_{\tilde{x} \in U_-} \arg \Delta(\tilde{x})$,
then the range of $\arg \sqrt{\Delta}$ on $U_-$ is $(\pi - \phi_-,\pi)$.
Thus, the analytic continuation of $\sqrt{\Delta}$ from $U_+$ to $U_-$ 
is  $-\sqrt{\Delta}$, since the range of 
$\arg (-\sqrt{\Delta(x)}) = \arg (\rme^{-\imath \pi} \sqrt{\Delta(x)})$  on $U_-$  is $(- \phi_-,0)$. 
And the analytic continuation of $\sqrt{\Delta}$
from $U_-$ to  $U_+$ is given by $-\sqrt{\Delta} = \rme^{\imath \pi} \sqrt{\Delta}$,
with the range $[\pi,  \pi + \phi_+)$ of $\arg (-\sqrt{\Delta}) $ on $U_+$.
\end{proof}

With the help of definition \eqref{SqrtDef}, we fix the position of 
discontinuity of $\sqrt{\Delta}$ at the contour $\Gamma$. 
This allows to determine connection of solutions $y_{\textsf{s}}$
on the Riemann surface, and mark sheets.

In order to construct the Riemann surface of a curve,
we choose a continuous path $\gamma$ on the Riemann sphere through all $\{e_i\}$,
starting and ending at $e_0 = \infty$.
Along $\gamma$, at any intersection with $\Gamma$ 
the sign $\textsf{s}$ in \eqref{yDefH} changes into the opposite.
Thus, sequences of signs along  $\gamma$ lifted to each sheet
mark the sheets. And so  the \emph{monodromy}
of the Riemann surface is defined, cf.\;\cite[Sect.\,2.7]{DP2011}.
We call $\gamma$ the \emph{monodromy path}.

\subsection{Monodromy path}
The monodromy path $\gamma$ is
the \emph{key element} of the proposed scheme of  constructing an analytical model of the Riemann surface
of $\mathcal{C}$. 

The  path starts at infinity as $x\to -\infty$, goes through all branch points in a chosen order, 
and ends at infinity as $x\to \infty$.
Below, an algorithm of drawing such a path and marking sheets is presented. 
\begin{enumerate}
\renewcommand{\labelenumi}{\arabic{enumi}.}
\item Let all finite branch points $\{e_{i}\}_{i=1}^{2g+1}$ be sorted 
ascendingly first by the real part, then by the imaginary part.
\item According to this order a continuous path $\gamma$ through all $e_i$
is constructed from straight line segments $[e_i,e_{i+1}]$, $i=1$, \ldots, $2g$. 
Then the segment $(-\infty, e_1]$ is added at the beginning
of the polygonal path, and $[e_{2g+1},\infty)$ at the end.
The path goes below the points $\{e_i\}$, and so 
 in the counter-clockwise direction near each $e_i$.
Such a path is marked in orange  on fig.\,\ref{f:BPCuts}.

\item Plot the contour $\Gamma = \{x \mid \arg \mathcal{P}(x) = 0\}$, which
consists of segments $\Gamma_i$ between $e_i$ and infinity,
see blue contours on fig.\,\ref{f:BPCuts}.
Find the sequence of signs $\{\textsf{s}_{0,1}\}\cup \{\textsf{s}_{i,i+1}\}_{i=1}^{2g} \cup \{\textsf{s}_{2g+1,0}\}$
corresponding to the sequence of segments of the path $\gamma$,
starting with $\textsf{s}_{0,1}=+1$. Index $0$ is used for infinity. At any intersection with $\Gamma$ 
the sign changes into the opposite. The sequence of signs determines connection between solutions $y_{\textsf{s}}$
on Sheet\;\textsf{a}. Sheet\;\textsf{b} is marked by the sequence with the opposite sign on each segment. 
On a hyperelliptic curve, Sheet\;\textsf{a} is sufficient for all computations.
\end{enumerate}

\begin{rem}
It could happen, that an intersection of a polygonal path~$\gamma$  with  $\Gamma$
is caused by curling of $\Gamma$  around a branch point. 
Such an intersection could be avoided by continuous deformation of  $\gamma$.
\end{rem}

\subsection{Homology}\label{ss:Hom}
Cuts are made between points $e_{2k-1}$ and $e_{2k}$ with $k$ from $1$ to $g$, and from $e_{2g+1}$ to infinity. 
With $k$ running from $1$ to $g$ 
an $\mathfrak{a}_k$-cycle encircles the cut $(e_{2k-1},\, e_{2k})$ counter-clockwise, and a $\mathfrak{b}_k$-cycle 
enters  the cut $(e_{2k-1},\, e_{2k})$ and emerges from the cut $(e_{2g+1},\,\infty)$, 
see fig.\,\ref{f:PathCycles} as an example.
This canonical homology basis is adopted from Baker \cite[p.\,297]{bakerAF},
and can be considered as standard on a hyperelliptic curve.

\begin{rem}
In the case of $\deg \mathcal{P}=2g+2$,  with  branch points
$\{(e_i,0)\}_{i=0}^{2g+1}$, we  sort the latter in the same way as in the canonical case.
So $e_0$ has the smallest real and imaginary parts among all branch points.
Cuts are made between points $e_{2k-1}$ and $e_{2k}$ with $k$ from $1$ to $g$, 
and from $e_{2g+1}$ to $e_0$ through infinity. A canonical homology basis is introduced in
a similar way: an $\mathfrak{a}_k$-cycle encircles the cut $(e_{2k-1},\, e_{2k})$ counter-clockwise, and a 
$\mathfrak{b}_k$-cycle enters   the cut $(e_{2k-1},\, e_{2k})$ 
and emerges from the cut $(e_{2g+1}, \infty) \cup (\infty, e_{0})$.
\end{rem}

\subsection{Cohomology}\label{ss:coHom}
First kind differentials are defined in the standard way, see \cite[\textit{Ex.\,i}, p.\,195]{bakerAF} for example,
\begin{align}\label{HDifCg}
\rmd u_{2i-1} = \frac{x^{g-i} \rmd x}{\partial_y f(x,y)},\qquad i=1,\,\dots,\,g.  
\end{align} 
On the canonical hyperelliptic curve, the second kind differentials  associated  with the first kind ones
have the form ($\lambda_0=1$)
\begin{align}\label{coHDifCg}
&\rmd r_{2i-1} = \frac{\rmd x}{\partial_y f(x,y)} \sum_{k=1}^{2i-1} k \lambda_{4i-2k-2} x^{g-i+k},
\qquad i=1,\,\dots,\,g. 
\end{align} 

\subsection{Computation of periods}
First kind integrals on each segment along the polygonal monodromy path $\gamma$ 
lifted to Sheet\;\textsf{a} are  computed by
\begin{subequations}\label{Aint}
\begin{align}
&\mathcal{A}_{i,i+1}^{[\textsf{s}_{i,i+1}]} = \int_{e_i}^{e_{i+1}} \rmd u^{[\textsf{s}_{i,i+1}]} ,\qquad i=1,\, \dots,\, 2g,\\
&\mathcal{A}_{0,1}^{[\textsf{s}_{0,1}]}  = \int_{-\infty}^{e_{1}} \rmd u^{[\textsf{s}_{0,1}]},\qquad \qquad
\mathcal{A}_{2g+1,0}^{[\textsf{s}_{2g+1,0}]} = \int_{e_{2g+1}}^{\infty} \rmd u^{[\textsf{s}_{2g+1,0}]}.
\end{align}
\end{subequations}
The integrand of $\mathcal{A}_{i,j}^{[\textsf{s}_{i,j}]}$ is taken with the  sign $\textsf{s}_{i,j}$, that is
$$\rmd u^{[\textsf{s}_{i,j}]} = \begin{pmatrix} x^{g-1} \\ \vdots \\ x \\ 1 \end{pmatrix} 
\frac{\rmd x}{-2\, \textsf{s}_{i,j} \sqrt{\mathcal{P}(x)}}.$$
Due to the involution of a hyperelliptic curve,
the following relations hold
\begin{gather}\label{HyperInvRels}
\sum_{k=1}^{g} \mathcal{A}_{2k-1, 2k}^{[\textsf{s}_{2k-1,2k}]}   + \mathcal{A}_{2g+1,0}^{[\textsf{s}_{2g+1,0}]}  = 0,\qquad
\mathcal{A}_{0,1}^{[\textsf{s}_{0,1}]}  + \sum_{k=1}^{g} \mathcal{A}_{2k, 2k+1}^{[\textsf{s}_{2k,2k+1}]}    = 0,
\end{gather}
which serve for verification.

 According to the choice of canonical cycles, columns of the first kind period matrices are  
\begin{gather}\label{FKPerN}
\omega_k = 2\mathcal{A}_{2k-1, 2k}^{[\textsf{s}_{2k-1,2k}]} ,\qquad\qquad  
\omega'_k = - 2\sum_{j=k}^g \mathcal{A}_{2j, 2j+1}^{[\textsf{s}_{2j,2j+1}]} .
\end{gather}
The normalized period matrix, which is the Riemann period matrix, is obtain by 
$$\tau = \omega^{-1} \omega',$$
and required  to be symmetric with a positive imaginary part.

Second kind integrals $\mathcal{B}_{i,j}^{[\textsf{s}_{i,j}]}$ are computed similarly:
$$
\mathcal{B}_{i,j}^{[\textsf{s}_{i,j}]} = \int_{e_i}^{e_{j}} \rmd r^{[\textsf{s}_{i,j}]},
$$
where $\rmd r^{[\textsf{s}_{i,j}]}$, defined by \eqref{coHDifCg}, is taken with the sign $\textsf{s}_{i,j}$.
Then the second kind period matrices are  
\begin{gather}
\eta_k = 2\mathcal{B}_{2k-1, 2k}^{[\textsf{s}_{2k-1,2k}]} ,\qquad\qquad  
\eta'_k = - 2\sum_{j=k}^g \mathcal{B}_{2j, 2j+1}^{[\textsf{s}_{2j,2j+1}]} .
\end{gather}
The symmetric matrix responsible for modular invariance\footnote{By introducing the exponential factor
with $\varkappa$, the  theta function is transformed into the modular invariant sigma function, cf.\,\eqref{SigmaThetaRel}.} is
$$\varkappa = \eta \omega^{-1}.$$

The four matrices $\omega$, $\omega'$, $\eta$, $\eta'$ satisfy the Legendre relation \eqref{LegRel},
which serve for verification.

\subsection{Example 1: Arbitrary complex branch points}\label{E:C29}
Let a hyperelliptic curve
of genus $4$ possess the given finite branch points:
\begin{gather*}
-18 - 2\imath,\  -16 + 5 \imath,\  -11 + 3 \imath,\ -10 - \imath,\ -4 + 2 \imath,\ -3 + 3 \imath,\ 
3 + 3 \imath,\ 7 -  2 \imath,\ 13 - \imath.
\end{gather*}
On the other hand, this curve is defined by the equation
\begin{subequations}\label{HypCExmpl}
\begin{align}
&0 = f(x,y) \equiv -y^2 + \mathcal{P}(x),\\
&\mathcal{P}(x) = x^9 + (39 - 10 \imath) x^8 + (217 - 288 \imath) x^7
- (7585 - 826 \imath) x^6 \\
&- (79138 - 82462 \imath) x^5 
+ (324058 + 455846 \imath) x^4  + (4126332 - 3930980 \imath) x^3 \notag  \\
&- (14219032 + 29444932 \imath) x^2 
- (131012592 - 28208616 \imath) x \notag \\
&-101860560 +  245519280 \imath.   \notag
\end{align}
\end{subequations}

Cuts and homology cycles, see  fig.\,\ref{f:PathCycles}, are introduced as explained in subsection~\ref{ss:Hom}.
\begin{figure}[h]
\includegraphics[width=0.6\textwidth]{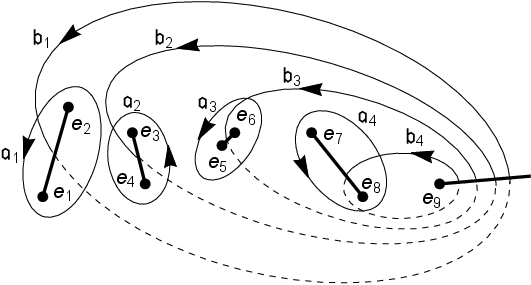}
\caption{Canonical homology cycles.$\phantom{mmmmmmmmmmmm}$}\label{f:PathCycles}
\end{figure}
On fig.\,\ref{f:BPCuts}, the contour  $\Gamma$ is shown in blue. 
Note, that the cut $(e_9,\infty)$ coincides with segment $\Gamma_9$.
The monodromy path $\gamma$ is marked in orange. It goes below points $e_i$.
A curved orange line near a cut shows on which side of the cut $\gamma$ goes.
\begin{figure}[h]
\includegraphics[width=0.43\textwidth]{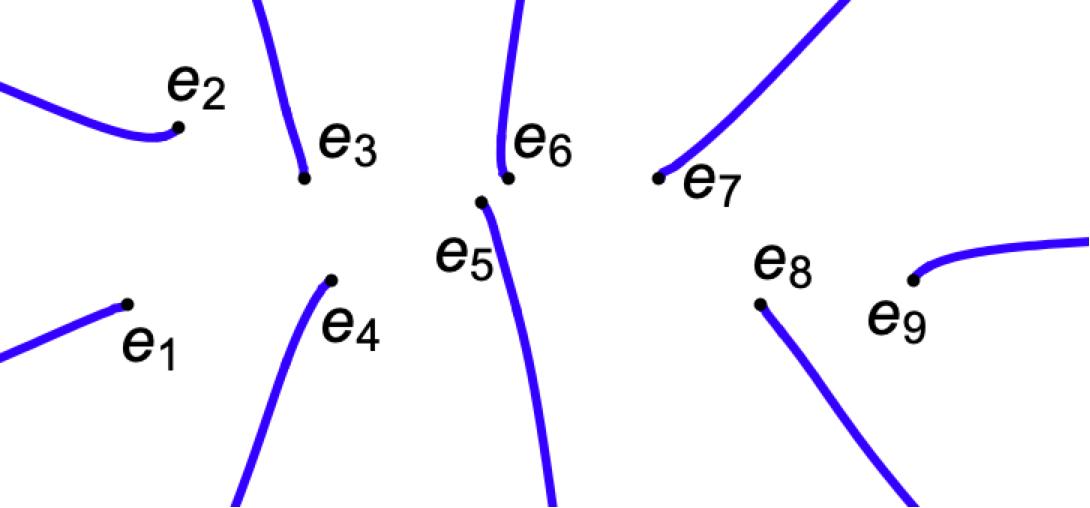}$\quad$
\includegraphics[width=0.43\textwidth]{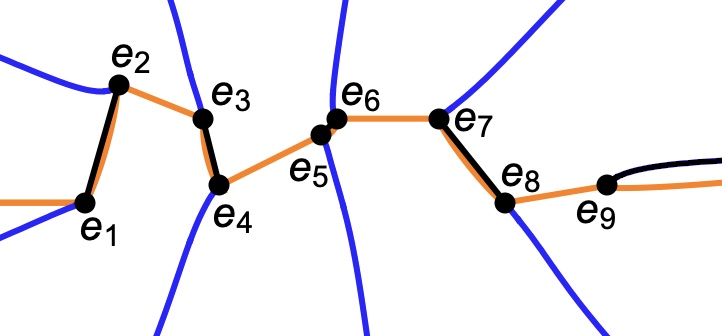} 
\caption{The contour $\Gamma$ (blue), and a continuous path (orange).}\label{f:BPCuts}
\end{figure}

As seen on fig.\,\ref{f:BPArg}, along the contour $\Gamma$ 
solution $y_+$ continuously connects to $y_-$,
and vice versa.
\begin{figure}[h]
\includegraphics[width=0.42\textwidth]{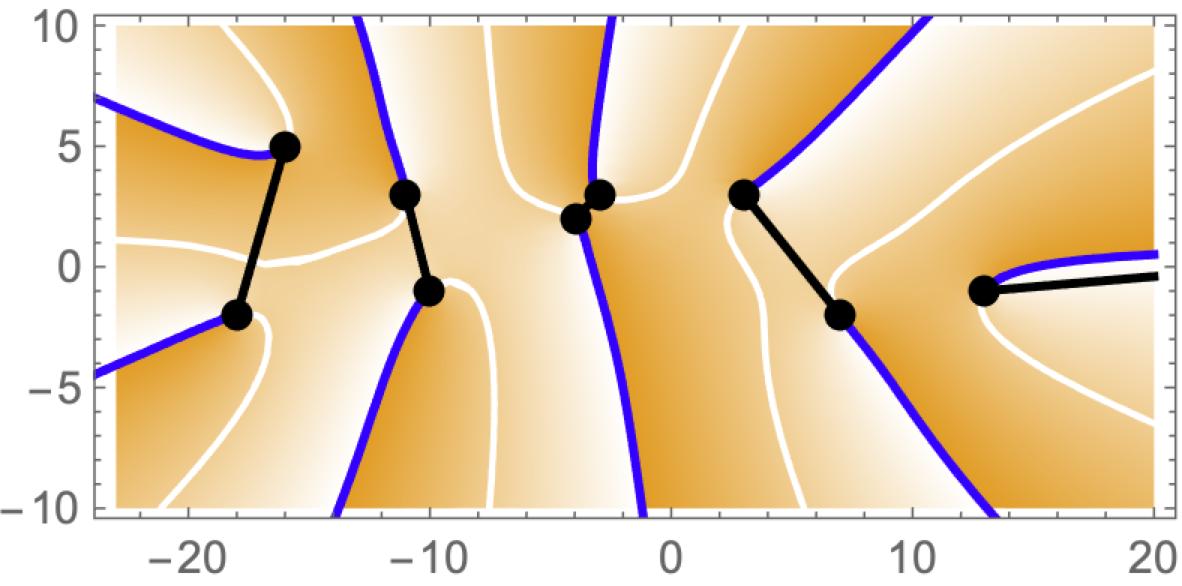}
\includegraphics[width=0.06\textwidth]{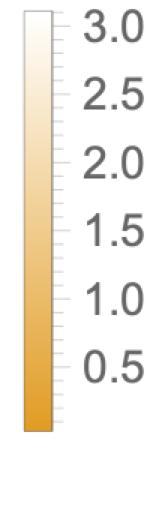}\ 
\includegraphics[width=0.42\textwidth]{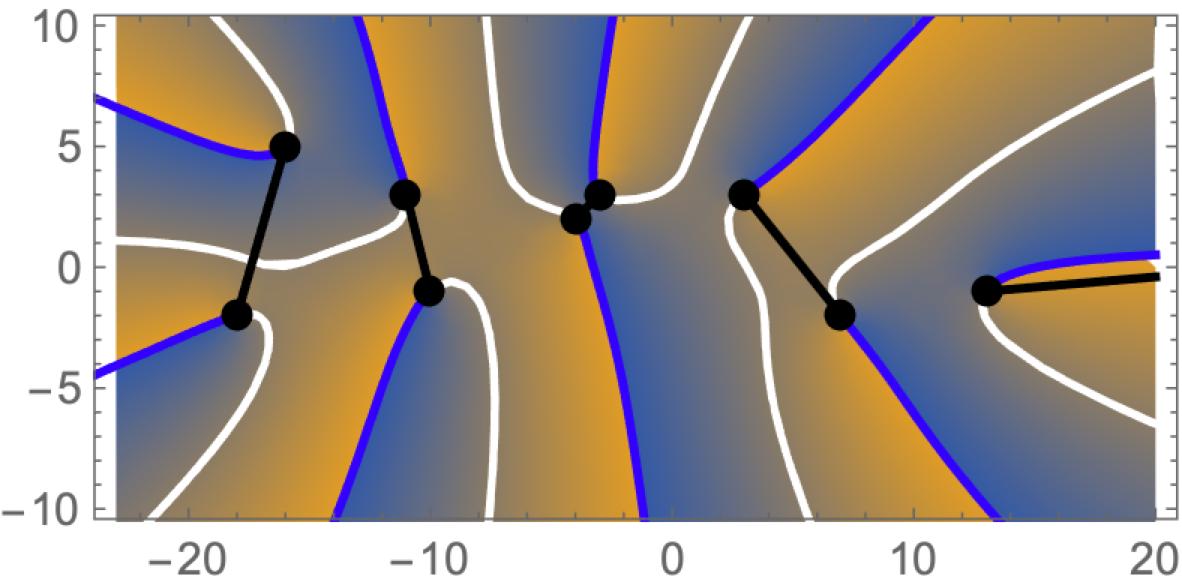}
\includegraphics[width=0.07\textwidth]{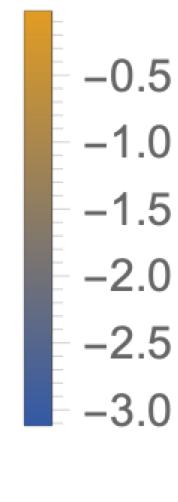}\\
 (a) $\arg y_+$ $\phantom{mmmmmmmmmmmmmm}$ (b) $\arg y_-$ $\phantom{m}$
\caption{Density plots of $\arg y_{\textsf{s}}$, and the contour $\Gamma$ (blue).}\label{f:BPArg}
\end{figure}

Now we mark Sheet\;\textsf{a}, see fig.\,\ref{f:Sheet}.
\begin{figure}[h]
\parbox[b]{0.45\textwidth}{\includegraphics[width=0.45\textwidth]{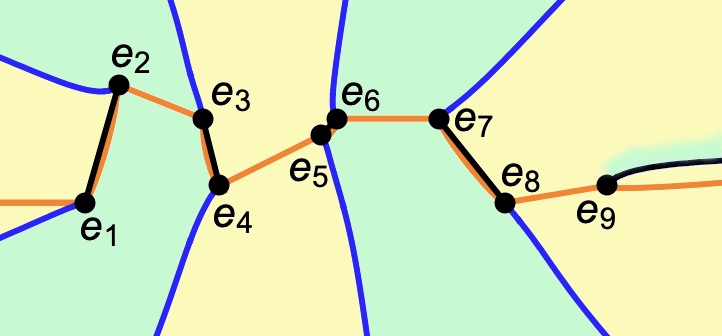}}\ \ 
\parbox[b]{0.07\textwidth}{\includegraphics[width=0.07\textwidth]{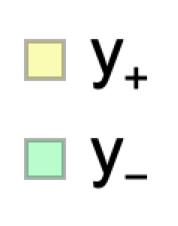}\\
$\quad$ \\ $\quad$ \\ $\quad$}
\caption{Connection of solutions $y_+$ and $y_-$ on Sheet\;\textsf{a}.}\label{f:Sheet}
\end{figure}
Let start with $\textsf{s}_{0,1}\,{=}\,{+}1$ on segment $(-\infty,\, e_1]$. 
The path $\gamma$ (orange) intersects the blue contour
at point $e_1$, and so the sign changes:   $\textsf{s}_{1,2} = -1$. 
The path goes along the right side of the cut $(e_1,e_2)$, 
and the sign remains the same on the next segment: $\textsf{s}_{2,3} =-1$.
From $e_3$ to $e_4$ the path goes along the left side of the cut $(e_3,e_4)$, 
so the sign remains the same: $\textsf{s}_{3,4} = -1$.
The next intersection with the blue contour occurs at point $e_4$, thus $\textsf{s}_{4,5} = +1$.
And again at point $e_5$, so $\textsf{s}_{5,6} = \textsf{s}_{6,7} = \textsf{s}_{7,8} = -1$.
Once again, the path $\gamma$ intersects the blue contour at point~$e_8$, thus $\textsf{s}_{8,9} = \textsf{s}_{9,0} = +1$.
Finally, the sequence of signs is
 $$
 \begin{array}{lcccccccccc}&\{\textsf{s}_{0,1},& \textsf{s}_{1,2},& \textsf{s}_{2,3},& \textsf{s}_{3,4},& \textsf{s}_{4,5},&
 \textsf{s}_{5,6},& \textsf{s}_{6,7},& \textsf{s}_{7,8},& \textsf{s}_{8,9},& \textsf{s}_{9,0}\} =\\
 \text{Sheet\;\textsf{a}:}& \{+1, & -1, & -1, & -1, & +1, & -1,& -1,& -1, &+1, &+1\}.
  \end{array} $$
Along segment $\Gamma_9$ solution $y_+$ on Sheet\;\textsf{a}
connects to $y_-$ on Sheet\;\textsf{b},
the latter is shown in green above of $\Gamma_9$, see fig.\,\ref{f:Sheet}.

First kind differentials as functions of $x$ are 
\begin{gather*}
\rmd u^{[\textsf{s}]} 
= \begin{pmatrix}  x^3 \\ x^2 \\ x \\ 1 \end{pmatrix}
\frac{\rmd x}{- 2 \, \textsf{s}\, \sqrt{\mathcal{P}(x)}}.
\end{gather*} 

According to the picture of homology cycles, see fig.\,\ref{f:PathCycles},
the first kind period matrices $\omega$ and $\omega'$ are computed as follows, cf. \eqref{FKPerN},
\begin{align*}
&\omega = 2 \big(\mathcal{A}_{1,2}^{[-]},\, \mathcal{A}_{3,4}^{[-]},\, \mathcal{A}_{5,6}^{[-]},\, \mathcal{A}_{7,8}^{[-]}\big),\\
&\omega' = -2 \big(\mathcal{A}_{2,3}^{[-]}+\mathcal{A}_{4,5}^{[+]}+\mathcal{A}_{6,7}^{[-]}+\mathcal{A}_{8,9}^{[+]},\,
\mathcal{A}_{4,5}^{[+]}+\mathcal{A}_{6,7}^{[-]}+\mathcal{A}_{8,9}^{[+]},\,
\mathcal{A}_{6,7}^{[-]}+\mathcal{A}_{8,9}^{[+]},\, \mathcal{A}_{8,9}^{[+]}\big).
\end{align*}
Numerical computation of $\mathcal{A}_{i,j}^{[\textsf{s}_{i,j}]}$ is performed with the help of
function \texttt{NIntegrate} with a \texttt{WorkingPrecision} of $18$.
The relations \eqref{HyperInvRels} are satisfied with an accuracy of $10^{-16}$.
On the curve \eqref{HypCExmpl} we obtain
\begin{align*}
&\omega \approx \left( \begin{matrix}
-1.303573 + 0.207439 \imath &
  0.848115 - 0.306788 \imath \\
 0.073367 - 0.003075 \imath &
   -0.083799 + 0.019801 \imath \\
 -0.003985  - 0.000333  \imath &
  0.008037 - 0.001042  \imath \\
 0.000208 + 0.000046 \imath & 
 -0.000751 + 0.000028  \imath \\
\end{matrix} \right.  \phantom{mmmmmmmm} \\
&\phantom{mmmmmmmmmmmmm} \left.
\begin{matrix}
  0.0166625 + 0.063503 \imath & 
   -0.035439 - 0.017840 \imath \\
  0.005372  - 0.014707 \imath &
   -0.007363 - 0.005200 \imath \\
  -0.003023 + 0.002108  \imath &
   -0.001651 - 0.000958 \imath \\
  0.000856 + 0.000006  \imath & 
  -0.000350  - 0.000098 \imath
\end{matrix} \right),\\
&\omega' \approx \left( \begin{matrix}
 -0.604960 - 0.930374 \imath &
  -0.085127 + 0.237963 \imath \\
  0.029948  + 0.024444  \imath &
  0.019758 - 0.067832 \imath \\
 -0.001525 - 0.000835 \imath &
 -0.002854 + 0.005901 \imath \\
  0.000078 + 0.000029 \imath &
  0.000324- 0.000426 \imath 
\end{matrix} \right.  \phantom{mmmmmmmm} \\
&\phantom{mmmmmmmmmmmmm} \left. \begin{matrix}
  0.042341 - 0.110642  \imath &
  0.047121 - 0.129831 \imath \\
  0.011312 - 0.018512 \imath &
  0.006638 - 0.011849 \imath \\
  -0.002257 - 0.001647 \imath &
  0.000894 - 0.001058  \imath \\
  0.000178 + 0.000796  \imath &
  0.000117 - 0.000089 \imath
\end{matrix} \right).
\end{align*}
The corresponding normalized period matrix from the Siegel upper half-space is
\begin{gather*}
\tau \approx \left( \begin{matrix}
 0.416960 + 1.348235 \imath &
 -0.019631 + 0.866637 \imath \\
 -0.019631 + 0.866637 \imath &
 -0.401986 + 1.468494 \imath \\
 0.043442 + 0.592788 \imath &
  0.090347 + 0.771653 \imath \\
 0.013536 + 0.360353 \imath &
  0.020075 + 0.430424 \imath 
\end{matrix} \right. \phantom{mmmmmmmmmmm} \\
\left.  \phantom{mmmmmmmmmmmmm}
\begin{matrix}
  0.043442 + 0.592788 \imath &
  0.013536 + 0.360353 \imath \\
  0.090347+ 0.771653 \imath &
   0.020075 + 0.430424 \imath \\
  0.276110 + 1.677311 \imath & 
  -0.019449 + 0.549477 \imath \\
  -0.019449 + 0.549477 \imath &
  -0.241045 + 0.959518  \imath
\end{matrix} \right).
\end{gather*}
The symmetric property of $\tau$ is satisfied with an accuracy of $10^{-15}$.

With the second kind differentials
\begin{gather}\label{Dif2Hyp}
 \rmd r_{2i-1}^{[\textsf{s}]}  = \frac{R_{2i-1}(x)\,\rmd x}{-2 \textsf{s}\, \sqrt{\mathcal{P}(x)}},\quad i=1,\,2,\,3,\,4, 
\end{gather}
\begin{align*}
&R_{1} = x^4, \\
&R_{3} = 3 x^5 +  (78 - 20 \imath) x^4 + (217 - 288 \imath) x^3, \\
&R_{5} = 5 x^6 + (156 - 40 \imath) x^5   + (651 -  864 \imath) x^4 
- (15170 - 1652 \imath) x^3 \\
&\qquad\qquad  - (79138 - 82462 \imath) x^2,\\
&R_{7} = 7 x^7 + (234 - 60 \imath) x^6 + (1085 - 1440 \imath) x^5  - (30340 - 3304 \imath) x^4 \\
&\qquad\qquad  - (237414 -  247386 \imath) x^3   + (648116 + 911692 \imath) x^2 \\
&\qquad\qquad + (4126332 - 3930980 \imath) x,
 \end{align*}
second kind periods are computed:
\begin{align*}
&\eta \approx \left( \begin{matrix}
22.428062 - 6.098673 \imath & -8.281570 + 4.260894 \imath \\
280.811215 - 73.921437 \imath & -233.173130 + 22.293105 \imath \\
910.855655 + 10.721526 \imath & -2603.233939 - 813.570166 \imath \\
1224.707652 + 409.922637 \imath & -7484.857429 - 1328.792554 \imath 
\end{matrix} \right. \phantom{mmmmm}\\ 
&\phantom{mmmmmmmmmm}
\left. \begin{matrix}
-0.205360 - 0.177582 \imath & -0.198185 -  0.006055 \imath \\
 5.630286 - 0.370818 \imath & -32.028810 + 11.479858 \imath \\
 299.647570 + 719.376216 \imath & 1242.433044 + 350.238889 \imath \\
  487.787097 + 8861.332415 \imath & 6189.443173 - 7889.884228 \imath
\end{matrix} \right),\\
&\eta' \approx  \left(\begin{matrix}
13.855811 + 9.185567 \imath & 1.880170 - 4.181065 \imath \\
150.913701 + 40.841618 \imath & 64.433701 - 272.113911 \imath \\
420.612695 + 72.624802 \imath &1188.248232 - 1359.820959 \imath \\
504.296929 + 98.209496 \imath & 3559.979547 - 2711.948811 \imath 
\end{matrix} \right. \phantom{mmmmm}\\ 
&\phantom{mmmmmmmm}
\left. \begin{matrix}
 0.264151 - 1.540944 \imath & 0.305811 - 1.421979 \imath \\
 -26.525723 - 208.762530 \imath & -27.152420 - 206.017 \imath \\
 -1290.627353 - 496.867943 \imath&-1470.266735 - 1061.435845 \imath \\
 -9584.859568 + 9412.897457 \imath & -2002.776466 + 2196.979511 \imath
\end{matrix} \right),
\end{align*}
and the matrix
\begin{gather*}
\varkappa \approx \left( \begin{matrix}
-26.150273 + 5.226639 \imath & -113.639362 + 91.745099 \imath \\
-113.639362 + 91.745099 \imath & 2527.918193 + 333.2000001 \imath \\
 815.048336 + 59.142845 \imath & 6691.213749 - 15142.962600 \imath \\
 2796.548807 - 2715.208601 \imath & -19805.451622 - 22245.716646 \imath 
\end{matrix} \right. \phantom{mmmmm}\\ 
\phantom{mm}
\left. \begin{matrix}
 815.048336 + 59.142845 \imath & 2796.548807 - 2715.208601 \imath \\
 6691.213749 - 15142.962600 \imath & -19805.451622 - 22245.716646 \imath \\
-501204.576087 - 151451.871496 \imath & -1572965.591976 + 1699015.043174 \imath \\
-1572965.591976 + 1699015.043174 \imath & -403196.119224 + 19865411.502694 \imath
\end{matrix} \right).
\end{gather*}
The symmetric property of $\varkappa$ is accurate up to  $10^{-8}$, and
the Legendre relation  up to $10^{-14}$.

\bigskip
\textbf{Verification.}
An analog of the Bolza formulas on a genus $4$ hyperelliptic curve, see
\cite[Eq.\,(40)]{BerTF2020}, is given by
\begin{equation}\label{BolzaFHE} 
e_\iota = - \frac{\partial_{u_1,u_7}^2 \theta[\{\iota\}](\omega^{-1} u)}
  {\partial_{u_1,u_5}^2 \theta[\{\iota\}](\omega^{-1} u)} \Big|_{u=0}. 
\end{equation}
  According to the chosen homology basis, we have the following correspondence between 
  characteristics and branch points:
\begin{align*}
&e_1 = -18-2\imath& &\small [\varepsilon_1] = \begin{pmatrix} 1&0&0&0 \\ 0&0 &0 &0 \end{pmatrix} &
& [\{1\}]\small = \begin{pmatrix} 0&1&1&1 \\ 0&1 &0 &1 \end{pmatrix},&\\
&e_2 = -16+5\imath& &\small [\varepsilon_2] = \begin{pmatrix} 1&0&0&0 \\ 1&0 &0 &0 \end{pmatrix} &
& [\{2\}]\small = \begin{pmatrix} 0&1&1&1 \\ 1&1 &0 &1 \end{pmatrix}, &\\
&e_3 = -11+3\imath& &\small [\varepsilon_3] =\begin{pmatrix} 0&1&0&0 \\ 1&0 &0 &0 \end{pmatrix} &
& [\{3\}]\small =  \begin{pmatrix} 1&0&1&1 \\ 1&1 &0 &1 \end{pmatrix}, &\\
&e_4 = -10- \imath& &\small [\varepsilon_4] = \begin{pmatrix} 0&1&0&0 \\ 1&1 &0 &0 \end{pmatrix} &
& [\{4\}]\small = \begin{pmatrix} 1&0&1&1 \\ 1&0 &0 &1 \end{pmatrix}, &\\
&e_5 = -4 +2\imath& &\small [\varepsilon_5] = \begin{pmatrix} 0&0&1&0 \\ 1&1 &0 &0 \end{pmatrix} &
& [\{5\}]\small  = \begin{pmatrix} 1&1&0&1 \\ 1&0 &0 &1 \end{pmatrix}, &\\
&e_6 = -3+3\imath& &\small [\varepsilon_6] = \begin{pmatrix} 0&0&1&0 \\ 1&1 &1 &0 \end{pmatrix} &
& [\{6\}]\small = \begin{pmatrix} 1&1&0&1 \\ 1&0 &1 &1 \end{pmatrix},  &\\
&e_7 =  3+3\imath& &\small [\varepsilon_7] = \begin{pmatrix} 0&0&0&1 \\ 1&1 &1 &0 \end{pmatrix} &
& [\{7\}]\small = \begin{pmatrix} 1&1&1&0 \\ 1&0 &1 &1 \end{pmatrix}, &\\
&e_8 =  7-2\imath& &\small [\varepsilon_8] = \begin{pmatrix} 0&0&0&1 \\ 1&1 &1 &1 \end{pmatrix} &
& [\{8\}]\small = \begin{pmatrix} 1&1&1&0 \\ 1&0 &1 &0 \end{pmatrix}, &\\
&e_9 =  13- \imath& &\small [\varepsilon_9] = \begin{pmatrix} 0&0&0&0 \\ 1&1 &1 &1 \end{pmatrix} &
& [\{9\}]\small = \begin{pmatrix} 1&1&1&1 \\ 1&0 &1 &0 \end{pmatrix},& 
\end{align*}
where
$[\{\iota\}] = [\varepsilon_\iota] + [K] $, and
\begin{equation}\label{KChHyp}
[K] = \sum_{i=1}^4 [\varepsilon_{2i}] = \small \begin{pmatrix} 1&1&1&1 \\ 0&1 &0 &1 \end{pmatrix}. 
 \end{equation}
The zero matrix serves as the characteristic $[\varepsilon_0]$ of
the branch point $e_0$ at infinity.
The method of computing characteristics is adopted from \cite[p.\,1012]{ER2008}.

The formulas \eqref{BolzaFHE} are satisfied with an accuracy of $10^{-14}$.

\begin{rem}
Period matrices of the first kind obtained for the curve \eqref{HypCExmpl}
with the help of \texttt{algcurves} differ from the presented results, 
as well as a different and much more tangled homology basis is chosen for computation.
The period matrices obtained from \texttt{algcurves} satisfy the Bolza formulas for nine half-integer
characteristics. A correspondence between characteristics and branch points is not clear from the homology basis
chosen for computation,  though one can discover this correspondence
by applying the Bolza formulas  to all characteristics.
\end{rem}

\subsection{Example 2: Real branch points}\label{E:C29R}
Consider briefly the case of a curve \eqref{HypCe} with all real branch points:
\begin{gather*}
-18, -15, -11, -5, 1, 2, 7, 12, 16.
\end{gather*}
Such a curve is defined by the equation
\begin{multline}\label{HypCE2}
0 = -y^2 + x^9 + 11 x^8 - 514 x^7 - 4602 x^6 + 82441 x^5 + 506395 x^4 \\
- 4495768 x^3  - 11079084 x^2 + 54907920 x - 39916800. 
\end{multline}

\begin{figure}[h]
\includegraphics[width=0.46\textwidth]{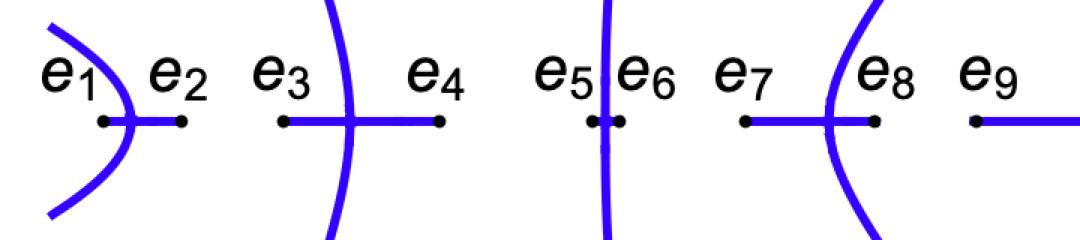}$\ $\\ $\quad$ \\
\includegraphics[width=0.44\textwidth]{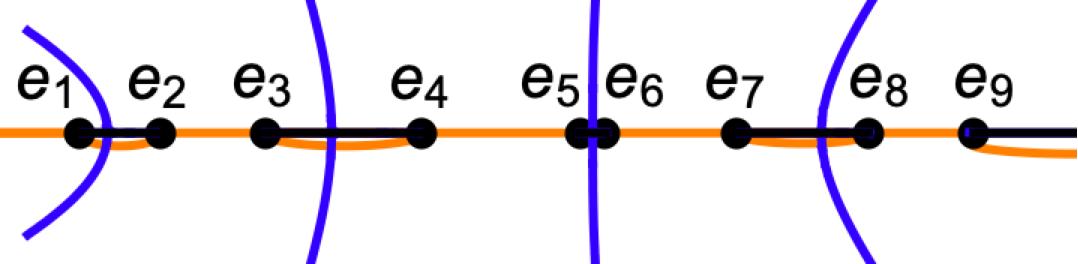} 
\caption{The contour $\Gamma$ (blue), and a continuous path $\gamma$ (orange).}\label{f:BPCutsReal}
\end{figure}
The contour $\Gamma$  and the monodromy path $\gamma$ through all branch points 
are shown on fig.\,\ref{f:BPCutsReal}.
In this case,  the sequence of signs has a clear pattern:
$$
 \begin{array}{cccccccccc}\{\textsf{s}_{0,1},& \textsf{s}_{1,2},& \textsf{s}_{2,3},& \textsf{s}_{3,4},& \textsf{s}_{4,5},&
 \textsf{s}_{5,6},& \textsf{s}_{6,7},& \textsf{s}_{7,8},& \textsf{s}_{8,9},& \textsf{s}_{9,0}\} =\\
  \{+1, & -1, & -1, & +1, & +1, & -1,& -1,& +1, &+1, &-1\}.\ \ 
  \end{array} $$
\begin{rem}
Note, that  cuts coincide with horizontal segments of the contour $\Gamma$,
and so the sign changes when the left end-point of a cut is reached, that is 
 at every $e_{2k-1}$, $k=1$, \ldots, $g+1$. This implies, $\textsf{s}_{4k-3,4k-2} = \textsf{s}_{4k-2,4k-1}  = -1$,
  $k=1$, \ldots, $[(g+1)/2]$,
 and $\textsf{s}_{4k-1,4k} = \textsf{s}_{4k,4k+1}  = +1$, $k=1$, \ldots, $[g/2]$, since
 we always start with $\textsf{s}_{0,1} = +1$. The final segment has the sign $\textsf{s}_{2g+1,0} = -1$
 if the genus is even, or $\textsf{s}_{2g+1,0} = +1$  if the genus is odd.
\end{rem}

First kind not normalized periods are 
\begin{align*}
&\omega \approx \begin{pmatrix}
-0.675637& 0.287434 & 0.002651& -0.309937 \\
0.041299 & -0.032471 & 0.001599 & -0.031911 \\
-0.002535 & 0.003908 & 0.001010 & -0.003404 \\
0.000156 & -0.000507 & 0.000673 & -0.000377
\end{pmatrix},\\
&\omega' \approx \begin{pmatrix}
-0.939638 \imath & -0.289356 \imath & -0.312311 \imath & -0.394964 \imath \\
0.030634 \imath & -0.018895 \imath & -0.013110 \imath & -0.028577 \imath \\
-0.001371 \imath & 0.002447 \imath & 0.001224 \imath & -0.002089 \imath \\
0.000066 \imath & -0.000232 \imath & 0.000698 \imath & -0.000154 \imath
\end{pmatrix}.
\end{align*}
Then the normalized period matrix from the Siegel upper half-space is
\begin{align*}
&\tau \approx \begin{pmatrix}
1.602330 \imath & 0.820786 \imath & 0.514534 \imath & 0.304355 \imath \\
0.820786 \imath & 1.304404 \imath & 0.648249 \imath & 0.359593 \imath \\
0.514534 \imath & 0.648249 \imath & 1.686169 \imath & 0.501628 \imath \\
0.304355 \imath & 0.359593 \imath & 0.501628 \imath & 0.948644 \imath
\end{pmatrix}.
\end{align*}

\begin{rem}
When all branch points are real, the monodromy path $\gamma$ coincides with the real axis. 
Moreover,  $\mathcal{P}$ is real-valued along $\gamma$, positive on segments $[e_{2k-1},e_{2k}]$,
$k=1$, \ldots, $g+1$, where cuts are made, and negative 
on the remaining segments $[e_{2k},e_{2k+1}]$, $k=0$, \ldots, $g$. If one branch point of a curve is located at infinity,
then $e_0$ stands for $-\infty$, end $e_{2g+2}$ stands for $\infty$. If all branch points are finite, then
$e_{2g+2}$ denotes the same point as $e_0$, which is the smallest one.

Since $\mathcal{P}$ has an alternating sign along the path, all $\mathfrak{a}$-periods are real,
and all $\mathfrak{b}$-periods are purely imaginary. Moreover, 
\begin{equation}\label{VZ}
y(x) = (-\imath)^{j} \sqrt{|\mathcal{P}(x)|}  \qquad
\text{on\ }\ [e_j,e_{j+1}],\quad  j=0,\, \ldots,\, 2g+1.
\end{equation}
The formula \eqref{VZ} was discovered by V. Enolsky.
It shows how $y_+$ and $y_-$ connect on Sheet\;\textsf{a}.
\end{rem}

Second kind periods on the curve \eqref{HypCE2} computed by \eqref{Dif2Hyp} with
\begin{align*}
&R_{1} = x^4, \\
&R_{3} = 3 x^5 +  22 x^4  - 514x^3, \\
&R_{5} = 5 x^6 + 44 x^5   - 1542 x^4 - 9204 x^3  + 82441 x^2,\\
&R_{7} = 7 x^7 + 66 x^6 - 2570 x^5  - 18408 x^4 \\
&\qquad\qquad  + 247323 x^3   + 1012790 x^2  -4495768 x
 \end{align*}
are
\begin{align*}
&\eta \approx \begin{pmatrix}
11.099150 & -2.673068 &  0.004562 & -3.109806 \\
42.176466 & -129.207930 & -1.238236 & -5.426458 \\
-383.268207 & -1342.343715 & 100.817590 & 1906.755187 \\
-578.759149 &  4760.249993 & -2368.318297 & -1963.133866 
\end{pmatrix},\\
&\eta' \approx \begin{pmatrix}
3.501174 \imath & -5.136174 \imath & -5.034880 \imath & -5.515607 \imath \\
-11.2650450 \imath & -187.532997 \imath & -174.869192 \imath & -151.780483 \imath \\
-195.044640 \imath & 288.885494 \imath &  811.312906  \imath & 816.306629 \imath \\ 
-195.450172 \imath & 1246.674997 \imath & 6300.606235 \imath & -562.511001 \imath
\end{pmatrix},
\end{align*}
and the symmetric matrix is
\begin{gather*}
\varkappa \approx \begin{pmatrix}
-13.123159 & 129.285113 &  1107.820797 & -1910.386399 \\
 129.285113 &  1362.173530 & -26772.601447 & 34575.690532 \\
  1107.820797 & -26772.601447 & -519356.757226 & 988034.553637 \\
  -1910.386399 &  34575.690532 & 988034.553637 & -5074619.889795 
\end{pmatrix}.
\end{gather*}

Numerical integration is performed with a \texttt{WorkingPrecision} of $18$. 
The same accuracy as in Example~1 is achieved.

\section{Computation of $\wp$-functions on a hyperelliptic curve}\label{s:HypWP}
We start with computing the Abel image $\mathcal{A}(P)$ of a given point $P$ by \eqref{AbelM},
where the standard not normalized holomorphic differentials \eqref{HDifCg} are used.
A choice of the path from the base-point to $P$ is tightly attached to 
the Riemann surface constructed in the previous section. 
Below, an explanation how to choose
such a path correctly is given.
The Abel image of a  divisor $D =  \sum_{i=1}^n P_i$ is computed by
\begin{equation}\label{AMapD}
 \mathcal{A} (D) = \textstyle \sum_{i=1}^n \mathcal{A}(P_i).
 \end{equation}
 The Abel image of any divisor can be computed in this way.

$\wp$-Functions are defined on non-special divisors only. 
We compose non-special divisors $D$ as positive divisors of degree $n\geqslant g$ with no pairs of points in involution
on a hyperelliptic curve of genus $g$. $\wp$-Functions are calculated at $u=\mathcal{A} (D)$ by means of \eqref{WPdef},
where the characteristic $[K]$ of the vector of Riemann constants is required.
 
On a hyperelliptic curve, $[K]$
is computed as a sum of odd characteristics corresponding to branch points, cf.\,\eqref{KChHyp}.
The same $[K]$ is obtained from the formula \eqref{VRC}.
The function $\theta[K]$  has the maximal  
order of vanishing at $u=0$ computed with respect to 
 the Sato weight.  The  order of vanishing\footnote{See computation of the order of vanishing of $\theta[K]$
 with respect to not normalized coordinates in 
 \texttt{https://community.wolfram.com/groups/-/m/t/3296279}, where $[K]$ 
 are computed for Examples 1 and 3 from this paper.} 
 coincides with 
 the negative Sato weight of $\sigma$-function, which is $-\wgt \sigma = \frac{1}{2} (g+1) g$
 on a hyperelliptic curve of genus $g$, for more details see Theorem~\ref{T:Kval} in section~\ref{s:TrigWP}.

Let $P_i =(x_i,y_i)$. 
We choose $e_i$ close to $x_i$, such that  $y_{\textsf{s}}$
does not change the sign over the segment $[e_i,x_i]$.
The sheet where $P_i$ is located is identified
from comparing the sign $\textsf{n}$ of $y_i = y_\textsf{n}(x_i)$
with the sign  on  $[e_i,x_i]$ according to the sequence of signs on Sheet\;\textsf{a}.
A path to $P_i$ is drawn on the sheet where the point is located.
The path starts at $-\infty$ on the real axis, and goes to $e_i$ 
along $\gamma^{\textsf{n}}$, the lift of $\gamma$ on Sheet~$\textsf{n}$.
Then the segment $[e_i,x_i]$ on the same sheet is added to this path. 
Along the path to $P_i$ the Abel image $\mathcal{A}(P_i)$ is computed.

The proposed algorithm is illustrated by examples.
Divisors of degree equal to the genus $g$ of a curve are considered, and so 
the Jacobi inversion problem is used for verification.
One can compute $\wp$-functions on a divisor of degree
greater than~$g$, and  solve the Jacobi inversion problem for the
corresponding reduced divisor.
The latter serve as an implementation of addition on a curve.

\subsection{Example 1a}\label{P:C29}
We continue to work with the curve \eqref{HypCExmpl} from Example 1 in the previous section.

Let $u$ be the Abel image of a non-special divisor $D = \sum_{i=1}^4 P_i$,  
\begin{equation}\label{Ex1aPs}
\begin{split}
P_1 = (x_1,y_1) &= (-9+\imath, -8\sqrt{-918\,645-541\,515 \imath}) \\
&\approx (-9+\imath, -2174.219935 + 7969.975679 \imath),\\
P_2 = (x_2,y_2) &= (-4-3\imath, -20\sqrt{924613 - 1261876 \imath})  \\
&\approx (-4-3\imath, -22311.336861 + 11311.522997 \imath),\\
P_3 = (x_3,y_3) &= (1+2\imath, 10\sqrt{-1744002 + 734019 \imath}) \\
&\approx (1+2\imath, 2721.885087 + 13483.651524 \imath),\\
P_4 = (x_4,y_4) &= (6+4\imath, -4\sqrt{99702405 - 110095815 \imath})\\
& \approx (6+4\imath, -44563.130818 + 19764.466810 \imath).
\end{split}
\end{equation}
\begin{figure}[h]
\parbox[b]{0.43\textwidth}{\includegraphics[width=0.43\textwidth]{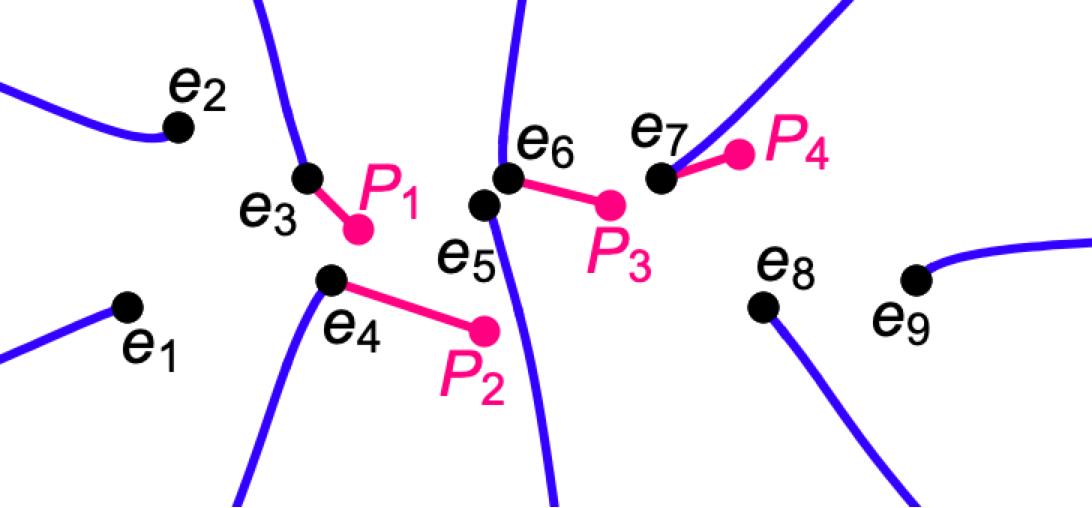}}\ \ \
\parbox[b]{0.43\textwidth}{\includegraphics[width=0.43\textwidth]{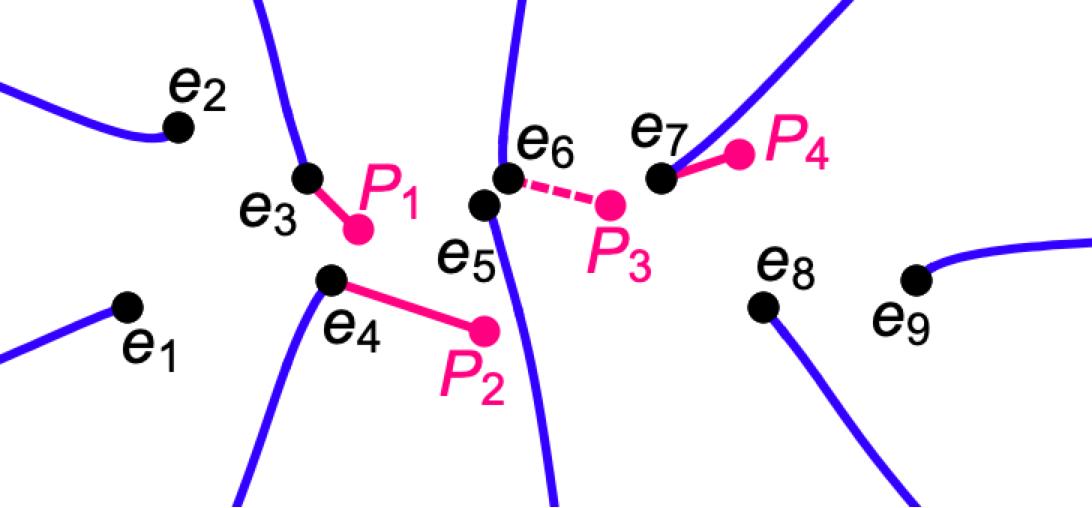}}\ 
\parbox[b]{0.07\textwidth}{\includegraphics[width=0.07\textwidth]{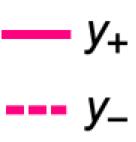}\\
$\quad$ \\ $\quad$ \\ $\quad$ \\ $\quad$}\\
\parbox[b]{0.43\textwidth}{\centering (a) Example 1a}\ \ \ 
\parbox[b]{0.43\textwidth}{\centering (b) Example 1b}$\quad\quad\quad$
\caption{Paths to points of divisor $D$.}\label{f:HEEx1}
\end{figure}
Actually, $y_i = y_+(x_i)$ for all $i$. Using the sequence of signs on Sheet\;\textsf{a},  and fig.\,\ref{f:Sheet}, 
we find that $P_1$, $P_2$, $P_4$ are located on Sheet\;\textsf{a},
and $P_3$  on Sheet\;\textsf{b}. Then we compute the corresponding Abel images, 
see locations of points $P_i$ on fig.\,\ref{f:HEEx1}(a)
\begin{align*}
&\mathcal{A}(P_1) = \mathcal{A}_{0,1}^{[+]} + \sum_{i=1}^2 \mathcal{A}_{i,i+1}^{[-]} + \int_{e_3}^{x_1} \rmd u^{[+]},\\
&\mathcal{A}(P_2) = \mathcal{A}_{0,1}^{[+]}  +  \sum_{i=1}^3 \mathcal{A}_{i,i+1}^{[-]} + \int_{e_4}^{x_2} \rmd u^{[+]},\\
&\mathcal{A}(P_3) = - \Big(\mathcal{A}_{0,1}^{[+]} +  \sum_{i=1}^3 \mathcal{A}_{i,i+1}^{[-]} + \mathcal{A}_{4,5}^{[+]} 
+ \mathcal{A}_{5,6}^{[-]} + \int_{e_6}^{x_3} \rmd u^{[-]} \Big),\\
&\mathcal{A}(P_4) = \mathcal{A}_{0,1}^{[+]} 
+  \sum_{i=1}^3 \mathcal{A}_{i,i+1}^{[-]} + \mathcal{A}_{4,5}^{[+]} 
+ \sum_{i=5}^6 \mathcal{A}_{i,i+1}^{[-]} + \int_{e_7}^{x_4} \rmd u^{[+]},
\end{align*}
 and find
$$ u(D) = \sum_{i=1}^4 \mathcal{A}(P_i) \approx \begin{pmatrix}
-1.182750 + 0.205635 \imath \\  0.073714 - 0.038375 \imath\\ 
-0.004762+  0.002674 \imath \\ 0.000592 - 
 0.000064 \imath
\end{pmatrix}. $$
By means of \eqref{WPdef}, with $\theta$-function approximated by
a partial sum of \eqref{ThetaDef}, $n_i \leqslant 5$,
we obtain 
\begin{gather}\label{WPvalsHyp}
\begin{aligned}
&\wp_{1,1}\big(u(D)\big) \approx -6 + 4\imath,& &\wp_{1,1,1}\big(u(D)\big) \approx 91.581255 - 159.929002 \imath, \\
&\wp_{1,3}\big(u(D)\big) \approx 42 + 53\imath,& &\wp_{1,1,3}\big(u(D)\big) \approx 23.849556 - 1665.831810 \imath, \\
&\wp_{1,5}\big(u(D)\big) \approx 193 + 191\imath,& &\wp_{1,1,5}\big(u(D)\big) \approx -6971.970187 - 998.734510 \imath, \\
&\wp_{1,7}\big(u(D)\big) \approx 446 - 578\imath,& &\wp_{1,1,7}\big(u(D)\big) \approx -5733.795768 - 18693.578334  \imath.
\end{aligned}
\end{gather}

Next, we use the solution \eqref{EnC22g1} of the Jacobi inversion problem to verify if
the obtained Abel image corresponds to the given divisor $D$,
and at the same time, to verify
the compliance of the obtained periods with the curve. 
On a hyperelliptic curve of genus 4 this solution acquires the form
\begin{align*}
&\mathcal{R}_{8}(x;u) \equiv x^4 - x^3 \wp_{1,1}(u) - x^2 \wp_{1,3}(u) - x \wp_{1,5}(u) - \wp_{1,7}(u) = 0,\\ 
&\mathcal{R}_{9}(x,y;u) \equiv 2 y + x^3 \wp_{1,1,1}(u) + x^2 \wp_{1,1,3}(u) + x \wp_{1,1,5}(u) + \wp_{1,1,7}(u)= 0.
\end{align*}
The two polynomial functions  $\mathcal{R}_{8}$ and $\mathcal{R}_{9}$ with
coefficients \eqref{WPvalsHyp} both vanish on the given $D$. 
 
On the other hand, the common divisor of zeros of $\mathcal{R}_{8}$ and $\mathcal{R}_{9}$,
which is a degree~$g$ non-special divisor $D_g$,
uniquely defines these two polynomial functions.
With a divisor $D_4 = \sum_{i=1}^4 (x_i,y_i)$
on a genus $4$ hyperelliptic curve, we have
\begin{align*}
&\mathcal{R}_{8}(x;D_4) = 
\frac{\small \begin{vmatrix} 
x^4 & x^3 & x^2 & x & 1 \\
x_1^4 & x_1^3 & x_1^2 & x_1 & 1 \\
x_2^4 & x_2^3 & x_2^2 & x_2 & 1 \\
x_3^4 & x_3^3 & x_3^2 & x_3 & 1 \\
x_4^4 & x_4^3 & x_4^2 & x_4 & 1 \end{vmatrix}}
{\small \begin{vmatrix}
 x_1^3 & x_1^2 & x_1 & 1 \\
 x_2^3 & x_2^2 & x_2 & 1 \\
 x_3^3 & x_3^2 & x_3 & 1 \\
 x_4^3 & x_4^2 & x_4 & 1
\end{vmatrix}},&
&\mathcal{R}_{9}(x,y;D_4) = 2
\frac{\small \begin{vmatrix} 
y & x^3 & x^2 & x & 1 \\
y_1 & x_1^3 & x_1^2 & x_1 & 1 \\
y_2 & x_2^3 & x_2^2 & x_2 & 1 \\
y_2 & x_3^3 & x_3^2 & x_3 & 1 \\
y_3 & x_4^3 & x_4^2 & x_4 & 1
\end{vmatrix}}
{\small \begin{vmatrix} 
 x_1^3 & x_1^2 & x_1 & 1 \\
 x_2^3 & x_2^2 & x_2 & 1 \\
 x_3^3 & x_3^2 & x_3 & 1 \\
 x_4^3 & x_4^2 & x_4 & 1
 \end{vmatrix}}.&
\end{align*}
Then, the given divisor  $D$ defined by \eqref{Ex1aPs} is the common divisor of zeros
of the two polynomial functions
\begin{subequations}\label{R8R9}
\begin{align}
&\mathcal{R}_{8}(x;D) \equiv x^4 + (6-4\imath) x^3 - (42 + 53\imath) x^2 
- (193 + 191\imath) x - 446 + 578\imath, \\
&\mathcal{R}_{9}(x,y;D) \equiv 2 y + (91.581255 - 159.929002 \imath) x^3 \\
&\quad + (23.849556 - 1665.831810) x^2 - (6971.970187 + 998.734510 \imath) x \notag \\
&\phantom{mmmmmmmmmmmmmmmmmm} -5733.795768 - 18693.578334 \imath. \notag
\end{align}
\end{subequations}
Coefficients of $\mathcal{R}_{8}$ and $\mathcal{R}_{9}$ in \eqref{R8R9}  give values of 
 $\wp$-functions at $u(D)$, which coincide with \eqref{WPvalsHyp}
 with an accuracy of $10^{-12}$.

\subsection{Example 1b}
Let  $D$ from Example 1a be slightly modified, by moving $P_3$ from Sheet\;\textsf{b} to Sheet\;\textsf{a},
where $y_3 = y_-(x_3)$. That is 
\begin{align*}
P_3 = (x_3,y_3) &= (1+ 2\imath, -10\sqrt{-1744002 + 734019 \imath}) \\
&\approx (1+2\imath, -2721.885087 - 13483.651524 \imath).
\end{align*}
Then $\mathcal{A}(P_3)$ is computed with the opposite sign:
\begin{align*}
&\mathcal{A}(P_3) = \mathcal{A}_{0,1}^{[+]} +  \sum_{i=1}^3 \mathcal{A}_{i,i+1}^{[-]} + \mathcal{A}_{4,5}^{[+]} 
+ \mathcal{A}_{5,6}^{[-]} + \int_{e_6}^{x_3} \rmd u^{[-]} .
\end{align*}
Thus,
$$ u(D) = \sum_{i=1}^4 \mathcal{A}(P_i) \approx \begin{pmatrix}
-1.574832 + 0.029543  \imath \\ 0.073141 - 0.050092 \imath\\ 
-0.003641 + 0.002439 \imath\\  0.000890 + 0.000121 \imath
\end{pmatrix}. $$
Values of $\wp_{1,2i-1}$, $i=1$, $2$, $3$, $4$, remain the same, within the accuracy.
The new values of $\wp_{1,1,2i-1}$, are
\begin{gather}\label{WPvalsHyp2}
\begin{aligned}
 &\wp_{1,1,1}\big(u(D)\big) \approx -51.390396 - 175.145987 \imath,& \\
 &\wp_{1,1,3}\big(u(D)\big) \approx -1007.385975 - 1486.407403 \imath,& \\
 &\wp_{1,1,5}\big(u(D)\big) \approx -3163.745380 + 5334.829741 \imath,& \\
 &\wp_{1,1,7}\big(u(D)\big) \approx 10094.385116 + 25500.899104 \imath.& 
\end{aligned}
\end{gather}

Evidently,  $\mathcal{R}_{8}$ remains unchanged, since values of $x_i$ are kept the same.
The new function $\mathcal{R}_{9}$ acquires the form
\begin{align*}
&\mathcal{R}_{9} (x,y) = 2 y - (51.390396 + 175.145987 \imath) x^3 \\
&\quad - (1007.385975 + 1486.407403 \imath) x^2 - (3163.745380 - 5334.829741 \imath) x \\
&\quad\phantom{mmmmmmmmmmmmmmmmmm} + 10094.385116 + 25500.899104 \imath,
\end{align*}
and its coefficients coincide with the values \eqref{WPvalsHyp2}
 with an accuracy of $10^{-10}$.

\section{Periods on a trigonal curve}\label{s:TrigPer}

\subsection{Trigonal curves}
A generic trigonal curve is defined by the equation
\begin{gather}\label{TrigCPQTEq}
 0 = \tilde{f}(x,y) = - y^3 + y^2 \mathcal{T}(x) + y\mathcal{Q}(x) + \mathcal{P}(x).
\end{gather}
Let maximal degrees of polynomials $\mathcal{P}$, $\mathcal{Q}$, $\mathcal{T}$ be as shown in the table below. 
The genus $g$ of a curve is displayed in the last column.
\begin{align*}
&\text{Case 1:}& &\deg \mathcal{P} = 3\mFr+1,&  &\deg \mathcal{Q} = 2\mFr,&  &\deg \mathcal{T} = \mFr,& &g = 3\mFr;&\\
&\text{Case 2:}& &\deg \mathcal{P} = 3\mFr+2,&  &\deg \mathcal{Q} = 2\mFr+1,&  &\deg \mathcal{T} = \mFr,& &g = 3\mFr+1;&\\
&\text{Case 3:}& &\deg \mathcal{P} = 3\mFr+3,& &\deg \mathcal{Q} = 2\mFr+2&  &\deg \mathcal{T} = \mFr+1,&  &g = 3\mFr+1.&
\end{align*}
Note, that $y^2 \mathcal{T}(x)$ is eliminated by the map 
$y \mapsto \tilde{y} +\tfrac{1}{3} \mathcal{T}(x)$, which leads to
\begin{gather*}
0 = \tilde{f}(x,\tilde{y})= - \tilde{y}^3 +   \widetilde{\mathcal{Q}}(x) \tilde{y} +  \widetilde{\mathcal{P}}(x),\\
\qquad\qquad\quad \begin{split}
&\widetilde{\mathcal{Q}}(x)  = \mathcal{Q}(x) + \tfrac{1}{3} \mathcal{T}(x)^2, \\
&\widetilde{\mathcal{P}}(x)  = \mathcal{P}(x)  + \tfrac{1}{3} \mathcal{Q}(x) \mathcal{T}(x) + 
\tfrac{2}{27}\mathcal{T}(x)^3.
\end{split}
\end{gather*}
The discriminant of \eqref{TrigCPQTEq} is computed as follows
\begin{gather}
\Delta(x) =  \widetilde{\mathcal{P}}(x)^2 - \tfrac{4}{27} \widetilde{\mathcal{Q}}(x)^3,
\end{gather}
where
\begin{equation}\label{BPNums}
\deg \Delta = N =  \left\{ \begin{array}{ll} 
6 \mFr + 2, & \text{Case 1;} \\
6 \mFr + 4,  & \text{Case 2;} \\
6 \mFr + 6,  & \text{Case 3.} 
\end{array} \right.
\end{equation}
The degree of  $\Delta$ shows the number $N$ of finite branch points $\{B_i = (e_i,h_i)\}_{i=1}^N$.
In Case~$3$, a curve has $6 \mFr + 6$ branch points, all finite.
In Cases~$1$ and $2$, a curve has $N=2(g+1)$  finite branch points, and a double 
branch point at infinity. Let $\nu_i$ be the ramification index of $B_i$. Each branch point is counted $\nu_i - 1$ times.
We assume, that all finite branch points have the ramification index $2$, and 
the branch point $B_0$ at infinity  has $\nu_0 = 3$.

Cases $1$ and $2$ with $\mathcal{T}(x) \equiv 0$ represent $(n,s)$-curves,
which serve as canonical forms of trigonal curves.
In these cases, the genus is  computed by the formula $g=\frac{1}{2}(n-1)(s-1)$,
see \cite{bel99}. 
Case~$3$ is obtained by a proper bi-rational transformation from Case $2$,
thus the both cases have the same Weierstrass gap sequence~$\mathfrak{W}$.
Therefore, classification of trigonal curves is based on Cases $1$ and $2$.   
The corresponding gap sequences are (each set is supposed to be ordered ascendingly)
\begin{align*}
&\text{Case 1}& &\mathfrak{W}=\{3i-2 \mid i=1,\dots, \mFr\} \cup \{3i-1 \mid i=1,\dots, 2\mFr\};& \\
&\text{Case 2}& &\mathfrak{W}=\{3i-1 \mid i=1,\dots, \mFr\} \cup \{3i-2 \mid i=1,\dots, 2\mFr+1\}.&
\end{align*}

The Sato weight introduces an order relation  in the space of monomials $y^j x^i$.
These monomials  are used to construct the equation of a curve, 
and  differentials. 
The ordered lists of monomials in the both cases  are 
\begin{subequations}\label{Monoms}
\begin{align}\label{Monom1}
&\text{Case 1:}& &\mathfrak{M} = \big\{1,  x, \dots, x^{\mFr-1}, x^\mFr, y, x^{\mFr+1}, y x, \dots, 
x^{2\mFr-1}, y x^{\mFr-1}, x^{2\mFr},  \\ 
&& &\quad\quad y x^\mFr,  y^2,  x^{2\mFr+1},  y x^{\mFr+1}, 
\{ y^2 x^i, x^{2\mFr+1+i},  y x^{\mFr+1+i} \mid i\in\Natural \} \big\}, \notag \\  \label{Monom2}
&\text{Case 2:} & &\mathfrak{M} = \big\{ 1,  x,  \dots, 
x^{\mFr-1},  x^\mFr,  y,  x^{\mFr+1}, y x,  \dots, x^{2\mFr-1}, y x^{\mFr-1}, x^{2\mFr},  \\ 
&& & y x^\mFr,  x^{2\mFr+1}, y^2,  yx^{\mFr+1}, x^{2\mFr+2},  
\{y^2 x^i,  yx^{\mFr+1+i}, x^{2\mFr+2+i} \mid i\in\Natural \} \big\}. \notag
\end{align}
\end{subequations}

\subsection{Riemann surface}\label{ss:RiemSurfTC}
In what follows, we focus on the canonical forms of trigonal curves
\begin{gather}\label{TrigCPQEq}
0 = f(x,y)= - y^3 + y\mathcal{Q}(x) + \mathcal{P}(x).
\end{gather}
with the discriminant polynomial
\begin{gather*}
\Delta(x) =  \mathcal{P}(x)^2 - \tfrac{4}{27} \mathcal{Q}(x)^3.
\end{gather*}
Suppose, that all roots of $\Delta$ are distinct, also $\mathcal{P}$ and $\mathcal{Q}$ have no common roots.
These conditions are sufficient for having only branch points
with the ramification index $2$.

Solutions of \eqref{TrigCPQEq} are given by  Cardano's formula:
\begin{subequations}\label{TrigSol}
\begin{align}\label{TrigSolP}
&y_{+,a}(x) =  q_{+,a} (x) + \tfrac{1}{3} \mathcal{Q}(x) q^{-1}_{+,a} (x),\qquad  a=1,\,2,\,3,\\
&q_{+,a} (x) = \upsilon^{1/3} _{+} (x)\rme^{2 (a-1) \imath \pi /3}, \qquad
\upsilon_{+} (x) = \tfrac{1}{2} \Big(\mathcal{P}(x) + \sqrt{\Delta(x)} \Big), \notag
\intertext{or equivalently}
&y_{-,a}(x) =  q_{-,a} (x) + \tfrac{1}{3} \mathcal{Q}(x) q^{-1}_{-,a} (x),\qquad  a=1,\,2,\,3, \label{TrigSolM}\\
&q_{-,a} (x) = \upsilon^{1/3} _{-} (x)\rme^{2 (a-1) \imath \pi /3}, \qquad
\upsilon_{-} (x) = \tfrac{1}{2} \big(\mathcal{P}(x) - \sqrt{\Delta(x)} \big), \notag
\end{align} 
\end{subequations}
where $\sqrt{\Delta}$ is defined by \eqref{SqrtDef}. By $q_{\textsf{s},a}$, $\textsf{s}=\pm 1$, $a=1$, $2$, $3$,
three cubic roots of $\upsilon_{\textsf{s}}$ are denoted. Let
\begin{gather}\label{CubicRootDef}
\upsilon_{\textsf{s}}^{1/3} (x)  = \left\{ \begin{array}{ll}
|\upsilon_{\textsf{s}}(x)|^{1/3} \,\rme^{(\imath/3) \arg  \upsilon_{\textsf{s}}(x)} 
& \text{ if }\quad \arg  \upsilon_{\textsf{s}}(x) \geqslant 0, \\
|\upsilon_{\textsf{s}}(x)|^{1/3}\, \rme^{(\imath/3) \arg \upsilon_{\textsf{s}}(x)+ \imath 2\pi/3} 
& \text{ if }\quad \arg  \upsilon_{\textsf{s}}(x) < 0.
\end{array}  \right.
\end{gather}
According to this definition, the function $\upsilon_{\textsf{s}}^{1/3}$ has the range
$[0,\frac{2}{3}\pi)$, and its discontinuity is located over the contour 
$\Upsilon_{\textsf{s}} = \{x \mid \arg \upsilon_{\textsf{s}}(x) = 0\}$.

 \begin{theo}\label{T:UpsilonDscont}
Let $\upsilon_{\textsf{s}}^{1/3}$ be defined by \eqref{CubicRootDef}. Then each
$y_{\textsf{s},a}$ defined by \eqref{TrigSol} 
 has discontinuity  over the contour 
 $\Upsilon_{\textsf{s}} = \{x \mid \arg \upsilon_{\textsf{s}}(x) = 0\}$.
Along a path from a region with $\arg \upsilon_{\textsf{s}}(x) < 0$ 
to a region with $\arg \upsilon_{\textsf{s}}(x) \geqslant 0$ the analytic continuation of
$y_{\textsf{s},a}$ is given by $y_{\textsf{s},b}$, where $a \mapsto b$ according to one of
 cyclic permutations  $(123)$ or $(132)$.
\end{theo}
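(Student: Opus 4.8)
The plan is to mimic the argument of Theorem~\ref{T:DeltaDscont}, replacing the square root by the cube root $\upsilon_{\textsf{s}}^{1/3}$ and keeping careful track of the extra correction term $\tfrac{1}{3}\mathcal{Q}(x)q_{\textsf{s},a}^{-1}(x)$. First I would observe that, by the very definition \eqref{CubicRootDef}, the range of $\arg \upsilon_{\textsf{s}}^{1/3}$ is $[0,\tfrac{2}{3}\pi)$, so the only source of discontinuity of $q_{\textsf{s},a}(x)=\upsilon_{\textsf{s}}^{1/3}(x)\rme^{2(a-1)\imath\pi/3}$ is the jump in $\arg\upsilon_{\textsf{s}}$ as $x$ crosses the contour $\Upsilon_{\textsf{s}}=\{x \mid \arg\upsilon_{\textsf{s}}(x)=0\}$, exactly as in the hyperelliptic case. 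Away from $\Upsilon_{\textsf{s}}$, the functions $\upsilon_{\textsf{s}}$ and $\upsilon_{\textsf{s}}^{1/3}$ are holomorphic (the branch points of $\Delta$ are the only zeros of $\upsilon_+\upsilon_-$, and near a simple branch point $\upsilon_{\textsf{s}}$ vanishes to first order so $\upsilon_{\textsf{s}}^{1/3}$ picks up a cube-root branching there, which is the expected ramification of the curve and not a spurious discontinuity of the sheet labelling); hence each $y_{\textsf{s},a}$ is holomorphic off $\Upsilon_{\textsf{s}}$, and since $\mathcal{Q}$ has no common root with $\mathcal{P}$ the term $\tfrac13\mathcal{Q}q_{\textsf{s},a}^{-1}$ stays finite where $q_{\textsf{s},a}\neq 0$. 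This establishes the discontinuity claim.

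Next I would identify the analytic continuation across $\Upsilon_{\textsf{s}}$. Take $x_0\in\Upsilon_{\textsf{s}}$ and a small disc $U(x_0;\delta)$ split by $\Upsilon_{\textsf{s}}$ into $U_{-}$ (where $\arg\upsilon_{\textsf{s}}<0$, i.e.\ $\arg\upsilon_{\textsf{s}}$ slightly below $0$, effectively near $2\pi$ from the other side) and $U_{+}$ (where $\arg\upsilon_{\textsf{s}}\geqslant 0$). On $U_{-}$ the prescription \eqref{CubicRootDef} uses the second branch, multiplying by $\rme^{\imath 2\pi/3}$; on $U_{+}$ it uses the first branch. Therefore the genuine analytic continuation of $q_{\textsf{s},a}\big|_{U_{-}}$ into $U_{+}$ — obtained by letting $\arg\upsilon_{\textsf{s}}$ vary continuously — differs from the value prescribed by \eqref{CubicRootDef} on $U_{+}$ precisely by the factor $\rme^{\imath 2\pi/3}$: analytically continuing gives $|\upsilon_{\textsf{s}}|^{1/3}\rme^{(\imath/3)\arg\upsilon_{\textsf{s}}}\rme^{\imath 2\pi/3}\rme^{2(a-1)\imath\pi/3}=q_{\textsf{s},a+1}(x)$, reading indices modulo $3$. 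Correspondingly $q_{\textsf{s},a}^{-1}$ continues to $q_{\textsf{s},a+1}^{-1}$, so the full solution $y_{\textsf{s},a}=q_{\textsf{s},a}+\tfrac13\mathcal{Q}q_{\textsf{s},a}^{-1}$ continues to $y_{\textsf{s},a+1}$. Thus crossing $\Upsilon_{\textsf{s}}$ from the $\arg\upsilon_{\textsf{s}}<0$ side to the $\arg\upsilon_{\textsf{s}}\geqslant 0$ side realizes the cyclic permutation $(123)$ on the labels $a$, which is the assertion.

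The step I expect to require the most care is the bookkeeping of the angle convention at the seam: one must check that the branch cut introduced by \eqref{CubicRootDef} is exactly a $\tfrac{2\pi}{3}$-jump (not $\tfrac{4\pi}{3}$ or $0$) and that it is consistently oriented, so that the induced permutation is the $3$-cycle $(123)$ rather than its inverse $(132)$ or the identity. Concretely this means verifying that as $\arg\upsilon_{\textsf{s}}$ decreases through $0$ the definition switches from the $\rme^{(\imath/3)\arg}$ branch to the $\rme^{(\imath/3)\arg+\imath 2\pi/3}$ branch — equivalently, that approaching $\Upsilon_{\textsf{s}}$ from $U_-$ the argument of $\upsilon_{\textsf{s}}$ tends to $0^-$ (read as just below $2\pi$ after the $+2\pi$ correction built into \eqref{CubicRootDef}) while from $U_+$ it tends to $0^+$, so the continuation picks up exactly $\rme^{\imath 2\pi/3}$. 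Once that sign of the rotation is pinned down, the rest is the same disc-splitting argument as in Theorem~\ref{T:DeltaDscont}, with $-1=\rme^{\imath\pi}$ (order $2$) replaced by $\rme^{\imath 2\pi/3}$ (order $3$).
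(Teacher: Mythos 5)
Your main argument is correct and is essentially the paper's own proof: the same disc--splitting at a point $x_0\in\Upsilon_{\textsf{s}}$, the same observation that the prescription \eqref{CubicRootDef} forces $\arg\upsilon_{\textsf{s}}^{1/3}$ into $[0,\phi)$ on one side and $(\tfrac23\pi-\phi,\tfrac23\pi)$ on the other, and the same identification of the continuation from $U_-$ to $U_+$ as multiplication by $\rme^{2\imath\pi/3}$, giving the $3$-cycle $1\mapsto2\mapsto3\mapsto1$ in the correct orientation. Two parenthetical asides are inaccurate, though neither is needed for the statement being proved. First, $\upsilon_+\upsilon_-=\tfrac14\bigl(\mathcal{P}^2-\Delta\bigr)=\tfrac1{27}\mathcal{Q}^3$, so the zeros of $\upsilon_+\upsilon_-$ are the roots of $\mathcal{Q}$, not the branch points of $\Delta$; at a finite branch point $e_i$ one has $\upsilon_{\textsf{s}}(e_i)=\tfrac12\mathcal{P}(e_i)\neq0$ (since $\mathcal{P}$ and $\mathcal{Q}$ share no roots), and the local ramification there is of order $2$ (a transposition of two sheets, handled by Theorem~\ref{T:CRDeltaDscont}), not a cube-root branching of $\upsilon_{\textsf{s}}^{1/3}$. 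Second, $\Upsilon_{\textsf{s}}$ is not the \emph{only} discontinuity locus of $q_{\textsf{s},a}$: the factor $\sqrt{\Delta}$ inside $\upsilon_{\textsf{s}}$ is itself discontinuous over $\Gamma$ by Theorem~\ref{T:DeltaDscont}; the present theorem only asserts discontinuity over $\Upsilon_{\textsf{s}}$, and the $\Gamma$-discontinuities are treated separately in Theorem~\ref{T:CRDeltaDscont}, so you should drop the word ``only.''
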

\begin{proof}
Let $\textsf{s} \,{=}\,{+}1$, and
 $\tilde{x}$ be located in the vicinity of the contour $\Upsilon_+$, more precisely 
$|\arg \upsilon_+(\tilde{x})| < 3 \phi$, with a small positive value~$\phi$. 
Then $0 \leqslant \arg \upsilon_+^{1/3}(\tilde{x}) < \phi$  if $\arg \upsilon_+(\tilde{x}) \geqslant 0$,
and $\frac{2}{3}\pi - \phi < \arg \upsilon_+^{1/3}(\tilde{x}) < \frac{2}{3}\pi$ if $\upsilon_+(\tilde{x}) < 0$.
Therefore, $\Upsilon_+$ is the contour of discontinuity of  $\upsilon_+^{1/3}$.

Now, we find how the three values $q_{+,a}$ of $\upsilon_+^{1/3}$ are connected over $\Upsilon_+$.
Let $U(x_0; \delta)$ be a disc of radius $\delta$ with the center at $x_0 \in \Upsilon_+$.
The contour $\Upsilon_+$ divides the disc into two parts: $U_+$ where $\arg \upsilon_+(x) \,{\geqslant}\, 0$,  and  
$U_-$ where $\arg \upsilon_+(x) \,{<}\, 0$. 
Let $3 \phi_+ \,{=}\, \max_{\tilde{x} \in U_+} \arg \upsilon_+(\tilde{x})$, 
then the range of $\arg \upsilon_+^{1/3}$ on $U_+$ is $[0,\phi_+)$.
Let ${-}3 \phi_- \,{=}\, \min_{\tilde{x} \in U_-} \arg \upsilon_+(\tilde{x})$,
then the range of $\arg \upsilon_+^{1/3}$ on $U_-$ is $(\frac{2}{3}\pi - \phi_-,\frac{2}{3}\pi)$.
The analytic continuation of $\upsilon_+^{1/3}(x) = q_{+,1}(x)$ from $U_+$ to  $U_-$ is given by
 $\rme^{-\frac{2}{3} \imath \pi}  \upsilon_+^{1/3}(x) = q_{+,3}(x)$, since the range of
 $\arg q_{+,3}$ on $U_-$ is $(-\phi_-,0)$, and so continuously connects to $q_{+,1}$ with
 the range $[0,\phi_+)$ on $U_+$.
Similarly, the analytic continuation of $\upsilon_+^{1/3}(x) = q_{+,1}(x)$
from $U_-$ to  $U_+$ is given by  $\rme^{\frac{2}{3} \imath \pi}  \upsilon_+^{1/3}(x) = q_{+,2}(x)$. 
Therefore, along a path  from $U_-$ to  $U_+$,
the function $q_{+,1}$ continuously connects to $q_{+,2}$,  then $q_{+,2}$ connects to $q_{+,3}$, and
$q_{+,3}$ connects to $q_{+,1}$.

The same is true for $\textsf{s} =-1$.
\end{proof}

 \begin{theo}\label{T:CRDeltaDscont}
Let $\sqrt{\Delta}$ be defined by \eqref{SqrtDef}. Then among three values of $y$, given by \eqref{TrigSolP}, or
\eqref{TrigSolM},
two have  discontinuity over the contour  $\Gamma = \{x \mid \arg \Delta(x) = 0\}$.
 If $\Gamma_i$ is a segment of $\Gamma$ which starts at a branch point $B_i = (e_i,h_i)$, 
 where $h_i = y_a(e_i) = y_b(e_i)$, 
 then $y_a$, $y_b$ are  discontinuous over $\Gamma_i$, and
$y_a$ serves as the analytic continuation of $y_b$ on the other side of $\Gamma_i$, and vice versa. 
\end{theo}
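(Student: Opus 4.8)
The plan is to trace the discontinuity of $\sqrt{\Delta}$ through the formula for the roots $y_{\textsf{s},a}$ and to identify precisely which of the three roots are affected. First I would recall from Theorem~\ref{T:DeltaDscont} (or rather from definition~\eqref{SqrtDef}) that $\sqrt{\Delta}$ jumps exactly across $\Gamma = \{x \mid \arg \Delta(x) = 0\}$, changing sign there. The key observation is that passing from $y_{+,a}$ to $y_{-,a}$ amounts to replacing $\sqrt{\Delta}$ by $-\sqrt{\Delta}$ in the definition of $\upsilon_\pm$, so $\upsilon_+ \leftrightarrow \upsilon_-$ across $\Gamma$; consequently $q_{+,a} \leftrightarrow q_{-,b}$ for an appropriate matching $a\mapsto b$ of cube roots. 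I would therefore work on one side of a segment $\Gamma_i$, write down the three values $y_{+,1},y_{+,2},y_{+,3}$, cross $\Gamma_i$, and check which limiting values coincide with which of $y_{-,1},y_{-,2},y_{-,3}$ computed on the far side.

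The next step is the local analysis at the branch point $B_i=(e_i,h_i)$ itself. By hypothesis the ramification index of $B_i$ is $2$, so exactly two of the three sheets come together there: there is a unique pair of indices, say $a,b$, with $h_i = y_a(e_i) = y_b(e_i)$, while the third root takes a different value. Since $\Delta(e_i)=0$, at $e_i$ one has $\upsilon_+(e_i)=\upsilon_-(e_i)=\tfrac12\mathcal P(e_i)$, and the two colliding roots are precisely the branches of $q$ whose cube-root values merge as $x\to e_i$ along $\Gamma_i$. On the segment $\Gamma_i$ we have $\arg\Delta(x)=0$, i.e. $\Delta(x)>0$, so $\sqrt{\Delta(x)}$ is the positive real root and $\upsilon_\pm(x)$ are the two real values $\tfrac12(\mathcal P \pm\sqrt\Delta)$; immediately off $\Gamma_i$ on the two sides these two reals trade places under $\sqrt\Delta \mapsto -\sqrt\Delta$. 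Tracking the cube roots through definition~\eqref{CubicRootDef} and comparing with the behaviour established in Theorem~\ref{T:UpsilonDscont}, one sees that exactly the two roots $y_a, y_b$ attached to $B_i$ are exchanged across $\Gamma_i$, while the third root is continuous there.

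Finally I would assemble the pieces: on one side of $\Gamma_i$ label the colliding roots $y_a$ and $y_b$ by requiring $y_a\to h_i$ and $y_b\to h_i$ as $x\to e_i$; the computation above shows $\lim_{x\to\Gamma_i^-} y_a(x) = \lim_{x\to\Gamma_i^+} y_b(x)$ and vice versa, which is exactly the claim that $y_a$ is the analytic continuation of $y_b$ across $\Gamma_i$ and conversely. The remaining (non-colliding) root is single-valued in a neighbourhood of $\Gamma_i$ because its defining expression involves $\sqrt\Delta$ only through a combination that is invariant under $\sqrt\Delta\mapsto-\sqrt\Delta$ to the relevant order, or more directly because by ramification index $2$ it is unbranched at $e_i$.

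The main obstacle I expect is the bookkeeping of cube roots: I must be careful that the branch choice in \eqref{CubicRootDef} for $\upsilon_+^{1/3}$ versus $\upsilon_-^{1/3}$ is tracked consistently, since the contour $\Upsilon_+$ of discontinuity of $\upsilon_+^{1/3}$ need not coincide with $\Gamma$, and one must separate the spurious jumps across $\Upsilon_\pm$ (which do not affect the \emph{set} $\{y_{\textsf{s},1},y_{\textsf{s},2},y_{\textsf{s},3}\}$, only its labelling) from the genuine branching across $\Gamma$. Pinning down the correct index matching $a\mapsto b$ near $e_i$ — i.e. which two of the three cube-root rays actually merge — is the delicate point; everything else is continuity of elementary functions.
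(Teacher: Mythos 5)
Your overall strategy --- trace the sign flip of $\sqrt{\Delta}$ across $\Gamma$ into the swap $\upsilon_+\leftrightarrow\upsilon_-$, push it through the cube roots, and use the ramification data at $e_i$ to decide which two roots are affected --- is the right one and close in spirit to the paper's. But the step you yourself flag as delicate is exactly where the proof lives, and your proposal does not supply it. Writing that ``tracking the cube roots \ldots one sees that exactly the two roots $y_a,y_b$ are exchanged while the third is continuous'' states the conclusion rather than proving it; and the claim that the continuous root's ``defining expression involves $\sqrt{\Delta}$ only through a combination that is invariant under $\sqrt{\Delta}\mapsto-\sqrt{\Delta}$ to the relevant order'' is not true of any single formula $y_{\textsf{s},a}$ as written, since each is built from $\upsilon_{\textsf{s}}^{1/3}$ and therefore jumps when $\upsilon_+$ and $\upsilon_-$ trade places. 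The paper closes this gap with an algebraic identity you never invoke: from $\upsilon_+\upsilon_-=\tfrac14(\mathcal{P}^2-\Delta)=\tfrac{1}{27}\mathcal{Q}^3$ one gets $\upsilon_+^{1/3}(x)\,\upsilon_-^{1/3}(x)=\tfrac13\,\rme^{2n\imath\pi/3}\mathcal{Q}(x)$ with $n\in\{0,1,2\}$ fixed locally, and substituting this into \eqref{TrigSol} yields three pointwise identities $y_{+,a}=y_{-,\sigma(a)}$ with $\sigma$ a transposition depending on $n$. Combined with the fact (inherited from Theorem~\ref{T:DeltaDscont}) that the analytic continuation of $y_{+,a}$ across $\Gamma$ is $y_{-,a}$, the root with $\sigma(c)=c$ is continuous, the other two are exchanged, and $\Delta(e_i)=0$ then forces the exchanged pair to be precisely the pair colliding at $B_i$. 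Either this identity, or a fully worked-out monodromy argument (in a disc around $e_i$ avoiding $\Upsilon_\pm$ the only discontinuity locus is $\Gamma_i$, so a single crossing of $\Gamma_i$ must realize the local monodromy transposition at $e_i$), is the missing ingredient; you gesture at the second option via ``unbranched at $e_i$'' but do not carry it out.

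A smaller inaccuracy: on $\Gamma_i$ the condition $\arg\Delta=0$ makes $\Delta$ and $\sqrt{\Delta}$ real and positive, but $\mathcal{P}(x)$ is in general complex there, so $\upsilon_\pm=\tfrac12(\mathcal{P}\pm\sqrt{\Delta})$ are not ``two real values.'' Nothing in the argument needs them to be real --- only that they swap under $\sqrt{\Delta}\mapsto-\sqrt{\Delta}$ --- but as stated the sentence is false.
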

\begin{proof}
According to  Theorem~\ref{T:DeltaDscont}, the both functions 
$\upsilon_+$, $\upsilon_-$, defined in \eqref{TrigSol},
have discontinuity over the contour $\Gamma = \{x \mid \arg \Delta(x) = 0\}$,
and serve as analytic continuations of each other. This implies that all $q_{\textsf{s},a}$ have discontinuity over $\Gamma$,
and  for every $a$ the analytic continuation of $q_{+,a}$ is given by $q_{-,a}$, and vice versa.
We assume that $y_{+,a}$, $a=1$, $2$, $3$, have discontinuity over $\Gamma$,
which follows from the same property of $q_{+,a}$.

Recall the relation 
$\upsilon^{1/3}_+(x) \upsilon^{1/3}_-(x) = \frac{1}{3} \rme^{2 n \imath \pi /3} \mathcal{Q}(x)$,
where $n=0$, $1$, or $2$ such that $\arg \upsilon_+(x) + \arg \upsilon_-(x) = 3 \arg \mathcal{Q}(x) + 2\pi n$ holds
in the vicinity of~$x$. This implies  three equalities of the form $y_{+,a_1}(x) = y_{-,a_2}(x)$,
where $a_1 \mapsto a_2$ according to one of the transpositions: $(12)$, $(13)$, or $(23)$.
Indeed, 

\begin{tabular}{llll}
$n=0$&  $y_{+,1}(x) = y_{-,1}(x)$, & $y_{+,2}(x) = y_{-,3}(x)$, & $y_{+,3}(x) = y_{-,2}(x)$; \\
$n=1$&  $y_{+,1}(x) = y_{-,3}(x)$, & $y_{+,2}(x) = y_{-,2}(x)$, & $y_{+,3}(x) = y_{-,1}(x)$; \\
$n=2$&  $y_{+,1}(x) = y_{-,2}(x)$, & $y_{+,2}(x) = y_{-,1}(x)$, & $y_{+,3}(x) = y_{-,3}(x)$.
\end{tabular}

\noindent
Taking into account, that over $\Gamma$ 
at every $a$  the analytic continuation of $y_{+,a}$ is given by $y_{-,a}$,
we see that among the three values of $y$ given by \eqref{TrigSolP}, or \eqref{TrigSolM}, 
one remains continuous, and 
the other two serve as  analytic continuations of each other.
Indeed, let  $n=0$ in the vicinity of $x$, then $y_{+,1}(x) = y_{-,1}(x)$. 
If a segment of $\Gamma$ is located in the vicinity of $x$, then $y_{-,1}$ serves as the
analytic continuation of $y_{+,1}$ on the other side of  the segment, and so $y_{+,1}$ remains continuous.
At the same time,  $y_{-,2}(x) = y_{+,3}(x)$ serves as the analytic continuation of 
$y_{+,2}(x) = y_{-,3}(x)$, 
as follows from Theorem~\ref{T:DeltaDscont}.

The contour $\Gamma$ consists of segments $\Gamma_i$, each starts at $e_i$ such that $B_i = (e_i,h_i)$
is a branch point, and ends at infinity. Let $x_0 \in \Gamma_i$ be located in the  vicinity of $e_i$
which does not contain $\Upsilon_+$, and  $h_i = y_{+,a}(e_i) = y_{+,b}(e_i)$.
Let $U(x_0,\delta)$ be a disc of radius $\delta$  centered at $x_0$, such that $|x_0-e_i |\geqslant \delta$.
Then $U(x_0,\delta)$ is divided by $\Gamma_i$ into two parts: $U_+$ where $\arg \Delta(x) \geqslant 0$,
and $U_-$ where $\arg \Delta(x) < 0$. There exists such $c$ that $y_{+,c}(x) = y_{-,c}(x)$ for every $x\in U(x_0,\delta)$,
and so $y_{+,c}$ is continuous over $U(x_0,\delta)$. Then $a$, $b$  are the other two values from $\{1,2,3\}$,
since  $y_{+,a}(x) = y_{-,b}(x)$ and  $y_{+,b}(x) = y_{-,a}(x)$ over $U(x_0,\delta)$. Indeed, due to $\Delta(e_i)=0$
we have $y_{+,a}(e_i) = y_{+,b}(e_i)$.

Similar considerations can be made in the case of $\textsf{s}=-1$.
\end{proof}

In what follows, we work with solutions $y_{+,a}$, $a=1$, $2$, $3$, computed by the formula \eqref{TrigSolP},
and denote them by $y_{a}$.

\subsection{Monodromy path}\label{ss:PathTC}
In order to construct a model of the Riemann surface, a monodromy path $\gamma$, 
which is a continuous path through all branch points, should be chosen.
Based on the investigation presented in subsection~\ref{ss:RiemSurfTC}, 
the following algorithm is suggested.
\begin{enumerate}
\renewcommand{\labelenumi}{\arabic{enumi}.}
\item  Find all finite branch points $\{B_i=(e_i,h_i)\}_{i=1}^N$,
and sort ascendingly first by $\Ren e_i$, then by $\Imn e_i$.
The points are enumerated according to this order.
Each point $B_i$ is labeled by '$a$--$b$', such that $h_i = y_{a}(e_i) = y_{b}(e_i)$,
which means that  solutions $y_{a}$ and $y_{b}$ connect over the
segment $\Gamma_i$, which goes from $e_i$ to infinity.

\item
According to this order, a continuous path $\gamma$ on the Riemann sphere through all $e_i$ is constructed from 
 straight line segments $[e_i,e_{i+1}]$, $i=1$, \ldots, $N$.
Then the segment $(-\infty, e_1]$ is added at the beginning
of the polygonal path~$\gamma$, and $[e_N,\infty)$ at the end.

\item Plot the contour $\Gamma = \{x \mid \arg \Delta(x) = 0\}$, see blue contours on fig.\,\ref{f:C34Path},
and the contour $\Upsilon_+ = \{x \mid \arg \upsilon_+(x) = 0\}$, see green contours on fig.\,\ref{f:C34Path}. 
Identify a permutation which corresponds to each segment of $\Gamma_i$, and $\Upsilon_+$.
When the path $\gamma$ crosses $\Gamma_i$, solutions $y_{a}$ and $y_{b}$ interchange.
When $\gamma$ crosses $\Upsilon_+$, solutions $y_{a}$, $y_{b}$, $y_{c}$ permute in the 
corresponding cycle.

\item
Moving along the monodromy path $\gamma$ from the left to the right, passing $e_i$ and cuts from the below,
counter-clockwise, and taking into account
intersections with  $\Gamma$ and $\Upsilon_+$, find the sequence of changes 
of solutions $y_{a}$, $a=1$, $2$, $3$, 
starting from $1$, $2$, and $3$, respectively:
$$
 \begin{array}{ll}
\text{Sheet \textsf{a}: } & 
\{\textsf{a}_{0,1}=1\}\cup \{\textsf{a}_{i,i+1}\}_{i=1}^{N-1} \cup \{\textsf{a}_{N,0}\}, \\
\text{Sheet \textsf{b}: } & 
\{\textsf{b}_{0,1}=2\}\cup \{\textsf{b}_{i,i+1}\}_{i=1}^{N-1}\cup \{\textsf{b}_{N,0}\}, \\
\text{Sheet \textsf{c}: } & 
\{\textsf{c}_{0,1}=3\}\cup \{\textsf{c}_{i,i+1}\}_{i=1}^{N-1} \cup \{\textsf{c}_{N,0}\}. \\
\end{array}
$$
Index $0$ stands for infinity. 
Each sequence marks a sheet of the Riemann surface.
\end{enumerate}

The monodromy path $\gamma$ lifted to each sheet, and closed by a semi-circle around infinity in the
counter-clockwise direction, is homotopic to zero.

\subsection{Homology}\label{ss:HomTC}

Note, that $N=2(g+1)$ on a canonical trigonal curve.
Therefore, finite branch points split into $g+1$ pairs. 
Without loss of generality, $g$ cuts are made 
between branch points in pairs  $B_{2i}, B_{2i+1}$, $i=1$, \ldots, $g$.
Each cut goes through finite points, on the two sheets connected by branch points in a pair. 
One more cut is made from $B_1$ to $B_N$ through infinity.

Let $\mathfrak{a}_i$ encircle the  cut $(B_{2i}, B_{2i+1})$ counter-clockwise.
Let $\mathfrak{b}_i$ emerge  from the cut $(B_1, \infty)\cup (\infty,B_N)$, 
and enter the cut encircled by~$\mathfrak{a}_i$. 
In this way a canonical homology basis is obtained.

\subsection{Cohomology}\label{ss:coHomTC}

First kind differentials are constructed with the help of the first $g$ monomials 
from the ordered list $\mathfrak{M}$, namely:
\begin{align}\label{Dif1TC}
\rmd u_{\mathfrak{w}_i} = \frac{\mathfrak{m}_{g-i+1}(x,y) \, \rmd x}{\partial_y f(x,y)},\quad
i = 1,\dots, g,
\end{align}
where $\mathfrak{m}_{k}$ is the $k$-th element of $\mathfrak{M}$,
and  $\mathfrak{w}_i$ are elements of the  gap sequence $\mathfrak{W}$.

A second kind differential $\rmd r_{\mathfrak{w}_i}$
is constructed  with the help of the first $g + i$ monomials from $\mathfrak{M}$. Namely
\begin{align}\label{Dif2TC}
\rmd r_{\mathfrak{w}_i} = \Big(\sum_{j=1}^{g+i} c_{i,j} \mathfrak{m}_{j}(x,y) \Big) \frac{\rmd x}{\partial_y f(x,y)},\quad
i = 1,\dots, g.
\end{align}
The relation \eqref{urRel} defines the coefficients of monomials $\mathfrak{m}_{g+i}$, $i\geqslant 1$.
Coefficients of the remaining part of the sum in \eqref{Dif2TC} are also essential. 
In the case of trigonal curves, the second kind differentials associated to 
the standard first kind differentials \eqref{Dif1TC}
are obtained in \cite{bel00}, by means of the Klein formula.

\subsection{Computation of periods}
Next, first kind integrals on each segment along the polygonal monodromy path $\gamma$ are computed:
\begin{subequations}\label{AintTC}
\begin{align}
&\mathcal{A}_{i,i+1}^{[\textsf{n}_{i,i+1}]} 
= \int_{e_{i}}^{e_{i+1}} \rmd u^{[\textsf{n}_{i,i+1}]} ,\qquad i=1,\, \dots,\, N-1,\\
&\mathcal{A}^{[\textsf{n}_{0,1}]}_{0,1} = \int_{-\infty}^{e_1} \rmd u^{[\textsf{n}_{0,1}]},\qquad \qquad
\mathcal{A}^{[\textsf{n}_{N,0}]}_{N,0} = \int_{e_N}^{\infty} \rmd u^{[\textsf{n}_{N,0}]},
\end{align}
\end{subequations} 
where $\textsf{n} = \textsf{a}$, $ \textsf{b}$, or $ \textsf{c}$, depending on the sheet.
The integrand of $\mathcal{A}_{i,j}^{[\textsf{n}_{i,j}]} $ 
is defined by \eqref{Dif1TC}  with $y = y_{\textsf{n}_{i,j}}(x)$.
A lift of the monodromy path $\gamma$ to each sheet, closed by a semi-circle around infinity
in the counter-clockwise direction,
is homotopic to zero, an so  relations hold
\begin{gather}\label{TrigInvRels}
\mathcal{A}^{[\textsf{n}_{0,1}]}_{0,1} + 
\sum_{i=1}^{N-1} \mathcal{A}_{i,i+1}^{[\textsf{n}_{i,i+1}]}  
 + \mathcal{A}^{[\textsf{n}_{N,0}]}_{N,0}  = 0, \qquad
 \textsf{n} = \textsf{a},\, \textsf{b},\, \textsf{c},
\end{gather}
which serve for verification. 

With a chosen basis of canonical cycles,  first kind period matrices $\omega$, and $\omega'$
are computed by \eqref{omegaM}, and the normalized period matrix is obtained by
$$\tau = \omega^{-1} \omega'.$$
The latter is required  to be symmetric with a positive imaginary part.

With the same basis of canonical cycles,  second kind period matrices $\eta$, and $\eta'$
are computed by \eqref{etaM} with differentials \eqref{Dif2TC}.
Then the symmetric matrix $\varkappa$ from the definition of
$\sigma$-function is obtained by
$$\varkappa =  \eta \omega^{-1}.$$
First and second kind period matrices satisfy the Legendre relation \eqref{LegRel}.

\subsection{Example 3: $(3,4)$-curve}
We consider the simplest trigonal curve in its canonical form
\begin{gather}\label{C34Eq}
0 = f(x,y)= - y^3 + x^4 + \lambda_2y x^2 + \lambda_{3} x^3 + \lambda_{5} y x   + \lambda_{6} x^2 +
\lambda_{8} y + \lambda_{9} x + \lambda_{12}.
\end{gather}
From $\Delta(x)=0$ we find $x$-coordinates $e_i$ of branch points $B_i = (e_i,h_i)$. Then using \eqref{TrigSolP},
we find the corresponding values $y_{a}(e_i)$, $a=1$, $2$, $3$, two of which, say $a$ and $b$, coincide and give $h_i$.
So each branch point is labeled by `$a\text{--}b$', that indicates which  solutions connect in the vicinity of $B_i$.

A $(3,4)$-curve possesses the gap sequence $\mathfrak{W} = \{1,\,2,\,5\}$,
and the first kind differentials have the form
$$ \rmd u = \begin{pmatrix} \rmd u_1 \\ \rmd u_2 \\ \rmd u_5 \end{pmatrix} 
= \begin{pmatrix} y \\ x \\ 1  \end{pmatrix} 
\frac{\rmd x}{-3 y^2+\mathcal{Q}(x)}. $$
Let $\mathcal{A}^{[a]}_{i,j}$ denote a first kind integral between  $e_i$ and $e_j$
computed with   $y = y_a(x)$:
\begin{gather}
\mathcal{A}^{[a]}_{i,j} = \int_{e_i}^{e_j} \rmd u^{[a]}.
\end{gather} 
Second kind differentials associated with the first ones on the curve \eqref{C34Eq} are defined as follows
$$ \rmd r = \begin{pmatrix} \rmd r_1 \\ \rmd r_2 \\ \rmd r_5 \end{pmatrix} 
= \begin{pmatrix} x^2 \\ 2 x y \\  R_{5} \end{pmatrix} 
\frac{\rmd x}{-3 y^2+\mathcal{Q}(x)}, $$
\begin{align*}
&R_{5} = 5x^2 y + 3 \lambda_3 x y  + \tfrac{2}{3}\lambda_2^2 x^2 
+ \lambda_6 y + \tfrac{2}{3} \lambda_2 \lambda_5 x.
 \end{align*}
A second kind integral $\mathcal{B}^{[a]}_{i,j}$ between $e_i$ and $e_j$ with  $y = y_a(x)$
is computed by
\begin{gather}
\mathcal{B}^{[a]}_{i,j} = \int_{B_i}^{B_j} \rmd r^{[a]}.
\end{gather} 

\bigskip
As an example, we consider a curve defined by the equation
\begin{gather}\label{C34Ex}
0 = f(x,y) \equiv -y^3 + x^4 + y \big(4x^2 + 5x +11) + 3x^3 + 7 x^2 + 16 x + 9.
\end{gather}
The curve has eight finite branch points:
\begin{align*}
&e_1 \approx -4.58931,& &h_1 \approx -4.9092& & 2\text{--}3& \\
&e_2 \approx -1.17922 - 0.934455 \imath,& &h_2 \approx 1.60505 + 0.430221 \imath& &1\text{--}3& \\
&e_3 \approx -1.17922 + 0.934455 \imath,& &h_3 \approx 1.60505 - 0.430221 \imath& &1\text{--}3& \\
&e_4 \approx -0.431732 - 2.20256 \imath,& &h_4 \approx 0.309255 - 1.83532 \imath& &2\text{--}3&\\
&e_5 \approx -0.431732 + 2.20256 \imath,& &h_5 \approx 0.309255 + 1.83532 \imath& &1\text{--}2&\\
&e_6 \approx 0.499118 - 1.57527 \imath,& &h_6 \approx -1.80047 + 1.31135  \imath& &1\text{--}2& \\
&e_7 \approx 0.499118 + 1.57527 \imath,& &h_7 \approx -1.80047 - 1.31135  \imath& &2\text{--}3&  \\
&e_8 \approx 0.812986,& &h_8 \approx -2.42959& &2\text{--}3&
\end{align*}
In the last column the corresponding pairs `$a\text{--}b$' are indicated.

Fig.~\ref{f:C34Path}(a) displays positions of  $e_i$,
 and the contours where solutions $y_a$, $a=1$,\,$2$,\,$3$, have discontinuity.
\begin{figure}[h]
\includegraphics[width=0.32\textwidth]{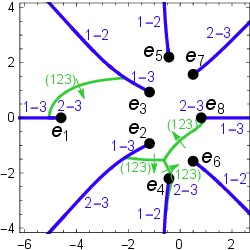}\ 
\includegraphics[width=0.32\textwidth]{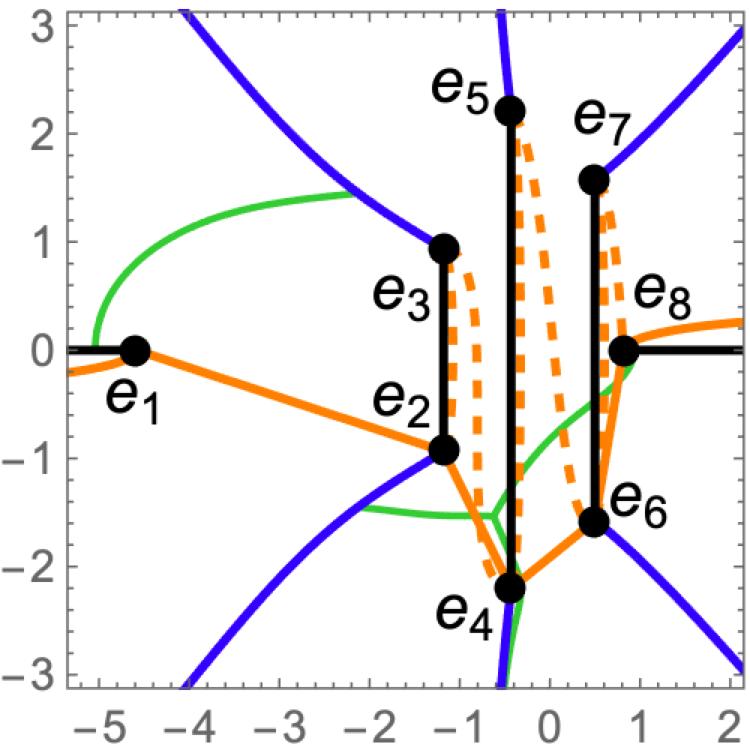}\ 
\includegraphics[width=0.32\textwidth]{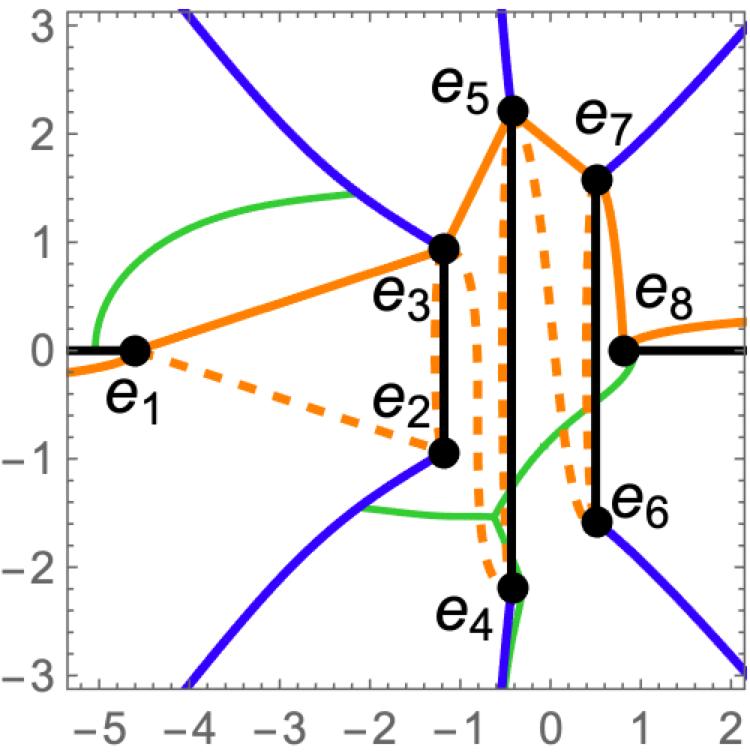}\\
\parbox[b]{0.32\textwidth}{\centering (a) Permutations of \\ $\quad$ solutions
$y_a$, $a=1$,\,$2$,\,$3$}\
\parbox[b]{0.32\textwidth}{\centering (b) Paths $\gamma_{\text{L}}$ (dashed) \\
and $\widetilde{\gamma}_{\text{L}}$ (solid)}\
\parbox[b]{0.32\textwidth}{\centering (c)  Paths $\gamma_{\text{U}}$  (dashed) \\
and $\widetilde{\gamma}_{\text{U}}$ (solid)} 
\caption{Contours $\Gamma$ (blue), $\Upsilon_+$ (green), cuts (black), and
 a path (orange).}\label{f:C34Path}
\end{figure}
The contour $\Upsilon_+ = \{x \mid \arg \upsilon_+(x) = 0\}$ is
marked in green. Over  $\Upsilon_+$ all three solutions connect in pairs.
Each segment of $\Upsilon_+$ is labeled by the cyclic permutation $(123)$,
and an arrow shows in which direction this permutation occurs. 

The contour $\Gamma = \{x \mid \arg \Delta(x) = 0\}$ is
marked in blue. 
Each segment $\Gamma_i$ of $\Gamma$ starts at $e_i$
and ends at infinity. Let $h_i = y_{a}(e_i) =  y_{b}(e_i)$,
then $\Gamma_i$ is labeled by `$a\text{--}b$' in the vicinity of $e_i$.
If $\Upsilon_+$ intersects $\Gamma_i$ at a point $d_i$, then the segment of  $\Gamma_i$
between $d_i$ and infinity  
is labeled by `$b\text{--}c$' such that $y_{b}(x) =  y_{c}(x)$ for all $x$ on this segment.
Fig.\,\ref{f:C34Path}(a) is in accordance  with the
density plots of $\arg y_{a}$, $a=1$, $2$, $3$, shown on fig.\,\ref{f:C34Dens}.
\begin{figure}[h]
\parbox[b]{0.305\textwidth}{\includegraphics[width=0.305\textwidth]{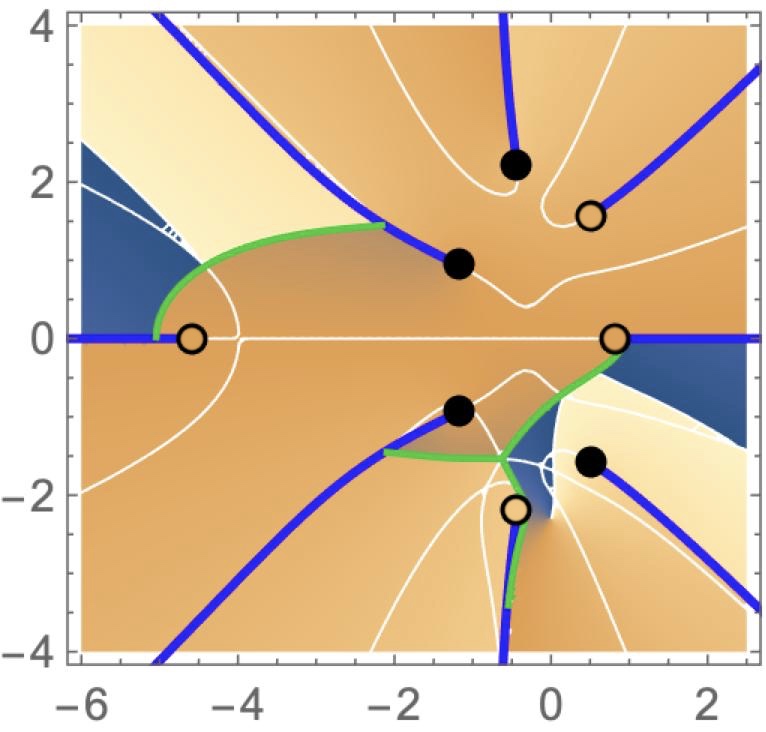}}
\parbox[b]{0.305\textwidth}{\includegraphics[width=0.305\textwidth]{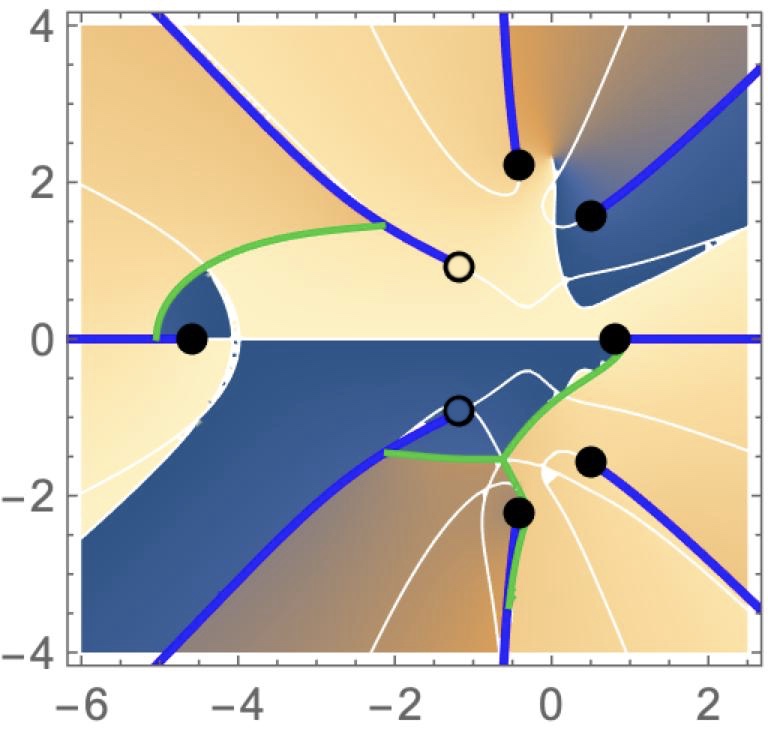}}
\parbox[b]{0.305\textwidth}{\includegraphics[width=0.305\textwidth]{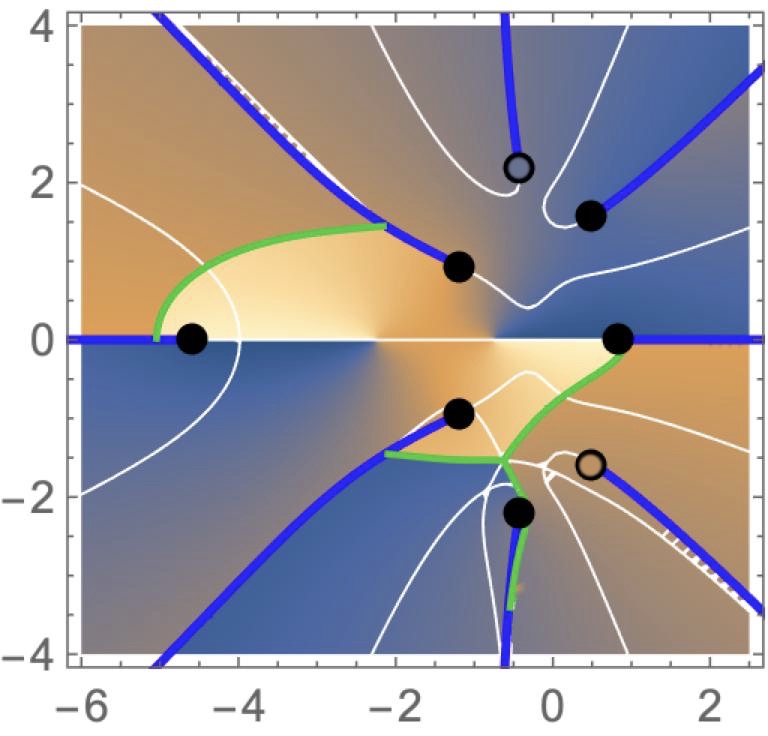}}
\parbox[b]{0.05\textwidth}{\includegraphics[width=0.05\textwidth]{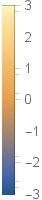} \\
} \\
$\phantom{mmm}  a=1\phantom{mmmmmmmmmmm} a=2 \phantom{mmmmmmmmmmm}  a=3 
\phantom{mmm} $
\caption{Density plot of $\arg y_{a}$,  contours $\Gamma$ (blue), 
$\Upsilon_+$ (green).}\label{f:C34Dens}
\end{figure}

Cuts connect pairs of points: $B_2$ and $B_3$ on  sheets with solutions $y_{1}$ and $y_{3}$;
$B_4$ and $B_5$ on  sheets with  solutions $y_{1}$ and $y_{2}$ in the vicinity of $B_5$
which change consequently into the pair $1$--$3$ and then $2$--$3$ when approaching $B_4$;
 $B_6$ and $B_7$ on sheets with  solutions $y_{2}$ and $y_{3}$ in the vicinity
of $B_7$ which change into the pair $1$--$2$ when approaching $B_6$.
One more cut goes from  $B_8$ to infinity, and then to $B_1$.
In the vicinity of $B_8$  solutions $y_2$ and $y_3$ connect, which change into
the pair $1$--$3$ in the vicinity of infinity, and then into $2$--$3$
in the vicinity of $B_1$.

Let the monodromy path $\gamma_{\text{L}}$ through all branch points be 
\begin{gather*}
(-\infty,e_1] \cup [e_1,e_2] \cup [e_2,e_3] \cup [e_3,e_4] \cup [e_4,e_5] \\
\phantom{mmmmm}\cup [e_5,e_6] \cup [e_6,e_7] \cup [e_7,e_8]
\cup [e_8,\infty),
\end{gather*}
as shown on fig.\,\ref{f:C34Path}(b) in orange. 
The path is polygonal, and goes below points $e_i$ and cuts, 
and so rounds cuts and branch points counter-clockwise.
The path $\gamma_{\text{L}}$ is marked by dashed lines between $e_2$ and $e_8$, and
continuously deformed into a simpler path $\widetilde{\gamma}_{\text{L}}$,
which is marked by solid lines, and goes below points $e_2$, $e_4$, $e_6$.

Sheets are marked according to the monodromy path $\gamma_{\text{L}}$, namely 
\begin{equation}\label{LPathSeq}
 \begin{array}{p{0.9cm}cccccccc p{0.9cm}}
&\ \{\textsf{a}_{0,1}, & \textsf{a}_{1,2}, & \textsf{a}_{2,3}, &
\textsf{a}_{3,4}, & \textsf{a}_{4,5}, &  \textsf{a}_{5,6}, &  \textsf{a}_{6,7}, &
 \textsf{a}_{7,8}, &  $\textsf{a}_{8,0} \} =$\\
\text{Sheet \textsf{a}:} & \{\ \ 1, \ \ \ & 1, & 3,  & 3\text{--}1, & 
1\text{--}2\text{--}3,& 3\text{--}2,& 1\text{--}2, & 2, & $3\ \ \ \},$ \\
&\ \{\textsf{b}_{0,1}, & \textsf{b}_{1,2}, & \textsf{b}_{2,3}, &
\textsf{b}_{3,4}, & \textsf{b}_{4,5}, &  \textsf{b}_{5,6}, &  \textsf{b}_{6,7}, &
 \textsf{b}_{7,8}, & $ \textsf{b}_{8,0} \} =$\\
\text{Sheet \textsf{b}:}  & \{\ \ 2, \ \ \ & 2, & 2, & 2\text{--}3, 
& 2\text{--}3\text{--}1,& 1\text{--}3,& 3\text{--}1, & 1, & $1\ \ \ \},$ \\
&\ \{\textsf{c}_{0,1}, & \textsf{c}_{1,2}, & \textsf{c}_{2,3}, &
\textsf{c}_{3,4}, & \textsf{c}_{4,5}, &  \textsf{c}_{5,6}, &  \textsf{c}_{6,7}, &
 \textsf{c}_{7,8}, &  $\textsf{c}_{8,0} \} =$ \\
\text{Sheet \textsf{c}:} & \{\ \ 3, \ \ \ & 3, & 1, & 1\text{--}2, & 
3\text{--}1\text{--}2,& 2\text{--}1,& 2\text{--}3, & 3, & $2\ \ \ \}.$
\end{array}
\end{equation}
Connection of solutions $y_1$, $y_2$, $y_3$ 
on each sheet is shown on   fig.\,\ref{f:C34Sheets}.
\begin{figure}[h]
\parbox[b]{0.3\textwidth}{\includegraphics[width=0.3\textwidth]{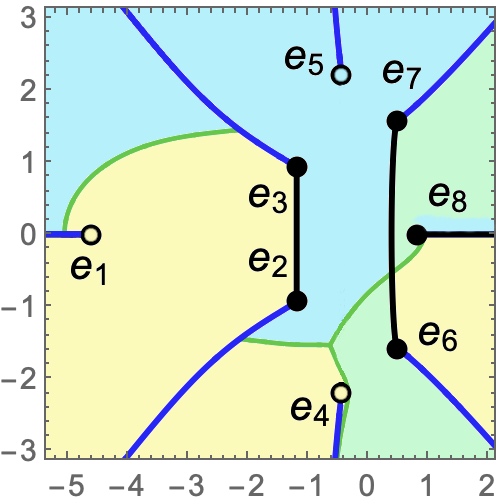}}
\parbox[b]{0.3\textwidth}{\includegraphics[width=0.3\textwidth]{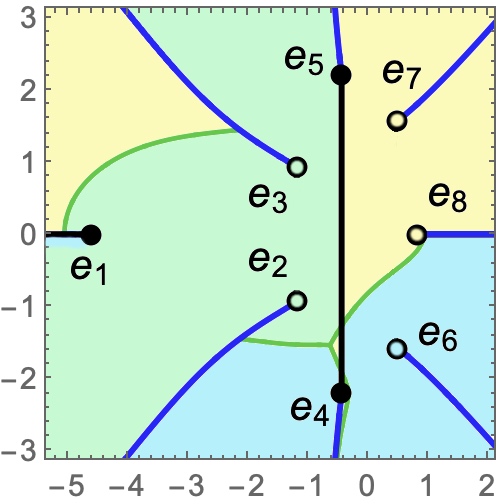}}
\parbox[b]{0.3\textwidth}{\includegraphics[width=0.3\textwidth]{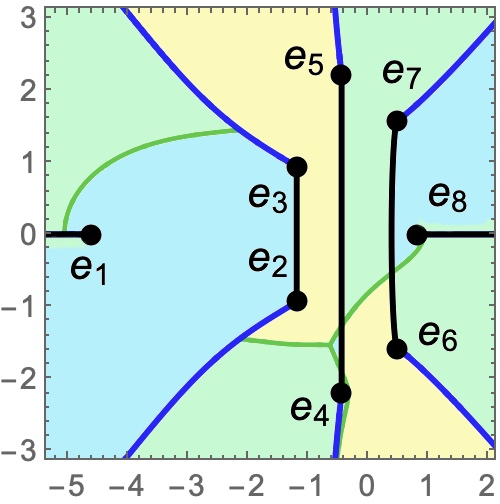}}
\parbox[b]{0.07\textwidth}{\includegraphics[width=0.07\textwidth]{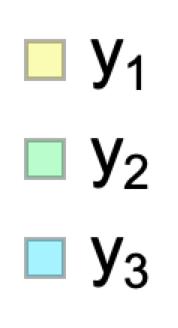}\\
$\quad$\\ $\quad$\\ $\quad$} \\
\parbox[b]{0.3\textwidth}{\centering Sheet \textsf{a}}
\parbox[b]{0.3\textwidth}{\centering Sheet \textsf{b}}
\parbox[b]{0.3\textwidth}{\centering Sheet \textsf{c}} \ \ \
\caption{Connection of solutions $y_1$, $y_2$, $y_3$ on each sheet.}\label{f:C34Sheets}
\end{figure}

On fig.\,\ref{f:C34Path}(c) an alternative monodromy path $\gamma_{\text{U}}$  is presented,
marked by dashed lines between $e_1$ and $e_7$.
The path $\gamma_{\text{U}}$ goes from the right to the left
above points $e_i$ and cuts, which are rounded counter-clockwise.
$\gamma_{\text{U}}$  is continuously deformed into a simpler path $\widetilde{\gamma}_{\text{U}}$,
which is marked by solid lines, and goes above points $e_3$, $e_5$, $e_7$. 

One can see that the cuts $(B_2,B_3)$, and  $(B_6,B_7)$ connect 
Sheets\;\textsf{a} and \textsf{c}, and the cut  $(B_4,B_5)$ connects Sheets\;\textsf{b} and \textsf{c}.
The cut $(B_8,\infty) \cup (\infty,B_1)$ connects Sheets\;\textsf{a} and \textsf{c} on segment $[e_8,\infty)$,
and Sheets\;\textsf{b} and \textsf{c}  on  $(\infty,e_1]$. 

Moving along the path $\widetilde{\gamma}_{\text{L}}$ from the left to the right,
rounding infinity clockwise, and then 
moving along the path $\widetilde{\gamma}_{\text{U}}$ from the right to the left,
a circle $\gamma_{\infty}$ around infinity on the Riemann sphere is drawn. 
Then $\gamma_{\infty}$ is lifted to the curve.

Let the lift of $\gamma_{\infty}$ start at $B_1$ on Sheet\;\textsf{c},  and reach $B_8$ 
along $\tilde{\gamma}_{\text{L}}^{\textsf{c}}$,
the lift of~$\widetilde{\gamma}_{\text{L}}$ to Sheet\;\textsf{c}. 
The analytic continuation of $y_3$ on Sheet\;\textsf{c} is $y_2$ on Sheet\;\textsf{a}
on the other side of $\Gamma_8$ in the vicinity of $e_8$.
So the path enters the cut and emerges on Sheet\;\textsf{a}, 
where segment $[e_8,\infty + \imath \epsilon)$ is governed by $y_2$.
The path continues  on Sheet\;\textsf{a} along~$\tilde{\gamma}_{\text{U}}^{\textsf{a}}$, 
the lift of $\widetilde{\gamma}_{\text{U}}$ to Sheet\;\textsf{a},
and reaches the point $(e_1,y_1(e_1))$, which is not a branch point.
Thus, one full turn around infinity, denoted by $\tilde{\gamma}_\infty^{\textsf{c-a}}$, is completed.
The path continues on Sheet \textsf{a} along  $\tilde{\gamma}_{\text{L}}^{\textsf{a}}$, and reaches $B_8$.
Solution $y_2$ on Sheet \textsf{a} connects to $y_3$ on Sheet \textsf{c} 
on the other side of $\Gamma_8$ in the vicinity of $e_8$. 
So, the path moves from Sheet\;\textsf{a} to Sheet \textsf{c},
where segment $[e_8, \infty + \imath \epsilon)$ is governed by $y_3$.
Then the path goes  on Sheet\;\textsf{c} along  $\tilde{\gamma}_{\text{U}}^{\textsf{c}}$, and reaches~$B_1$. 
At this point the second turn around infinity, denoted by $\tilde{\gamma}_\infty^{\textsf{a-c}}$, is completed.
In the vicinity of $e_1$ solution $y_3$ on Sheet\;\textsf{c} connects to $y_2$ on Sheet\;\textsf{b}.
So the path enters the cut and emerges on Sheet\;\textsf{b}, then goes on Sheet\;\textsf{b} 
along  $\tilde{\gamma}_{\text{L}}^{\textsf{b}}$, and reaches the point $(e_8,y_1(e_8))$, then  
continues on Sheet\;\textsf{b} along $\tilde{\gamma}_{\text{U}}^{\textsf{b}}$, and reaches $B_1$.
The third turn around infinity, denoted $\tilde{\gamma}_\infty^{\textsf{b-b}}$, is completed by the arrival to the initial point.
The lift of $\gamma_{\infty}$, which is $\bar{\gamma}_{3\infty} \equiv \tilde{\gamma}_\infty^{\textsf{c-a}}\cup 
\tilde{\gamma}_\infty^{\textsf{a-c}}\cup \tilde{\gamma}_\infty^{\textsf{b-b}}$, can be used to
reach an arbitrary point on the curve.

In fact, $\textsf{a}_{8,0}$,  $\textsf{c}_{8,0}$ 
in the sequences \eqref{LPathSeq} do not belong to the sheets where they are listed. But
the indicated paths, closed by a semi-circle around infinity in the counter-clockwise direction, 
are homotopic to zero. Correspondingly,  the sum of first kind integrals
along each path vanishes. Then  the following relations are obtained along 
the simpler path $\widetilde{\gamma}_{\text{L}}$:
\begin{gather}\label{C34Rels}
\begin{split}
&\mathcal{A}^{[1]}_{0-,1} + \mathcal{A}^{[1]}_{1,2} 
+ \mathcal{A}^{[3\text{-}1]}_{2,4} + \mathcal{A}^{[1\text{-}2]}_{4,6}
+ \mathcal{A}^{[1\text{-}2]}_{6,8} + \mathcal{A}^{[3]}_{8,0+} = 0,\\
&\mathcal{A}^{[2]}_{0-,1} + \mathcal{A}^{[2]}_{1,2} 
+ \mathcal{A}^{[2\text{-}3]}_{2,4} + \mathcal{A}^{[2\text{-}3]}_{4,6}
+ \mathcal{A}^{[3\text{-}1]}_{6,8} + \mathcal{A}^{[1]}_{8,0+}  = 0,\\
&\mathcal{A}^{[3]}_{0-,1} + \mathcal{A}^{[3]}_{1,2} 
+ \mathcal{A}^{[1\text{-}2]}_{2,4} + \mathcal{A}^{[3\text{-}1]}_{4,6}
+ \mathcal{A}^{[2\text{-}3]}_{6,8} + \mathcal{A}^{[2]}_{8,0+}  = 0.
\end{split}
\end{gather}
Here $0-$ ($0+$) stands for $-\infty - \imath \epsilon$  ($\infty + \imath \epsilon$),
and the superscript of $\mathcal{A}^{[a\text{-}b]}_{i,j}$ indicates that over the segment $[e_i,e_j]$ 
 solution $y_{a}$ changes into $y_{b}$. Note, the last relation follows from the 
first two, due to $\mathcal{A}^{[\textsf{a}_{i,j}]}_{i,j} + \mathcal{A}^{[\textsf{b}_{i,j}]}_{i,j} + \mathcal{A}^{[\textsf{c}_{i,j}]}_{i,j} = 0$,
where $\textsf{a}_{i,j}$, $\textsf{b}_{i,j}$, and $\textsf{c}_{i,j}$ correspond to Sheets $\textsf{a}$, $\textsf{b}$, and $\textsf{c}$.

Let the path $\widetilde{\gamma}_{\text{U}}$, starting at $\infty+\imath \epsilon$ and ending at 
$-\infty-\imath \epsilon$, be closed by a semi-circle around infinity in the counter-clockwise direction.
On each sheet this path is  homotopic to zero,
and so produces another set of relations:
\begin{gather}\label{C34Rels2}
\begin{split}
&\mathcal{A}^{[3]}_{0+,8} + \mathcal{A}^{[2]}_{8,7}  
+ \mathcal{A}^{[3]}_{7,5}  + \mathcal{A}^{[3]}_{5,3}  
+ \mathcal{A}^{[1]}_{3,1}  + \mathcal{A}^{[1]}_{1,0-}  = 0,\\
&\mathcal{A}^{[2]}_{0+,8} + \mathcal{A}^{[3]}_{8,7} 
+ \mathcal{A}^{[2]}_{7,5} + \mathcal{A}^{[1]}_{5,3} 
+ \mathcal{A}^{[3]}_{1,3}  +\mathcal{A}^{[2]}_{1,0-} = 0,\\
& \mathcal{A}^{[1]}_{0+,8} + \mathcal{A}^{[1]}_{8,7}
+ \mathcal{A}^{[1]}_{7,5} + \mathcal{A}^{[2]}_{5,3} 
+ \mathcal{A}^{[2]}_{3,1}  + \mathcal{A}^{[3]}_{1,0-} = 0.
\end{split}
\end{gather}
 
\begin{figure}[h]
\parbox[b]{.56\textwidth}{\includegraphics[width=0.5\textwidth]{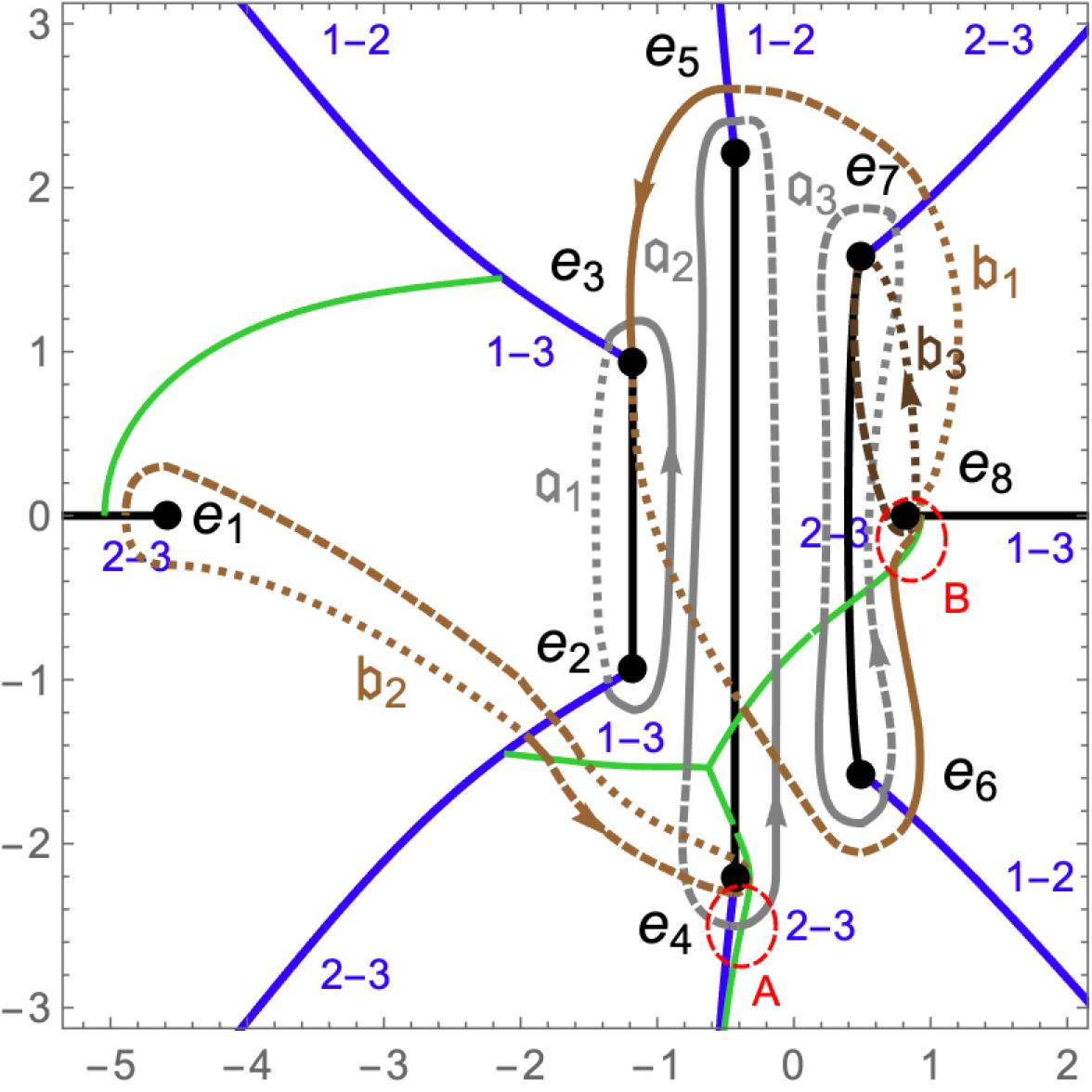}}
\parbox[b]{.16\textwidth}{\includegraphics[width=0.14\textwidth]{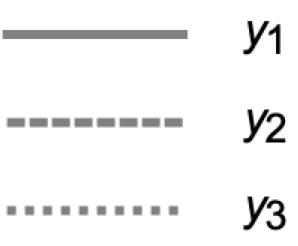} \\
$\quad$ \\ $\quad$ \\ $\quad$ \\ $\quad$  \\
$\quad$ \\ $\quad$ \\ $\quad$ \\ $\quad$ \\ $\quad$ \\ $\quad$ }  $\quad$  $\quad$ \\
\includegraphics[width=0.3\textwidth]{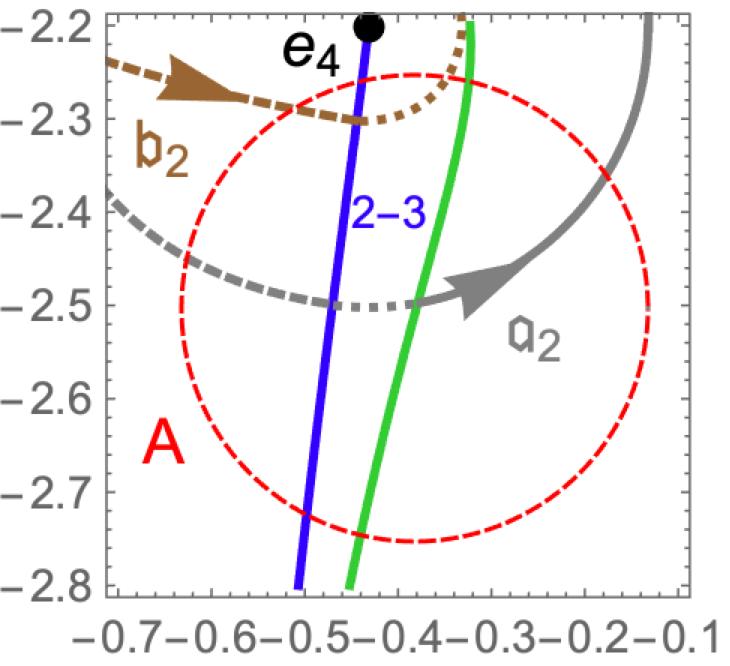}$\qquad$
\includegraphics[width=0.3\textwidth]{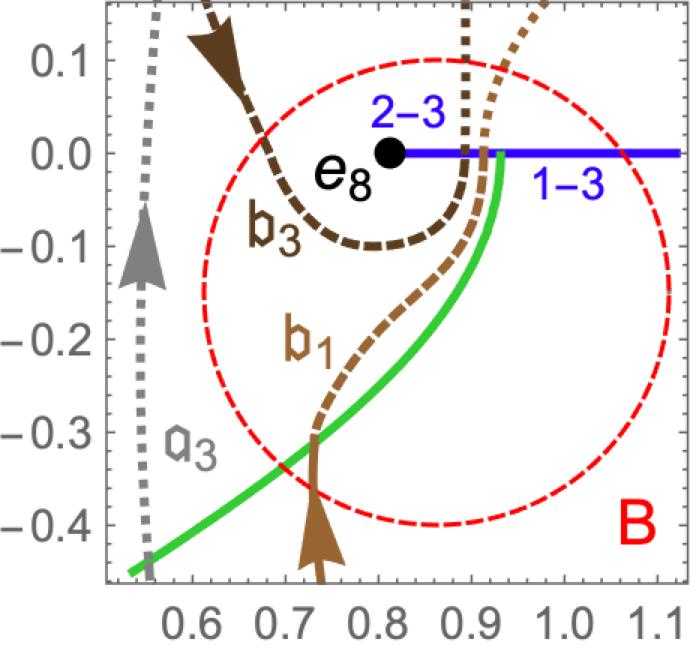} $\qquad\qquad$
\caption{Canonical homology basis.}\label{f:C34Cycles}
\end{figure}

Note, that all branch points can be reached on Sheet $\textsf{c}$, as well as all cuts. Let
$\mathfrak{a}$-cycles be located on Sheet $\textsf{c}$, encircling the three finite cuts. Let
$\mathfrak{b}_i$-cycle emerge from the cut  $(B_8,\infty)\cup (\infty,B_1)$, and
 enter the cut encircled by $\mathfrak{a}_i$-cycle, see fig.\,\ref{f:C34Cycles}.
Therefore, $\mathfrak{b}_1$ and $\mathfrak{b}_3$ go through Sheets $\textsf{c}$ and $\textsf{a}$, then
$\mathfrak{b}_2$  goes through Sheets $\textsf{c}$ and~$\textsf{b}$.
Periods are calculated according to  fig.\,\ref{f:C34Cycles}, as follows
\begin{align*}
&\omega_1 = \mathcal{A}^{[1]}_{2,3} + \mathcal{A}^{[3]}_{3,2},\\
&\omega_2 = \mathcal{A}^{[3\text{-}1\text{-}2]}_{4,5}  + \mathcal{A}^{[1\text{-}3\text{-}2]}_{5,4}  
= \mathcal{A}^{[3\text{-}1]}_{4,6} + \mathcal{A}^{[1\text{-}2]}_{6,5} 
+  \mathcal{A}^{[1]}_{5,2} + \mathcal{A}^{[1\text{-}2]}_{2,4},\\
&\omega_3 =  \mathcal{A}^{[2\text{-}3]}_{6,7}  +  \mathcal{A}^{[2\text{-}1]}_{7,6},\\
&\omega'_1 = \mathcal{A}^{[3]}_{8,7} + \mathcal{A}^{[2]}_{7,5} + \mathcal{A}^{[1]}_{5,3} 
+ \mathcal{A}^{[3\text{-}2]}_{3,6}  + \mathcal{A}^{[1\text{-}2]}_{6,8}, \\
&\omega'_2 = \mathcal{A}^{[3]}_{1,2} + \mathcal{A}^{[1\text{-}2]}_{2,4}
+\mathcal{A}^{[3\text{-}2]}_{4,2} + \mathcal{A}^{[2]}_{2,1}, \\
&\omega'_3 =  \mathcal{A}^{[3]}_{8,7} + \mathcal{A}^{[2]}_{7,8}.
\end{align*}
Computation of $\mathcal{A}_{i,j}^{[\textsf{n}_{i,j}]}$ is performed with the help of
 \texttt{NIntegrate} with  a working precision of $18$.
The relations \eqref{C34Rels} and \eqref{C34Rels2} are satisfied with an accuracy of $10^{-15}$.

The following not normalized period matrices are obtained
\begin{align*}
&\omega \approx \begin{pmatrix}
-0.646716 \imath   & 1.367235 \imath & -1.406214 \imath \\
 0.557691  \imath & 0.662524 \imath & 0.237700 \imath \\
-0.425220 \imath & -0.085658 \imath  & 0.761823 \imath 
\end{pmatrix},\\
&\omega' \approx \begin{pmatrix}
1.114221 + 0.360259 \imath  & 
 -0.838244 + 0.360259 \imath  & 
 0.830310 - 0.703107 \imath \\
 -0.888801 + 0.610108 \imath & 
 -0.725076 + 0.610108 \imath & 
 -0.483530 + 0.118850 \imath  \\
 0.212490 - 0.255439 \imath &
 0.017209 - 0.255439 \imath&
 -0.244951 + 0.380911 \imath
\end{pmatrix},
\end{align*}
and the normalized period matrix from the Siegel upper half-space
\begin{align*}
&\tau \approx \begin{pmatrix}
0.5 + 1.204054 \imath & 0.5 + 0.179707 \imath & 0.413339 \imath \\
0.5 + 0.179707 \imath & 0.5 + 0.879769 \imath & 0.176635 \imath \\
0.413339 \imath & 0.176635 \imath & 0.5 + 0.572103 \imath
\end{pmatrix}.
\end{align*}
Second kind period matrices are
\begin{align*}
&\eta \approx \begin{pmatrix}
 - 0.541959 \imath & - 0.385425 \imath & -0.722057 \imath \\ 
   1.52536 \imath &  - 0.88414 \imath &  0.484784 \imath \\ 
   0.975636 \imath & - 1.01088 \imath  &  -2.65892 \imath
\end{pmatrix},\\
&\eta' \approx \begin{pmatrix}
-1.357307 - 0.463692 \imath & 
 2.945439  - 0.463692 \imath &
- 0.354124 -  0.361028 \imath \\
 2.292766 + 0.320609 \imath & 
 5.356432 + 0.320609 \imath & 
 2.131611 + 0.242392 \imath \\
 -5.584050 - 0.017623 \imath &
  4.038588 - 0.017623 \imath & 
  6.689080 - 1.329459 \imath
\end{pmatrix},
\end{align*}
and the symmetric matrix $\varkappa$ is
\begin{align*}
&\varkappa \approx \begin{pmatrix}
0.180731 & -0.994032 & -0.304044 \\
-0.994032 & 0.540116 & -1.367017 \\
-0.304044 &  -1.367017  & -3.624898
\end{pmatrix}.
\end{align*}
 The symmetric property of $\tau$ is satisfied with an accuracy of $10^{-15}$, and
 of $\varkappa$ with an accuracy of $10^{-13}$.

\section{Computation of $\wp$-functions on a trigonal curve}\label{s:TrigWP}
Similar to the hyperelliptic case,
 the Abel image $\mathcal{A}(D)$ of a given divisor $D$ is computed directly
 by the formula \eqref{AMapD}, where the Abel image $\mathcal{A}(P)$ of a point  $P$
 is computed by \eqref{AbelM} with
 the standard not normalized holomorphic differentials \eqref{Dif1TC}.

$\wp$-Functions are calculated on non-special divisors by means of \eqref{WPdef}. 
Non-special divisors $D$ are composed as positive divisors of degree $n\geqslant g$ 
with no three points in involution on a trigonal curve of genus $g$. 

\subsection{Vector of Riemann constants}
The characteristic $[K]$ of the vector of Riemann constants $K$
is required for computation of $\wp$-functions by means of \eqref{WPdef}.
This characteristic is half-integer, due to the relation $2 K \sim \bar{\mathcal{A}}(C)$, see \cite[Eq.\,(2.4.20)]{Dub1981},
which means that $2K$ is congruent to the Abel image of the canonical divisor $C$
on~$\mathcal{C}$, and
the latter is congruent to zero. Recall, that $\bar{\mathcal{A}}$ denotes the Abel map  
with respect to normalized differentials. Moreover,
\begin{theo}\label{T:Kval}
The theta function with  characteristic $[K]$, as a function of not normalized 
coordinates $u$, is characterized by the maximal order $\mathfrak{d}$ of vanishing at $u=0$, that is
$$ \forall \mathfrak{i}<\mathfrak{d}\quad 
\frac{\partial^{\mathfrak{i}} \theta[K](\omega^{-1} u) }{\partial u_1^{\mathfrak{i}}}\Big|_{u=0} = 0,\qquad\quad
 \frac{\partial^{\mathfrak{d}} \theta[K](\omega^{-1} u) }{\partial u_1^{\mathfrak{d}}}\Big|_{u=0} \neq 0, $$
where  $\mathfrak{d}=(3 \mFr + 2) \mFr$ on a $(3,3\mFr+1)$-curve, and
$\mathfrak{d}=(3 \mFr + 1) (\mFr + 1)$ on a $(3,3\mFr+2)$-curve; 
the order $\mathfrak{d}$ is computed with respect to the Sato weight.
\end{theo}
\begin{proof}
By \eqref{SigmaThetaRel} $\theta [K]$
is connected to $\sigma$-function, which means that the both functions 
behave similarly at the origin $u=0$ of $\Jac(\mathcal{C})$.
From \cite{bel99} we know the Sato weight of  $\sigma$-function:
$$ \mathrm{wgt} \; \sigma = -\tfrac{1}{24} (n^2-1)(s^2-1).$$
The negative Sato weight shows the  order of vanishing at $u=0$,
that is $\mathfrak{d} = - \mathrm{wgt} \; \sigma$.
By direct computations, one can find, that $\wgt \sigma = (3 \mFr + 2) \mFr$
on a $(3,3\mFr+1)$-curve, and $\wgt \sigma = (3 \mFr + 1) (\mFr + 1)$
on a $(3,3\mFr+2)$-curve.
 \end{proof}
 
\begin{cor} 
If the weighted order $\mathfrak{p}$ of a derivative of $\theta[K](\omega^{-1} u)$
with respect to $u$
is less than $\mathfrak{d}$, then the derivative vanishes.
\end{cor}
\begin{proof}
The weighted order of a derivative with respect to $u = (u_{\mathfrak{w}_1},\dots,
u_{\mathfrak{w}_g})^t$ is defined as follows
$$\mathfrak{p} = \ord \frac{\partial^{p_{\mathfrak{w}_1} + \cdots + p_{\mathfrak{w}_g}} }
{\partial u_{\mathfrak{w}_1}^{p_{\mathfrak{w}_1}} \cdots \partial u_{\mathfrak{w}_g}^{p_{\mathfrak{w}_g}} } 
= \sum_{i=1}^g \mathfrak{w}_i p_{\mathfrak{w}_i }. $$
All derivatives of $\sigma$-function such that $\mathfrak{p} < \mathfrak{d}$  vanish at $u=0$,
due to the Riemann vanishing theorem.   
 \end{proof} 
 
From the relation $[K] = [ \frac{1}{2} \bar{\mathcal{A}}(C)]$
 the exact location of the base-point for computation can be found. 
Though it is known that the base-point is located at infinity, the computational base-point
is a matter of investigation.
Unlike the hyperelliptic case, all paths to points in calculating Abel images
are required to start at the same computational base-point 
 on a fixed sheet.

\subsection{Paths to points}
Let $P_i =(x_i,y_i)$,  $y_i=y_{a}(x_i)$. From  $a$ the Sheet\;$\textsf{n}$
where $P_i$ is located is identified. Let $Q_i = (e_i,y_{a}(e_i))$, where $e_i$
is the $x$-coordinate of a branch point $B_i$,
 be in the vicinity to $P_i$ on  Sheet\;$\textsf{n}$.
 A path to $P_i$  starts at $-\infty-\imath \epsilon$ on the fixed sheet, 
and goes  to $Q_i$ on Sheet\;$\textsf{n}$. 
Then the segment $[Q_i,P_i]$ on Sheet\;$\textsf{n}$ is added to this path. 
Along the path to $P_i$ the Abel image $\mathcal{A}(P_i)$ is computed.

Below, computation of $\wp$-functions is illustrated by examples.

\subsection{Example 3a}
On the curve \eqref{C34Eq}, with the homology basis chosen as shown on fig.\,\ref{f:C34Cycles},
the vector of Riemann constants has the characteristic
\begin{equation}\label{KChar}
 [K] = \small \begin{pmatrix}
1 & 0 & 1 \\
0 & 1 & 1 \end{pmatrix}, 
\end{equation}
and $\theta [K]$ vanishes to the order $5$ at $u=0$. 

Recall that all branch points are located on Sheet\;\textsf{c}.
Let  paths to all these points start at $-\infty-\imath \epsilon$ on Sheet\;\textsf{c}, 
and reach branch points in the shortest way.
Then
\begin{multline}\label{ACanonDiv}
\omega \bar{\mathcal{A}}(C) = \mathcal{A}^{[3]}_{0-,1} 
+ \big(\mathcal{A}^{[3]}_{0-,1} + \mathcal{A}^{[3]}_{1,2} \big)
+ \big(\mathcal{A}^{[3]}_{0-,1} + \mathcal{A}^{[3]}_{1,2} + \mathcal{A}^{[1]}_{2,3} \big) \\
+ \big(\mathcal{A}^{[3]}_{0-,1} + \mathcal{A}^{[3]}_{1,2} + \mathcal{A}^{[1\text{-}2]}_{2,4} \big)
+ \big(\mathcal{A}^{[3]}_{0-,1} + \mathcal{A}^{[3]}_{1,2} + \mathcal{A}^{[1\text{-}2]}_{2,4} 
+ \mathcal{A}^{[3\text{-}1]}_{4,6} + \mathcal{A}^{[1\text{-}2]}_{6,5} \big) \\
+ \big(\mathcal{A}^{[3]}_{0-,1} + \mathcal{A}^{[3]}_{1,2} + \mathcal{A}^{[1\text{-}2]}_{2,4} 
+ \mathcal{A}^{[3\text{-}1]}_{4,6} \big) \\
+ \big(\mathcal{A}^{[3]}_{0-,1} + \mathcal{A}^{[3]}_{1,2} + \mathcal{A}^{[1\text{-}2]}_{2,4} 
+ \mathcal{A}^{[3\text{-}1]}_{4,6} + \mathcal{A}^{[2\text{-}3]}_{6,7} \big) \\
+ \big(\mathcal{A}^{[3]}_{0-,1} + \mathcal{A}^{[3]}_{1,2} + \mathcal{A}^{[1\text{-}2]}_{2,4} 
+ \mathcal{A}^{[3\text{-}1]}_{4,6} + \mathcal{A}^{[2\text{-}3]}_{6,8} \big) 
\approx \begin{pmatrix} 
 3.5 + 1.976806 \imath \\
-1.5 + 2.115880 \imath \\
 1.5 + 1.338712 \imath
  \end{pmatrix},
\end{multline}
and in terms of columns of the normalized periods $(1_3, \tau)$:
$$ \big(\bar{\mathcal{A}}(C) \big)_j = \tau_{j,1} + 2 \tau_{j,2} + 
 \tau_{j,3} + 2 \delta_{j,1} - 3 \delta_{j,2} + \delta_{j,3},$$
where $\delta_{i,j}$ denotes the Kronecker delta. Evidently, 
$\frac{1}{2} \bar{\mathcal{A}}(C)$ obtained from \eqref{ACanonDiv}
gives the correct value of the vector of Riemann constants, since 
its characteristic coincides with  \eqref{KChar}.
Therefore, we fix the computational base-point at $-\infty-\imath \epsilon$ on Sheet\;\textsf{c}.

\medskip
Let a given divisor be $D = \sum_{i=1}^3 P_i$,
\begin{equation}\label{Ex3aPs}
\begin{split}
P_1  &= \big(e_3 - 0.5-0.5 \imath, y_1(e_3 - 0.5-0.5 \imath)\big) \\
&\approx (-1.679223 + 0.434455 \imath, 3.431176 - 0.582699 \imath), \\
P_2 &= \big(e_5 + 0.5-1.5 \imath, y_1(e_5 + 0.5-1.5 \imath)\big)  \\
&\approx (0.068268 + 0.702564 \imath, 3.555003 + 0.889027 \imath),\\
P_3 &= \big(e_6 +1, y_1(e_6 +1)\big) \\
&\approx (1.499118 - 1.575269 \imath, -4.191263 + 1.058317 \imath).
\end{split}
\end{equation}
The points $P_1$ and $P_3$ are located on Sheet\;\textsf{a}, and $P_2$ on Sheet\;\textsf{b}.
\begin{figure}[h]
\parbox[b]{0.38\textwidth}{\includegraphics[width=0.35\textwidth]{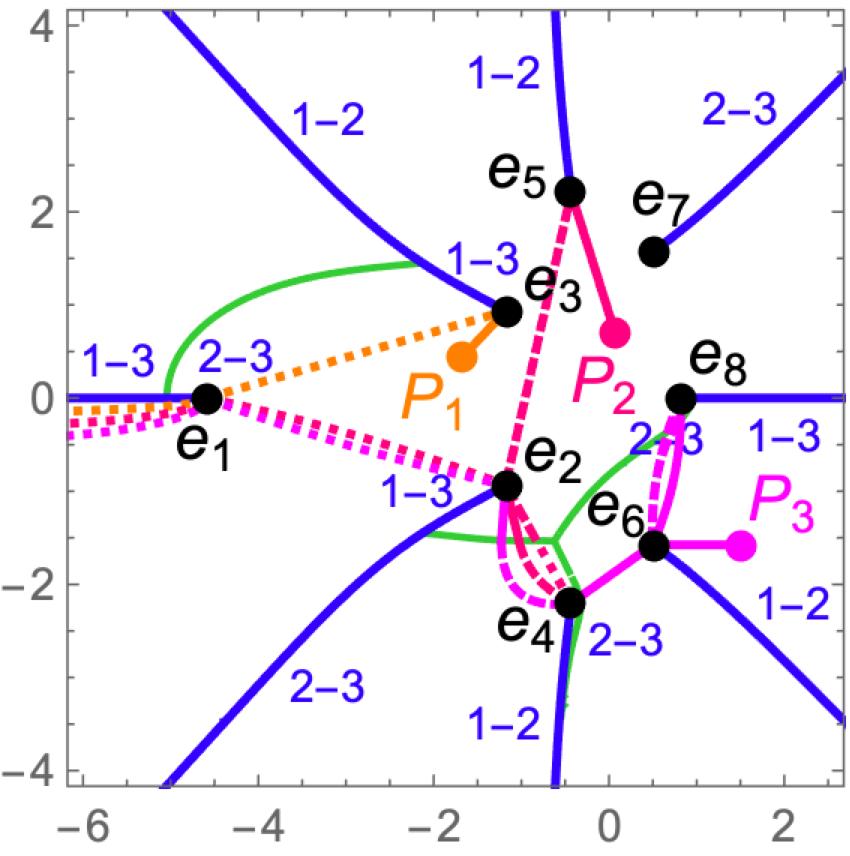}}$\quad$
\parbox[b]{0.38\textwidth}{\includegraphics[width=0.35\textwidth]{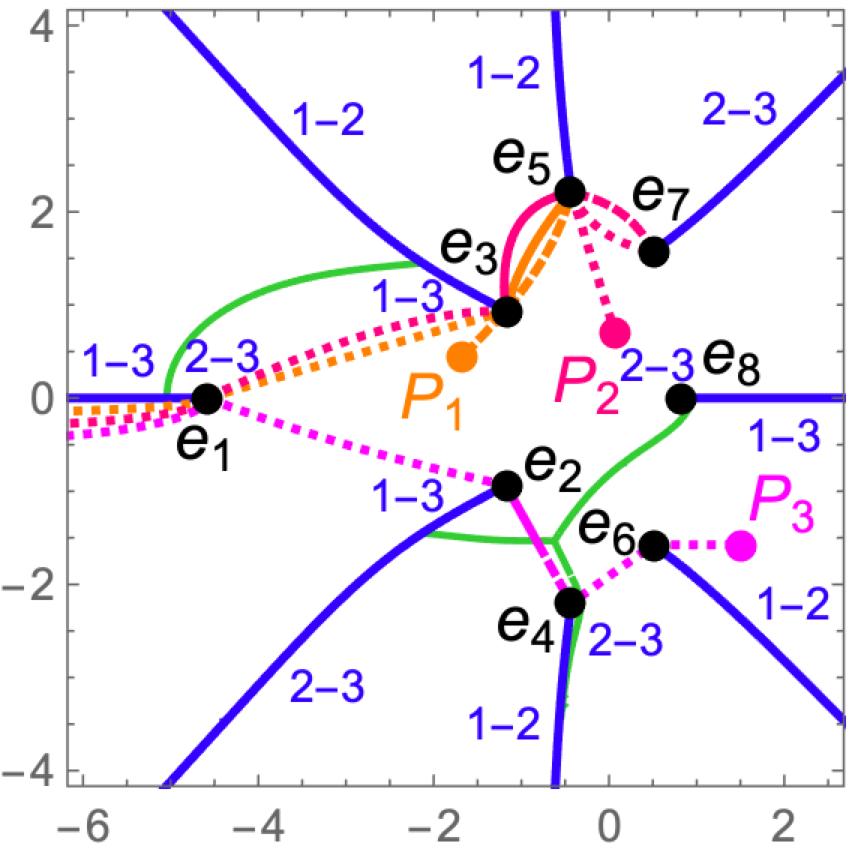}}$\quad$
\parbox[b]{0.15\textwidth}{\includegraphics[width=0.1\textwidth]{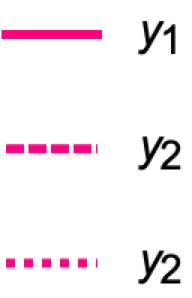}\\
 $\quad$ \\  $\quad$ \\  $\quad$ \\  $\quad$ \\  $\quad$}\\
\parbox[b]{0.4\textwidth}{\centering (a) Example 3a}$\ \ $
\parbox[b]{0.4\textwidth}{\centering (b) Example 3b}$\qquad\quad\quad\quad\quad$
\caption{Paths to points of divisor $D$.}\label{f:C34Ex3}
\end{figure}
A path to each point starts at $-\infty - \imath \epsilon$ on Sheet\;\textsf{c}. 
Let $P_1$ on Sheet\;\textsf{a} be reached through the cut  $(B_2, B_3)$,
then $P_2$ on Sheet\;\textsf{b} through the cut  $(B_4,B_5)$,
and $P_3$ on Sheet\;\textsf{a} through the cut $(B_8,\infty)\cup(\infty, B_1)$, see fig.\,\ref{f:C34Ex3}(a). 
Actually,
\begin{align*}
&\mathcal{A}(P_1) = \mathcal{A}_{0-,1}^{[3]} + \mathcal{A}_{1,3}^{[3]}  + \int_{e_3}^{P_1} \rmd u^{[1]},\\
&\mathcal{A}(P_2) = \mathcal{A}_{0-,1}^{[3]} + \mathcal{A}_{1,2}^{[3]} + \mathcal{A}_{2,4}^{[1\text{-}2]} 
+ \mathcal{A}_{4,2}^{[3\text{-}2]} +  \mathcal{A}_{2,5}^{[2]} + \int_{e_5}^{P_2} \rmd u^{[1]},\\
&\mathcal{A}(P_3) = \mathcal{A}_{0-,1}^{[3]} + \mathcal{A}_{1,2}^{[3]}  + \mathcal{A}_{2,4}^{[1\text{-}2]} 
+ \mathcal{A}_{4,6}^{[3\text{-}1]} + \int_{e_6}^{P_3} \rmd u^{[1]},
\end{align*}
where $\rmd u^{[a]}$ means that  $y = y_a(x)$.
The Abel image of $D$ is
\begin{equation}\label{AMap3a1}
 u(D) = \sum_{i=1}^3 \mathcal{A}(P_i) \approx \begin{pmatrix}
 -0.270333 - 1.612257 \imath \\  -1.116879 + 0.562199 \imath \\
  0.258194 + 0.268653 \imath
\end{pmatrix}. 
\end{equation}

By means of \eqref{WPdef}, with $\theta$-function approximated by
a partial sum of \eqref{ThetaDef},  $n_i \leqslant 5$, we obtain 
\begin{align}\label{WPvalsC34Ex3a}
&\wp_{1,1}\big(u(D)\big) \approx 0.059654 + 1.020925 \imath,& \notag \\
&\wp_{1,2}\big(u(D)\big) \approx -0.793416 + 0.889005 \imath,& \notag \\
&\wp_{1,5}\big(u(D)\big) \approx 0.885372 - 3.089764 \imath,& \notag \\
&\wp_{2,2}\big(u(D)\big) \approx -0.269700 + 1.472739 \imath,& \\
&\wp_{2,5}\big(u(D)\big) \approx -3.501466 + 10.538856 \imath, & \notag \\
&\wp_{1,1,1}\big(u(D)\big) \approx -2.156576 + 3.543516 \imath,  \notag\\
&\wp_{1,1,2}\big(u(D)\big) \approx -3.595029 + 2.840859 \imath, \notag \\
&\wp_{1,1,5}\big(u(D)\big) \approx 5.656516 - 0.559812 \imath.  \notag
\end{align}

Alternatively, a part of the path $\bar{\gamma}_{3\infty}$ around infinity 
can be used to reach each point on its sheet avoiding cuts.
Denote
\begin{align*}
&\mathcal{A}(\tilde{\gamma}_{\text{L}}^{\textsf{c}}) 
= \mathcal{A}_{1,2}^{[3]}  + \mathcal{A}_{2,4}^{[1\text{-}2]} 
+ \mathcal{A}_{4,6}^{[3\text{-}1]} + \mathcal{A}_{6,8}^{[2\text{-}3]}, \\
&\mathcal{A}(\tilde{\gamma}_{\text{L}}^{\textsf{a}}) 
= \mathcal{A}^{[1]}_{1,2} + \mathcal{A}^{[3\text{-}1]}_{2,4} 
+ \mathcal{A}^{[1\text{-}2]}_{4,6} + \mathcal{A}^{[1\text{-}2]}_{6,8},\\
&\mathcal{A}(\tilde{\gamma}_{\text{L}}^{\textsf{b}}) 
= \mathcal{A}^{[2]}_{1,2}  + \mathcal{A}^{[2\text{-}3]}_{2,4} 
+ \mathcal{A}^{[2\text{-}3]}_{4,6} + \mathcal{A}^{[3\text{-}1]}_{6,8}, \\
&\mathcal{A}(\tilde{\gamma}_{\text{U}}^{\textsf{a}}) 
= \mathcal{A}^{[2]}_{8,7}  + \mathcal{A}^{[3]}_{7,5}  
+ \mathcal{A}^{[3]}_{5,3}  + \mathcal{A}^{[1]}_{3,1},\\
&\mathcal{A}(\tilde{\gamma}_{\text{U}}^{\textsf{c}}) 
= \mathcal{A}^{[1]}_{8,7} + \mathcal{A}^{[1]}_{7,5} 
+ \mathcal{A}^{[2]}_{5,3} + \mathcal{A}^{[2]}_{3,1},\\
&\mathcal{A}(\tilde{\gamma}_{\text{U}}^{\textsf{b}}) 
= \mathcal{A}^{[3]}_{8,7} + \mathcal{A}^{[2]}_{7,5} 
+ \mathcal{A}^{[1]}_{5,3} + \mathcal{A}^{[3]}_{1,3} ,
\end{align*}
cf.\,\eqref{C34Rels} and \eqref{C34Rels2}, then
\begin{align*}
&\mathcal{A}(\tilde{\gamma}_\infty^{\textsf{c-a}}) = 
\mathcal{A}(\tilde{\gamma}_{\text{L}}^{\textsf{c}}) 
+ \mathcal{A}(\tilde{\gamma}_{\text{U}}^{\textsf{a}}),\\
&\mathcal{A}(\tilde{\gamma}_\infty^{\textsf{a-c}}) = 
\mathcal{A}(\tilde{\gamma}_{\text{L}}^{\textsf{a}}) 
+ \mathcal{A}(\tilde{\gamma}_{\text{U}}^{\textsf{c}}),\\
&\mathcal{A}(\tilde{\gamma}_\infty^{\textsf{b-b}}) = 
\mathcal{A}(\tilde{\gamma}_{\text{L}}^{\textsf{b}}) 
+ \mathcal{A}(\tilde{\gamma}_{\text{U}}^{\textsf{b}}).
\end{align*}
$\mathcal{A}(\tilde{\gamma}_{\text{L}}^{\textsf{c}})$ can be used to reach the vicinity of $B_8$ on Sheet\;\textsf{a},
and $\mathcal{A}(\tilde{\gamma}_\infty^{\textsf{c-a}})$ to reach the vicinity of $B_1$ on Sheet\;\textsf{a}.
By $\mathcal{A}(\tilde{\gamma}_\infty^{\textsf{c-a}}) 
+\mathcal{A}(\tilde{\gamma}_\infty^{\textsf{a-c}})$ the vicinity of $B_1$ on Sheet\;\textsf{b} can be reached,
and by $\mathcal{A}(\tilde{\gamma}_\infty^{\textsf{c-a}}) +\mathcal{A}(\tilde{\gamma}_\infty^{\textsf{a-c}})
+ \mathcal{A}(\tilde{\gamma}_{\text{L}}^{\textsf{b}})$ the vicinity of $B_8$ on Sheet\;\textsf{b}.
Then
\begin{align*}
&\mathcal{A}(P_1) = \mathcal{A}_{0-,1}^{[3]} + \mathcal{A}(\tilde{\gamma}_\infty^{\textsf{c-a}}) 
+ \mathcal{A}_{1,3}^{[1]}  + \int_{e_3}^{P_1} \rmd u^{[1]},\\
&\mathcal{A}(P_2) = \mathcal{A}_{0-,1}^{[3]} + \mathcal{A}(\tilde{\gamma}_\infty^{\textsf{c-a}}) 
+\mathcal{A}(\tilde{\gamma}_\infty^{\textsf{a-c}}) 
+ \mathcal{A}_{1,2}^{[2]} + \mathcal{A}_{2,4}^{[2\text{-}3]} 
+ \mathcal{A}_{4,6}^{[2\text{-}3]}  +  \mathcal{A}_{6,5}^{[3\text{-}1]} + \int_{e_5}^{P_2} \rmd u^{[1]},\\
&\mathcal{A}(P_3) = \mathcal{A}_{0-,1}^{[3]} + \mathcal{A}(\tilde{\gamma}_\infty^{\textsf{c-a}}) 
+ \mathcal{A}_{1,2}^{[1]}  + \mathcal{A}_{2,4}^{[3\text{-}1]} 
+ \mathcal{A}_{4,6}^{[1\text{-}2]} + \int_{e_6}^{P_3} \rmd u^{[1]},
\end{align*}
which produce an Abel image congruent to \eqref{AMap3a1}, namely
\begin{equation}\label{AMap3a2}
u(D) = \sum_{i=1}^3 \mathcal{A}(P_i) \approx \begin{pmatrix}
-0.546310 + 0.440673 \imath \\  0.496998 - 0.233192 \imath \\
 0.028495 - 0.067950 \imath
 \end{pmatrix}.
\end{equation}
Values of $\wp$-functions on \eqref{AMap3a2} and \eqref{AMap3a1} coincide with
an accuracy of $10^{-13}$.

A solution of the Jacobi inversion problem on a $(3,4)$-curve \eqref{C34Eq} is given by the system
\begin{subequations}\label{C34JIP}
\begin{align}
&\mathcal{R}_6(x,y;u) \equiv x^2 - y \wp_{1,1}(u) - x \wp_{1,2}(u) - \wp_{1,5}(u)=0,\\
&\mathcal{R}_7(x,y;u) \equiv 2 x y + y \big(\wp_{1,1,1}(u) - \wp_{1,2}(u)\big) \\
&\phantom{\mathcal{R}_7(x,y;u)}\quad + x \big(\wp_{1,1,2}(u) - \wp_{2,2}(u)\big) 
+ \big(\wp_{1,1,5}(u) - \wp_{2,5}(u)\big)=0, \notag
\end{align}
\end{subequations}
whose divisor of zeros is a degree $3$ positive divisor $D_3$ such that $u = \mathcal{A}(D_3)$.

On the other hand, the two polynomial functions $\mathcal{R}_{6}$ and $\mathcal{R}_{7}$
on a $(3,4)$-curve \eqref{C34Eq}
can be constructed directly from coordinates of points of a degree $3$ divisor 
 $D_3 = \sum_{i=1}^{3} (x_i,y_i)$. Namely,
\begin{align}\label{R67Coord}
&\mathcal{R}_6(x,y;D_3) = \frac{\small 
\begin{vmatrix} x^2 & y & x & 1 \\
x_1^2 & y_1 & x_1 & 1 \\
x_2^2 & y_2 & x_2 & 1 \\
x_3^2 & y_3 & x_3 & 1 \end{vmatrix}}
{\small \begin{vmatrix} 
y_1 & x_1 & 1  \\
y_2 & x_2 & 1  \\
y_3 & x_3 & 1  \end{vmatrix}},&
&\mathcal{R}_7(x,y;D_3) = 2 \frac{\small \begin{vmatrix} x y & y & x & 1  \\
x_1 y_1 & y_1 & x_1 & 1 \\
x_2 y_2 & y_2 & x_2 & 1 \\
x_3 y_3 & y_3 & x_3 & 1 \end{vmatrix}}
{\small \begin{vmatrix} 
y_1 & x_1 & 1  \\
y_2 & x_2 & 1  \\
y_3 & x_3 & 1  \end{vmatrix}}.
\end{align}
On the given divisor $D$ defined by \eqref{Ex3aPs} we obtain 
\begin{subequations}
\begin{align*}
\mathcal{R}_{6}\big(x,y;D\big) &= x^2 -
(0.059654 + 1.020925  \imath) y \\
&\quad + (0.793416 - 0.889005  \imath) x - 0.885372 + 3.089764  \imath, \notag \\ 
\mathcal{R}_{7}\big(x,y;D\big) &= 2 x y - (1.363160 - 2.654511 \imath) y \\
&\quad - (3.325329 - 1.368120 \imath) x + 9.157983 - 11.098669 \imath. \notag
\end{align*}
\end{subequations}
Coefficients of these two functions coincide with those expressed in terms of 
 $\wp$-functions \eqref{WPvalsC34Ex3a}  within an accuracy of $10^{-13}$.

\subsection{Example 3b}
Let $D$ be slightly modified by choosing points with the same $x$-coordinates, 
but located on other sheets, namely
\begin{align}\label{Ex3bPs}
\begin{split}
P_1  &= \big(e_3 - 0.5-0.5 \imath, y_2(e_3 - 0.5-0.5 \imath)\big) \\
&\approx (-1.679223 + 0.434455 \imath, -3.855009 + 0.446784 \imath), \\
P_2 &= \big(e_5 + 0.5-1.5 \imath, y_3(e_5 + 0.5-1.5 \imath)\big)  \\
&\approx (0.068268 + 0.702564 \imath, -0.916997 - 0.935049 \imath),\\
P_3 &= \big(e_6 +1, y_3(e_6 +1)\big) \\
&\approx (1.499118 - 1.575269 \imath, 5.701625 - 3.318255  \imath).
\end{split}
\end{align}
The points $P_1$ and $P_3$ are located on Sheet\;\textsf{b}, and $P_2$ on Sheet\;\textsf{a}.
Let $P_1$ and $P_3$ on Sheet\;\textsf{b} be reached through the cut  $(B_4,B_5)$,
and $P_2$ on Sheet\;\textsf{a} through the cut  $(B_6,B_7)$, see fig.\,\ref{f:C34Ex3}(b). 
Actually,
\begin{align*}
&\mathcal{A}(P_1) = \mathcal{A}_{0-,1}^{[3]} + \mathcal{A}_{1,3}^{[3]} + \mathcal{A}_{3,5}^{[1]}
+  \mathcal{A}_{5,3}^{[2]} + \int_{e_3}^{P_1} \rmd u^{[2]},\\
&\mathcal{A}(P_2) = \mathcal{A}_{0-,1}^{[3]} + \mathcal{A}_{1,3}^{[3]} + \mathcal{A}_{3,5}^{[1]} + \mathcal{A}_{5,7}^{[2]}
+ \mathcal{A}_{7,5}^{[3]} + \int_{e_5}^{P_2} \rmd u^{[3]},\\
&\mathcal{A}(P_3) = \mathcal{A}_{0-,1}^{[3]} + \mathcal{A}_{1,2}^{[3]}  + \mathcal{A}_{2,4}^{[1\text{-}2]} 
+ \mathcal{A}_{4,6}^{[2\text{-}3]} + \int_{e_6}^{P_3} \rmd u^{[3]}.
\end{align*}
The Abel image of $D$ is
\begin{equation}\label{AMap3b}
 u(D) = \sum_{i=1}^3 \mathcal{A}(P_i) \approx \begin{pmatrix}
 -0.421105 - 2.303962 \imath \\ -1.319230 - 1.997581 \imath \\ 
 -0.176345 + 0.125109 \imath
\end{pmatrix},
\end{equation}
and $\wp$-functions acquire the following values
\begin{gather}\label{WPvalsC34Ex3b}
\begin{aligned}
&\wp_{1,1}\big(u(D)\big) \approx -0.497171- 1.306218 \imath,& \\
&\wp_{1,2}\big(u(D)\big) \approx 0.485105 + 2.618402 \imath,& \\
&\wp_{1,5}\big(u(D)\big) \approx 2.083016 - 2.086324 \imath,& \\
&\wp_{2,2}\big(u(D)\big) \approx -2.356414 + 10.869587 \imath,& \\
&\wp_{2,5}\big(u(D)\big) \approx 15.590831 + 2.902800 \imath, & \\
&\wp_{1,1,1}\big(u(D)\big) \approx 1.678988 + 8.731706 \imath, \\
 &\wp_{1,1,2}\big(u(D)\big) \approx -4.377331 - 0.119524 \imath, \\
 &\wp_{1,1,5}\big(u(D)\big) \approx 2.198126 + 13.211222 \imath.
\end{aligned}
\end{gather}
With these values the two functions $\mathcal{R}_{6}$, $\mathcal{R}_{7}$ in \eqref{C34JIP}
are constructed.

On the other hand, the two polynomial functions $\mathcal{R}_{6}$, $\mathcal{R}_{7}$
are obtained by \eqref{R67Coord}
from coordinates of the given divisor $D$ defined by \eqref{Ex3bPs}, namely
\begin{subequations}
\begin{align*}
\mathcal{R}_{6}\big(x,y;D\big) &= x^2 +
(0.497171 + 1.306218\imath) y \\
&\quad - (0.485105 + 2.618402 \imath) x - 2.083016 + 2.086324  \imath, \notag \\ 
\mathcal{R}_{7}\big(x,y;D\big) &= 2 x y + (1.193883 + 6.113304 \imath) y \\
&\quad - (2.020917 + 10.989112 \imath) x -13.392705 + 10.308422 \imath. \notag
\end{align*}
\end{subequations}
 Coefficients of these two functions coincide with those computed from
 $\wp$-functions \eqref{WPvalsC34Ex3b}  within an accuracy of $10^{-9}$.

\section{Acknowledgments}
The present paper was inspired by S.\,Matsutani, who expressed an active interest into
 analytical computation of  $\wp$-functions in Mathematica.


\end{document}